%% file: revised_manuscript.tex
\documentclass[aip,cha,reprint,amsmath,amssymb]{revtex4-2}
\usepackage{graphicx}
\usepackage{xcolor}
\colorlet{blue}{black}
\usepackage{dcolumn}
\usepackage{bm}
\usepackage[utf8]{inputenc}
\usepackage[T1]{fontenc}

\usepackage{etoolbox}
\usepackage{mathrsfs}
\usepackage{amsthm}
\usepackage{amsmath}
\usepackage{mathtools}
\usepackage{acro}
\acsetup{first-style=long-short}
\input{acronyms}
\usepackage{placeins}
\usepackage{algpseudocode}
\usepackage{newfloat}
\DeclareFloatingEnvironment[
    fileext=loa,
    listname={List of Algorithms},
    name=Algorithm,
    placement=tbhp,
]{algorithm}

\usepackage{tikz}
\usepackage{pgfplots}
\pgfplotsset{compat=1.18} 
\usetikzlibrary{arrows.meta}
\usetikzlibrary{decorations.pathmorphing,patterns}
\usepgfplotslibrary{groupplots}
\usepackage{newtxtext,newtxmath}
\counterwithout{figure}{section}

\usepackage[hidelinks]{hyperref}

\newtheorem{theorem}{Theorem}[section]
\newtheorem{lemma}[theorem]{Lemma}
\newtheorem{proposition}[theorem]{Proposition}

\theoremstyle{definition}
\newtheorem{definition}[theorem]{Definition}
\newtheorem{assumption}[theorem]{Assumption}
\theoremstyle{remark}

\begin{document}

\title{Geometric Mode-Selection Scores for Delay-Coordinates Dynamic Mode Decomposition}

\author{Yoav Harris}
\email{yoav.h@campus.technion.ac.il}
\affiliation{Viterbi Faculty of Electrical and Computer Engineering, Technion -- Israel Institute of Technology, Haifa 3200003, Israel}

\author{Hadas Benisty}
\affiliation{Rappaport Faculty of Medicine, Technion -- Israel Institute of Technology, Haifa 3200003, Israel}

\author{Ronen Talmon}
\affiliation{Viterbi Faculty of Electrical and Computer Engineering, Technion -- Israel Institute of Technology, Haifa 3200003, Israel}

\begin{abstract}
    Delay-coordinates dynamic mode decomposition (DC--DMD) is widely used to extract coherent spatiotemporal modes from high-dimensional time series. A central challenge is distinguishing dynamically meaningful modes from spurious modes induced by noise and order overestimation. We frame this as a \emph{mode-selection scoring} problem: each mode receives a score that ranks it as true or spurious; any hard selection (threshold or clustering) is a downstream choice.
    We show that mode selection in DC--DMD is fundamentally a problem of subspace geometry. True modes are characterized by concentration within a low-dimensional signal subspace, whereas spurious modes \textcolor{blue}{tend to retain} non-negligible components outside any moderate overestimate of that subspace. This geometric distinction defines true and spurious modes and motivates fully data-driven robust scoring criteria.
    The framework yields two complementary scores. The first uses a data-driven proxy of the signal subspace to compute a residual. The second comes from a new operator-theoretic analysis of delay embedding: using a block-companion formulation, we show that all modes exhibit a Kronecker--Vandermonde (KV) structure, with true modes distinguished by the degree of conformity to it. This deviation is \textcolor{blue}{governed by} the geometric residual.
    Our analysis further explains the empirical behavior of magnitude- and norm-based heuristics and clarifies when and why they fail under delay coordinates.
    Numerical experiments, evaluated by precision--recall AUC of true-vs-spurious ranking, show that the proposed scores \textcolor{blue}{outperform the tested baselines across most of the small-spatial-dimension regime}.
\end{abstract}

\maketitle

\begin{quotation}
Delay-coordinates dynamic mode decomposition (DC--DMD) is a widely-used tool for extracting coherent patterns from noisy, high-dimensional time series, but in practice it often returns more candidate components than truly reflect the underlying dynamics. The central challenge is to determine which recovered modes are physically meaningful and which are artifacts of noise and order overestimation. We address this as a mode-selection scoring problem: each computed mode receives a per-mode score that ranks it as true or spurious, with the final hard selection (threshold or clustering) treated as a downstream choice. We show that this scoring problem can be understood through subspace geometry, leading to a data-driven residual score that distinguishes true from spurious modes. We further show that delay coordinates impose a characteristic internal organization on the recovered modes, yielding a complementary structure-based score. This geometric framework also clarifies why commonly used magnitude- and norm-based heuristics can become unreliable. Together, these results provide a more interpretable and robust set of mode-selection scores for DC--DMD.
\end{quotation}
    
\input{sections/introduction}
\input{sections/problem_setting}
\input{sections/related_work}
\input{sections/mode_geometry}

\input{sections/kv_structure_and_nested_DMD}
\input{sections/connection_to_existing_methods}

\input{sections/numerical}
\input{sections/discussion}


\section*{Author Declarations}
\subsection*{Conflict of Interest}
The authors have no conflicts to disclose.

\subsection*{Author Contributions}
\textbf{Yoav Harris:} Conceptualization (equal); Formal analysis (lead); Investigation (lead); Methodology (lead); Software (lead); Validation (lead); Visualization (lead); Writing -- original draft (lead); Writing -- review \& editing (equal).
\textbf{Hadas Benisty:} Conceptualization (equal); Funding acquisition (equal); Supervision (equal); Writing -- review \& editing (equal).
\textbf{Ronen Talmon:} Conceptualization (equal); Funding acquisition (equal); Supervision (equal); Writing -- review \& editing (equal).

\input{sections/data_availability}

\section*{Appendices}

\appendix
\input{appendices/wide_regime}
\input{appendices/choose_m_singular_vectors}
\input{appendices/bound_on_subspace_deviation}
\input{appendices/concise_spatiotemporal_coupling}
\input{appendices/block_companion}
\input{appendices/fixed-eigenvalue-kv-fit}
\input{appendices/complexity}
\input{appendices/resdmd_in_our_setting}

\bibliographystyle{aipnum4-2}
\bibliography{bibliography/citations}

\end{document}

%% file: acronyms.tex
\DeclareAcronym{DMD}{
  short = DMD,
  long  = Dynamic Mode Decomposition
}
\DeclareAcronym{DCDMD}{
  short = DC--DMD,
  long  = Delay-Coordinates Dynamic Mode Decomposition
}
\DeclareAcronym{EDMD}{
  short = EDMD,
  long  = Extended Dynamic Mode Decomposition
}
\DeclareAcronym{SVD}{
  short = SVD,
  long  = Singular Value Decomposition
}
\DeclareAcronym{SNR}{
  short = SNR,
  long  = Signal-to-Noise Ratio
}
\DeclareAcronym{KV}{
  short = KV,
  long  = Kronecker--Vandermonde
}
\DeclareAcronym{ESR}{
  short = ESR,
  long  = Estimated-Subspace Residual
}
\DeclareAcronym{SSR}{
  short = SSR,
  long  = Signal-Subspace Residual
}
\DeclareAcronym{BIC}{
  short = BIC,
  long  = Bayesian Information Criterion
}
\DeclareAcronym{MDL}{
  short = MDL,
  long  = Minimum Description Length
}
\DeclareAcronym{STC}{
  short = STC,
  long  = Spatiotemporal Consistency
}
\DeclareAcronym{FEKVF}{
  short = FEKVF,
  long  = Fixed-Eigenvalue KV Fit
}

%% file: sections/introduction.tex
\section{Introduction}

High-dimensional time series arise in a wide range of physical and biological systems, including fluid flows \cite{Schmid2010,BerkoozHolmesLumley1993}, neural recordings \cite{CunninghamYu2014,Churchland2012}, climate dynamics \cite{Hannachi2007,ThompsonWallace2000}, and mechanical systems \cite{Kerschen2005,SaitoKuno2020}. Such time series are inevitably contaminated by noise and unmodeled effects. Although the data may involve many observables, the underlying dynamics are often governed by a relatively small number of coherent components. A central challenge in data-driven dynamical analysis is therefore to extract these latent components and characterize their temporal behavior directly from observed trajectories \cite{BennerGugercinWillcox2015}.

\Ac{DMD} is a widely used framework for this purpose. Originally introduced in the context of fluid dynamics, \ac{DMD} approximates the observed temporal evolution by a best-fit linear operator learned from sampled data and interprets its eigenpairs as spatiotemporal building blocks of the underlying dynamics \cite{Schmid2010,Tu2014,Kutz2016}. The associated eigenvalues encode oscillatory and growth or decay behavior, while the corresponding modes, which are constructed from the eigenvectors, encode spatial structure. Variants of \ac{DMD} and related operator-theoretic methods have since been applied across many scientific and engineering domains \cite{Rowley2009,Budisic2012,Colbrook2023}.

From an operator-theoretic perspective, \ac{DMD} can be interpreted as a finite-dimensional, data-driven approximation of the Koopman operator, which provides a linear representation of nonlinear dynamics through its action on observables \cite{Mezic2005,Mezic2013,WilliamsKevrekidisRowley2015,WilliamsHematiDawsonKevrekidisRowley2016}. This connection becomes particularly natural under lifting strategies that enrich the observable space and expose additional spectral content.

A common enhancement of \ac{DMD} is the use of \emph{delay coordinates}, where snapshot matrices are formed by stacking time-delayed samples into each column. This approach provides a systematic means of expanding the space of observables, improving spectral resolution, and strengthening the link between data-driven linear models and Koopman operator theory \cite{DasGiannakis2019,PanDuraisamyDelay2020,DasGiannakis2020}. Delay coordinates originate in nonlinear dynamics and attractor reconstruction \cite{Takens1981}, and are closely related to classical subspace identification, Hankel-based techniques, and matrix pencil methods \cite{JuangPappa1985,HuaSarkar1990}. In the context of \ac{DMD}, delay-coordinates formulations, introduced as Hankel-\ac{DMD} by Arbabi and Mezi\'c~\cite{ArbabiMezic2017} (see also the subsequent mathematical analysis in~\cite{Mezic2022}) and also referred to as \ac{DCDMD}, have been shown to enhance mode separation and recover more accurate spectral information from short or noisy time series~\cite{Tu2014,LeClaincheVega2017,PanArnoldMedabalimiDuraisamy2021,Schmid2022}.
One concrete illustration is the standing-wave case, where a single real oscillation is generated by a pair of complex-conjugate components. In this setting, standard DMD (without delay coordinates) often splits the oscillation into two separate modes whose individual spatial patterns are not physically meaningful, whereas \ac{DCDMD} recovers the correct oscillatory structure \cite{Bronstein2022}.
More generally, standard \ac{DMD} can fail whenever distinct oscillatory components have linearly dependent spatial patterns. This includes low-dimensional sensing regimes, where the number of active oscillatory components exceeds the number of spatial measurements. Delay coordinates can separate such components by replacing spatial snapshots with spatiotemporal patterns. This regime arises in a wide range of practical applications, including radar and array signal processing, communications, neuroscience, biomedical signal analysis, and localization. Some of these applications fall within standard \ac{DMD} usage, while others are conventionally addressed via closely related Hankel- and delay-based spectral estimators such as matrix pencil, ESPRIT, and Prony-type methods, to which the mode-selection scores developed here apply directly.

In the presence of noise, the snapshot matrices formed in \ac{DCDMD} are effectively full rank, and hence, practical implementations typically work with truncation rank that only approximates the unknown true model order. When the model order is overestimated, the computed decomposition necessarily includes additional components that have no counterpart in the underlying dynamics. 

The problem of identifying which \ac{DMD} components reflect true dynamics, and which are spurious artifacts of noise and order overestimation, has therefore received considerable attention. Existing approaches include spectral heuristics \cite{Tu2014,Kutz2016}, likelihood- and information-theoretic criteria \cite{Rissanen1978,Schwarz1978,WaxKailath1985}, robustness-based methods \cite{SashidharKutz2022,DawsonHematiWilliamsRowley2016}, structure-aware post hoc tests that exploit delay-induced spatiotemporal structure \cite{Bronstein2022}, the general-purpose eigenpair residual diagnostic ResDMD~\cite{ColbrookTownsend2024}, and Koopman Reduced-Order Modeling (KROM)~\cite{MohrFonoberovaMezic2025}.
Importantly, in delay-coordinate settings, detecting the model order becomes more challenging. As we show in the sequel, delay coordinates induce spatiotemporal organization in the lifted space, where the spurious additional components appear coherent across delays, and therefore, become more difficult to separate from true modes than in standard (non-delayed) \ac{DMD}; as a result, common heuristics such as singular-value gap rules, eigenvalue screening, or mode-norm rankings become unreliable.

Despite the extensive body of work, order detection in \ac{DCDMD} lacks a principled, perturbation-robust definition of what distinguishes a true dynamical component from a spurious one. Existing approaches largely treat spurious modes as residual artifacts of noise or overfitting, without explicitly characterizing their structural relationship to the signal-dominated subspaces induced by the data. As a result, selection rules are typically formulated in terms of spectral magnitude, likelihood penalties, or empirical stability, rather than in terms of intrinsic geometric properties of the recovered modes.

In this paper, we define mode selection in \ac{DCDMD} through subspace geometry. True modes are those that concentrate their energy within a low-dimensional signal subspace, while spurious modes necessarily retain non-negligible components outside any moderate overestimate of this subspace. This geometric viewpoint yields a perturbation-robust definition of true and spurious modes and leads to practical per-mode scores. In the low-dimensional regimes emphasized below, the residual geometry is most useful when paired with the delay-induced structure of the modes. We therefore treat mode selection as a scoring problem: each computed mode receives a per-mode score, and the hard true-vs-spurious decision (threshold or clustering) is a downstream choice independent of score design.

Building on this geometric framework, we develop two complementary mode-selection scores. The first translates the signal-subspace concentration principle into a fully data-driven residual score by constructing a proxy for the signal subspace and quantifying each mode's deviation from it.

The second score builds on a structural analysis of delay-coordinate modes. Prior work has identified a \ac{KV} structure in true \ac{DCDMD} modes under explicit exponential signal models~\cite{Bronstein2022}. Here, we show that this structure does not arise from the signal model itself, but is instead induced by the delay-coordinates and the block-companion structure of the least-squares propagator. In particular, we prove that all DMD modes, true and spurious, inherit an intrinsic \ac{KV} organization, and that the distinguishing feature of true modes is not the presence of \ac{KV} structure, but the degree to which they conform to it. This observation shifts the interpretation of \ac{KV} structure from a binary signature of true components to a continuous structural property governed by subspace geometry. The deviation from ideal \ac{KV} structure is shown to be directly controlled by the geometric residual introduced earlier, yielding a principled structural score for mode selection.

Finally, our analysis provides a principled explanation for the empirical behavior of commonly used mode-energy and mode-norm heuristics. By decomposing the exact-mode norm into eigenvalue-dependent and residual-dependent contributions, we show that delay coordinates fundamentally alter the spectral statistics of spurious modes. In particular, as the embedding length grows, spurious eigenvalues concentrate near the unit circle, rendering magnitude- and norm-based rankings unreliable or even misleading in weakly damped regimes. This clarifies when such heuristics can be expected to fail, and why residual- and structure-based criteria provide more robust alternatives in delay-coordinate settings.

We validate the proposed scores through numerical experiments evaluated by precision--recall AUC of the true-vs-spurious ranking task. The experiments are concentrated in the regime where the spatial dimension $D$ is small --- precisely the regime in which the lag-axis \ac{KV} structure carries most of the identifying signal --- and span sweeps over noise level, spectral separation, damping, and amplitude heterogeneity. {Across most of this regime, the proposed geometric and structural scores outperform two leading baselines: the spatiotemporal coupling score of Bronstein \emph{et al.}~\cite{Bronstein2022} and the residual score of ResDMD~\cite{ColbrookTownsend2024}}. We additionally characterize, empirically, the behavior of spurious eigenvalues under delay coordinates.

The remainder of this paper is organized as follows. Section~\ref{sec:problem} formulates the setting, including the signal and delay-coordinate model, the \ac{DMD} construction, the signal subspace, and the geometric mode-selection viewpoint. Section~\ref{sec:related_work} reviews related mode-selection methods. Sections~\ref{sec:mode_geometry} and~\ref{sec:kv_structure} develop the geometric and structural scores, respectively, and Section~\ref{sec:connection_to_existing_methods} interprets common heuristics through this framework. Section~\ref{sec:numerical_experiments} presents numerical results in the small-$D$ regime, and Section~\ref{sec:discussion} concludes with a summary and implications.

%% file: sections/problem_setting.tex
\section{Problem formulation}\label{sec:problem}
In this section, we formalize the \ac{DCDMD} setting in a form suitable for geometric analysis. We introduce the signal model, the associated signal subspace, and the rank-constrained DMD construction, emphasizing the subspace relationships that underlie the distinction between true and spurious dynamical modes and our subsequent theoretical analysis and proposed algorithms.

\subsection{System and signal model}\label{subsec:sys_and_data}
We observe a signal in discrete time given by:
\begin{equation}
    \bm{x}_k = \bm{s}_k + \bm{n}_k \in \mathbb{C}^D,
    \label{eq:measurement_model}
\end{equation}
where $\bm{n}_k$ represents noise, $\bm{s}_k$ follows a deterministic time-evolution rule given by:
\begin{equation}
    \bm{s}_{k+1} = f( \bm{s}_k ) , \qquad \bm{s}_k \in \mathbb{C}^D,
\label{eq:deterministic_evolution}    
\end{equation}
and $k$ is the discrete time index.
Given the signal model \eqref{eq:measurement_model} and the latent dynamical evolution \eqref{eq:deterministic_evolution}, our objective is to recover the latent dynamics from a finite snapshot of $N$ consecutive noisy observation samples $\{\bm{x}_k\}_{k=0}^{N-1}$.

We pursue two complementary outcomes: (i) recover the map $f$ in \eqref{eq:deterministic_evolution} when it is identifiable from the data; or (ii) when full identification is not possible, obtain a faithful characterization of the dynamics, such as its dominant frequencies, growth or decay rates, and spatial structures. Such representations support tasks including denoising, by separating $\bm{s}_k$ from $\bm{n}_k$, and prediction, by forecasting $\bm{s}_{k+h}$ beyond the observation snapshot.

\subsection[Delay-coordinates DMD]{Delay-coordinates \ac{DMD}}
\label{subsec:delay_dmd}\label{subsec:delay_coords}\label{subsec:dmd_construction}

\ac{DMD} and related methods commonly use delay-coordinates where each delay vector stacks $L$ consecutive samples:
\begin{equation}
\widetilde{\bm{x}}_k
    \;\coloneqq\;
    \begin{bmatrix}
        \bm{x}_k \\ \bm{x}_{k+1} \\ \vdots \\ \bm{x}_{k+L-1}
    \end{bmatrix}
    \in\mathbb{C}^{DL},
\qquad k = 0,\dots,N-L .
\label{eq:delay_vector}
\end{equation}

We define $\widetilde{\bm{s}}_k$ and $\widetilde{\bm{n}}_k$ similarly. Collecting all the delay vectors yields
\begin{equation}
\bm{X}^{(L)} \coloneqq
\begin{bmatrix}
\widetilde{\bm{x}}_0 &  \cdots & \widetilde{\bm{x}}_{N-L}
\end{bmatrix} = \bm{S}^{(L)} + \bm{E}^{(L)}
\in \mathbb{C}^{DL \times (N-L+1)} .
\label{eq:traj_mat_L_def}
\end{equation}
In the remainder of this paper, unless explicitly stated otherwise, we fix $L\ge 1$ and, for brevity, we omit the superscript and write $\bm{X}, \bm{S}, \bm{E}$ for $\bm{X}^{(L)}, \bm{S}^{(L)}, \bm{E}^{(L)}$.

Classical delay–embedding theory shows that, under generic conditions,
a delay map embeds the underlying attractor \cite{Takens1981}, providing
a standard basis for using delay-coordinates to enrich observed dynamics.
In practice, delay-coordinates are widely used to improve modal separation
and spectral resolution in data-driven analysis of dynamical systems
\cite{JuangPappa1985,ArbabiMezic2017,Tu2014,LeClaincheVega2017}.

\Ac{DMD} approximates the temporal evolution map $f$ from \eqref{eq:deterministic_evolution} by a linear propagator:
\begin{equation}
    \bm{s}_{k+1} = \bm{A}\,\bm{s}_k.
\label{eq:linear_signal_evolution}
\end{equation}
Any trajectory of~\eqref{eq:linear_signal_evolution} admits a finite modal representation of the form:
\begin{equation}
\bm{s}_k \;=\; \sum_{j=1}^{m} b_j\,\bm{\phi}_j\,\lambda_j^{\,k}
\;=\; \bm{\Phi}\,\bm{\Lambda}^{k}\bm{b},
\label{eq:explicit_signal_structure}
\end{equation}
where $(\lambda_j,\bm{\phi}_j)$ are eigenpairs of $\bm{A}$, $b_j$ are the amplitudes determined by the initial condition $\bm{s}_0$, and $m$ is the model order. Throughout, we parameterize the complex eigenvalues as
\begin{equation}
\lambda_j \;=\; \rho_j\,\mathrm{e}^{\mathrm{i}\theta_j},
\qquad \rho_j \in [0,1],
\label{eq:eigenvalue_parameterization}
\end{equation}
where $\rho_j$ are magnitudes, $\theta_j$ are phases, and $\rho_j \le 1$ ensures bounded temporal behavior.
The triplets $(\rho_j,\theta_j,\bm{\phi}_j)$ together with $b_j$ are the signal parameters; throughout, we refer to them as the true underlying signal components.
This modal form holds exactly when $\bm{A}$ is diagonalizable. In the non-diagonalizable case, polynomial-exponential dynamics are typically represented in finite-data DMD by clusters of nearby simple eigenvalues; see \cite{Tu2014,Kutz2016,ArbabiMezic2017}.

Since the delay-coordinates embedding in \eqref{eq:delay_vector} is linear, \eqref{eq:explicit_signal_structure} lifts to
\begin{equation}
\widetilde{\bm{s}}_k \;=\; \sum_{j=1}^{m} b_j\,\widetilde{\bm{\phi}}_j\,\lambda_j^{\,k},
\qquad
\widetilde{\bm{\phi}}_j \coloneqq \bm{v}_L(\lambda_j) \otimes \bm{\phi}_j \;\in\; \mathbb{C}^{DL},
\label{eq:lifted_signal_structure}
\end{equation}
with $\bm{v}_L(\lambda) = [1,\lambda,\ldots,\lambda^{L-1}]^\top$ the Vandermonde vector (Kronecker form of the construction in Bronstein \emph{et al.}~\cite{Bronstein2022}; see App.~\ref{app:data_side_kv}). We refer to $\widetilde{\bm{\phi}}_j$ as the \emph{lifted underlying modes}. The linear evolution of $\bm{s}_k$ carries over to the embedded vectors, motivating the approximation
\begin{equation}
 \widetilde{\bm{x}}_{k+1}\approx \widetilde{\bm{A}}\,\widetilde{\bm{x}}_{k},
\label{eq:linear_dynamic_approximation}
\end{equation}
where $\widetilde{\bm{A}} \in \mathbb{C}^{DL \times DL}$ is the lifted-space operator whose structure is examined in Sec.~\ref{sec:kv_structure}.
To estimate a one-step propagator, we pair each delay vector with its immediate successor and form two snapshot matrices
\begin{equation}
\begin{split}
\bm{X}_0 &=
\begin{bmatrix}
\widetilde{\bm{x}}_0 & \widetilde{\bm{x}}_1 & \cdots & \widetilde{\bm{x}}_{N-L-1}
\end{bmatrix},
\\
\bm{X}_1 &=
\begin{bmatrix}
\widetilde{\bm{x}}_1 & \widetilde{\bm{x}}_2 & \cdots & \widetilde{\bm{x}}_{N-L}
\end{bmatrix},
\end{split}
\label{eq:x0_x1}
\end{equation}
both of size $DL \times (N-L)$. An analogous construction defines $\bm{S}_0, \bm{S}_1, \bm{E}_0, \bm{E}_1$, following \eqref{eq:traj_mat_L_def}. 

The paired snapshot matrices satisfy $\operatorname{rank}(\bm{S}_0)=\operatorname{rank}(\bm{S}_1)=m$, since all their columns lie in $\mathrm{span}\{\widetilde{\bm{\phi}}_1,\dots,\widetilde{\bm{\phi}}_m\}$.
We therefore use “order” and “signal rank” interchangeably.
As is typical in a wide range of systems, the number of underlying dynamical components is substantially smaller than the number of observables, so we adopt the standard assumption that
\[
m \ll \min(N\!-\!L,\,DL).
\]
The maximal algebraic rank of the signal matrices is $\min(N\!-\!L,\,DL)$. We focus on the common regime $N\!-\!L<DL$ (the complementary case $DL<N\!-\!L$ is less common and discussed in Appendix~\ref{app:wide_matrix}).

Although the linear evolution is postulated for the latent signal, $\bm{S}_1=\bm{A}\bm{S}_0$, in practice we estimate $\bm{A}$ from the noisy observations by solving $\bm{X}_1 \approx \bm{A}\bm{X}_0$.

Noise affects the estimation process in two coupled ways: additive noise both perturbs the signal subspace and introduces directions that are not aligned with any low-dimensional signal geometry, thereby complicating the separation between signal-consistent and noise-consistent components.
In particular, by our signal model \eqref{eq:measurement_model}, we have $\bm{X}=\bm{S}+\bm{E}$, where the additive noise $\bm{E}$ obscures the low-rank structure of the signal $\bm{S}$, rendering $\bm{X}_0$ and $\bm{X}_1$ effectively full rank. Consequently, the true order $m$ cannot be inferred from the ranks of the snapshot matrices alone. 
Because the true order $m$ is unknown, we compute DMD at a working rank $M \ge m$, typically chosen strictly larger than $m$. This overestimation introduces additional components that have no signal counterpart and must be separated from the true modes.

Whether $M$ coincides with $m$, and how deviations between them manifest, lies at the core of this work and is formalized as a \emph{mode-selection} problem in Section~\ref{subsec:mode_selection_problem}: among the $M$ computed modes, identify which approximate the $m$ underlying true modes and which are spurious. An estimate of $m$ (order detection) then follows as a downstream by-product.

Given the paired snapshot matrices $(\bm{X}_0,\bm{X}_1)$, a natural unconstrained least–squares solution, via the Moore–Penrose pseudoinverse \cite{Penrose1956, BarataHussein2012}, is
\begin{equation}
\label{eq:moore_penrose_ls}
    \widehat{\bm{A}}_{\mathrm{MP}}=\bm{X}_1\bm{X}_0^{\dagger}
\end{equation}
However, in the presence of noise, this produces a (numerically) full-rank operator that absorbs high-variance directions and thereby \emph{models noise as dynamics}. This undermines the identifiability of the true components.
The low-rank assumption motivates solving a rank-constrained least squares:
\begin{equation}
\widehat{\bm{A}} = 
\arg\min_{\substack{\bm{A}\in\mathbb{C}^{DL\times DL} \\ \operatorname{rank}(\bm{A})=M}}
\;\|\bm{X}_1-\bm{A}\bm{X}_0\|_F^2
\label{eq:rank_constrained_least_squares_objective}
\end{equation}

Broadly, \ac{DMD} extracts the eigenpairs of the estimated propagator $\widehat{\bm{A}}$ from the rank-constrained least squares \eqref{eq:rank_constrained_least_squares_objective}.
Constraining the rank in \eqref{eq:rank_constrained_least_squares_objective} enforces the low-rank signal model, restricting the fit to an $M$-dimensional subspace.
Geometrically, this restriction confines the fitted dynamics to an M-dimensional truncation subspace, which ideally contains the signal subspace $\mathcal S$, but may also include additional noise-aligned directions.
This yields an explicit spatiotemporal representation, where the eigenvalues capture temporal evolution while the eigenvectors capture spatial structure. In this form, the signal can be modeled and reconstructed directly, rather than merely advanced one step by~$\widehat{\bm{A}}$.

The practical solution to the rank-constrained least-squares problem \eqref{eq:rank_constrained_least_squares_objective} uses the truncated \ac{SVD} of $\bm{X}_0$ (with truncation rank $M$):
\begin{equation}
\begin{gathered}
\bm{X}_0 \;\approx\; \bm{U}_M\,\bm{\Sigma}_M\,\bm{V}_M^{\mathrm{H}}, \\[0.5em]
\bm{U}_M \in \mathbb{C}^{DL\times M},\;\;
\bm{\Sigma}_M \in \mathbb{R}^{M\times M},\;\;
\bm{V}_M \in \mathbb{C}^{(N-L)\times M}.
\end{gathered}
\label{eq:svd_x0}
\end{equation}
Recall that the columns of SVD matrices are ordered by the (real and positive) singular values.
These matrices are used to construct the truncated propagator:
\begin{equation}
\bm{A}_M \;\coloneqq\; \bm{U}_M^{\mathrm{H}}\,\bm{X}_1\,\bm{V}_M\,\bm{\Sigma}_M^{-1} 
\;\in\; \mathbb{C}^{M\times M}.
\label{eq:reduced_propagator}
\end{equation}
Since exact spectral defectiveness is structurally unstable under finite data and perturbations \cite{Tu2014,Kutz2016,ArbabiMezic2017}, $\bm{A}_M$ generically has a simple spectrum and admits an eigenvalue decomposition:
\begin{equation*}
\bm{A}_M = \bm{W}\,\widehat{\bm{\Lambda}}\,\bm{W}^{-1}, \end{equation*}
where
\begin{equation*}
\widehat{\bm{\Lambda}} 
= \operatorname{diag}(\widehat{\lambda}_1,\ldots,\widehat{\lambda}_M)
\in \mathbb{C}^{M\times M}, 
\end{equation*}
is a diagonal matrix consisting of the eigenvalues $\hat{\lambda}_j$, and
\begin{equation*}
\bm{W} 
= [\,\bm{w}_1\;\cdots\;\bm{w}_M\,]
\in \mathbb{C}^{M\times M}
\end{equation*}
is a matrix consisting of the eigenvectors $\bm{w}_j$.

In the DMD literature, there exist two standard methods for mapping a reduced eigenvector $\bm{w}_j \in \mathbb{C}^M$ to a \emph{mode} in the signal space $\mathbb{C}^{DL}$, a mapping commonly referred to as \emph{lifting}.

\emph{Projected \ac{DMD}} defines the projected DMD modes using the truncated SVD matrix $\bm{U}_M$ as follows:
\begin{equation}
\label{eq:proj_mode_def}
\widehat{\bm{\phi}}_j^{\mathrm{p}} \coloneqq  \bm{U}_M \bm{w}_j.
\end{equation}
By definition, $\widehat{\bm{\phi}}_j^{\mathrm{p}} \in \mathcal{U}_M$, where $\mathcal{U}_M$ denotes the column space of $\bm{U}_M$:
\begin{equation}
\label{eq:UM_colspace}
\mathcal{U}_M \coloneqq \operatorname{col}(\bm{U}_M).
\end{equation}

We note that since $\|\bm{w}_j\|_2 = 1$ and $\bm{U}_M$ has orthonormal columns, by their construction, projected modes are unit vectors:
\[\|\bm{\phi}_j^{\mathrm{p}}\|_2 = \|\bm{U}_M \bm{w}_j\|_2 = 1 .
\]

\emph{Exact \ac{DMD}} uses the (untruncated) snapshot matrix $\bm{X}_1$ and defines the exact DMD modes as follows:
\begin{equation}
\label{eq:exact_mode_def}
\widehat{\bm{\phi}}_j^{\mathrm{e}} \coloneqq  \bm{X}_1 \bm{V}_M \bm{\Sigma}_M^{-1}\bm{w}_j.
\end{equation}
Here,  $\widehat{\bm{\phi}}_j^{\mathrm{e}} \in \operatorname{col}(\bm{X}_1)$, implying that the exact DMD modes are not restricted to any subspace of rank $M$.

Several remarks are due at this point. First, we note that a mode without a superscript, $\widehat{\bm{\phi}}_j$, may denote either the projected \ac{DMD} mode $\widehat{\bm{\phi}}^{\,p}_j$ or the exact \ac{DMD} mode $\widehat{\bm{\phi}}^{\,e}_j$. Therefore, statements written with $\widehat{\bm{\phi}}_j$ hold for both types.

Second, following the definition of $\bm{A}_M$ in~\eqref{eq:reduced_propagator}, the projected and exact modes differ only by projection and scaling, and their explicit relationship is given by
\begin{equation}
\label{eq:proj_identity}
\bm{P}_{\mathcal{U}_M}\,\widehat{\bm{\phi}}^{\,\mathrm{e}}_j \;=\; \widehat{\lambda}_j\,\widehat{\bm{\phi}}^{\,\mathrm{p}}_j,
\end{equation}
where
\begin{equation}
    \bm{P}_{\mathcal{U}_M} \coloneqq \bm{U}_M \bm{U}_M^{\mathrm{H}}
\label{eq:P_U_M}
\end{equation}
is the orthogonal projector onto $\mathcal{U}_M$.

Third, in the ideal noiseless case using correct order ($M=m$), the projected \ac{DMD} and exact \ac{DMD} coincide: they yield the same eigenvalues and the same modes. 

Fourth, we emphasize the distinction between the \emph{underlying signal
eigenvalues} $\{\lambda_j\}_{j=1}^{m}$ and \emph{underlying signal modes}
$\{\bm{\phi}_j\}_{j=1}^{m}$ in~\eqref{eq:explicit_signal_structure}, and the
eigenvalues and modes computed by \ac{DMD}. The latter are estimates of the former.
We collect the modes (from either lifting scheme) into a matrix 
\[
\widehat{\bm{\Phi}} = [\,\widehat{\bm{\phi}}_{1}\;\cdots\;\widehat{\bm{\phi}}_{M}\,] \in \mathbb{C}^{DL \times M}.
\]
The amplitudes $\widehat{\bm{b}}$ are determined by enforcing the initial condition,
i.e., by solving $\widehat{\bm{\Phi}}\widehat{\bm{b}} \approx \widetilde{\bm{x}}_0$.
Then, the order-$M$ approximation of the delay-coordinates samples is 
\begin{equation}
\widehat{\bm{x}}_{k} \;=\;
\widehat{\bm{\Phi}}\,
\widehat{\bm{\Lambda}}^{\,k}\,
\widehat{\bm{b}},
\qquad k=0,\dots,N-L .
\label{eq:reconstruction}
\end{equation}

Note that \eqref{eq:reconstruction} extends naturally to prediction: for any integer $h=1,2,\dots$, we have
\[
\widehat{\bm{x}}_{k+h}
\;=\;
\widehat{\bm{\Phi}}\,
\widehat{\bm{\Lambda}}^{\,k+h}\,
\widehat{\bm{b}},
\qquad k \geq N-L,
\]
where $\widehat{\bm{x}}_{k+h}$ is a prediction in the delay-coordinates space. 
The corresponding prediction in the original signal space is obtained by extracting its first $D$ entries.

\subsection{Signal subspace and the mode-selection problem}
\label{subsec:mode_selection_problem}

Central to our geometric framework is the \emph{signal subspace}
\begin{equation}
\mathcal S \coloneqq \operatorname{span}\{\widetilde{\bm{\phi}}_1,\dots,\widetilde{\bm{\phi}}_m\} \;\subset\; \mathbb{C}^{DL},
\label{eq:signal_subspace_def}
\end{equation}
which contains the structure of the underlying low-dimensional dynamics, where $\widetilde{\bm{\phi}}_1,\dots,\widetilde{\bm{\phi}}_m$ are the lifted underlying modes~\eqref{eq:lifted_signal_structure}.
By construction, the noiseless snapshot matrices satisfy
\begin{equation}
\operatorname{col}(\bm{S}_0) = \operatorname{col}(\bm{S}_1) = \mathcal S.
\end{equation}

We assume that the noise level is sufficiently low such that the span of the principal components of the noisy snapshot matrix $\mathcal{U}_M$ defined in \eqref{eq:UM_colspace} approximately contains the signal subspace $\mathcal{S}$. 

We define $\mathcal{U}_m = \operatorname{col}(\bm{U}_m)$, where canonically $\bm{U}_m$ contains the leading $m$ left singular vectors, i.e., the $m$ leftmost columns of $\bm{U}_M$. This canonical choice is adopted only for conceptual convenience and has no practical bearing on the developments below. Alternative selections of $m$ vectors from $\bm{U}_M$ that optimally approximate the signal subspace are discussed in Appendix~\ref{app:choose_m_vecs_out_of_M}.
We make the following quantitative assumption regarding how well $\mathcal{U}_m$ is aligned with the signal subspace $\mathcal{S}$.
We quantify this alignment by the spectral norm of the difference between orthogonal projectors, since it is basis-invariant, and has a direct principal-angle interpretation.

\begin{assumption}[Signal-subspace proximity]\label{ass:subspace_distance_bound}
There exists a small $\eta\in[0,1)$ such that\begin{equation}\label{eq:eta_bound}
\bigl\|\bm P_{\mathcal S} - \bm P_{\mathcal U_m}\bigr\|_2 \le \eta.
\end{equation}
\end{assumption}
Since $\dim\mathcal S=\dim\mathcal U_m=m$, \eqref{eq:eta_bound} is equivalent to
\[
\theta_{\max}(\mathcal S,\mathcal U_m) \le \arcsin\eta,
\]
where $\theta_{\max}$ is the largest principal angle.

Assumption~\ref{ass:subspace_distance_bound} formalizes the requirement that, under moderate noise, the $m$-dimensional truncation subspace $\mathcal{U}_m$ remains sufficiently aligned with the true signal subspace $\mathcal S$, namely,
\begin{equation}\label{eq:subspace_hierarchy}
\mathcal S \;\approx\; \mathcal U_m \;\subseteq\; \mathcal U_M,
\end{equation}
where ``$\approx$'' refers to the projector-distance bound
\eqref{eq:eta_bound}, and the inclusion holds since $\mathcal U_m$ is spanned by
a subset of the columns of $\bm U_M$.
This proximity condition ensures that signal-consistent modes can be distinguished geometrically from spurious components.
Heuristically, the approximation of $\mathcal{S}$ by $\mathcal{U}_m$ improves (suggesting the existence of smaller $\eta$) with higher \ac{SNR},
better spectral separation, better spatial conditioning, and larger embedding
dimension or window length; see Appendix~\ref{app:signal_subspace_identifiability}.
In what follows, $\mathcal U_m$ serves as an ideal reference subspace, while $\mathcal U_M$ provides the practical, data-driven proxy used in mode selection.

In summary, the true order $m$ is unknown, and DMD computed at an overestimated rank $M \ge m$ produces both signal-consistent and spurious components. The central problem is therefore to identify, among the $M$ computed modes, those that are geometrically consistent with the signal subspace. We approach this as a scoring problem: each computed mode is assigned a score that ranks it as true or spurious; any hard decision can then be obtained from these scores by a downstream rule such as thresholding (as in ResDMD~\cite{ColbrookTownsend2024}) or clustering (as in Bronstein \emph{et al.}~\cite{Bronstein2022}), with the estimated order $\widehat m$ recoverable as the number of selected modes. Since the decision rule is orthogonal to score design, what matters is the separation power of the score itself, which we evaluate in Sec.~\ref{sec:numerical_experiments} by the \emph{precision--recall AUC} of the true-vs-spurious ranking (i.e., across all thresholds).\label{subsec:score_to_decision} This post hoc formulation provides the structural basis for the mode-selection methods developed in the sequel.

%% file: sections/related_work.tex
\section{Related work on mode selection and order detection}
\label{sec:related_work}

Existing approaches for mode selection and order detection in \ac{DMD} can be grouped into four broad families. For comprehensive overviews of \ac{DMD} variants, selection heuristics, and algorithmic foundations, see\cite{Schmid2022,Colbrook2023}.

The first group consists of \emph{spectral and likelihood-based mode selection} methods. Spectral heuristics are widely used for order detection in \ac{DMD} and for selecting the truncation rank, using gap- or threshold-based rules applied to the singular values of the snapshot matrix \cite{Tu2014,Kutz2016}. In the context of delay-coordinates, analogous rules are applied to the singular values of the corresponding block-Hankel snapshot matrices \cite{LeClaincheVega2017}. Eigenvalue screening rules based on magnitude or phase are also commonly employed in practice \cite{Kutz2016}. Closely related approaches formulate order detection as penalized likelihood model selection, trading off fit quality with a complexity penalty, for example via \ac{MDL} or \ac{BIC}~\cite{Rissanen1978,Schwarz1978,StoicaSelen2004,WaxKailath1985}. Within the \ac{DMD} framework, likelihood-based selection has been pursued through penalized optimization formulations~\cite{Jovanovic2014,ProctorBruntonKutz2016} or Bayesian
inference approaches \cite{PanDuraisamy2020}.

The second group comprises \emph{robustness-based mode selection} strategies, which retain components that persist under data resampling, held-out refits, or other perturbations of the snapshots. A prominent example is bagging optimized DMD (BOP-DMD), which fits ensembles across randomized data subsets and selects modes based on their empirical consistency \cite{SashidharKutz2022}. Other stability-based methods assess mode validity using diagnostics such as time-reversal consistency~\cite{DawsonHematiWilliamsRowley2016}.

The third group includes \emph{post hoc mode-selection}
methods, which compute \ac{DMD} at an overestimated truncation rank and then perform mode selection in a post-processing step.
In contrast to rank estimation or eigenvalue-centric selection methods, these approaches are mode-centric and exploit structural properties of the computed modes.
In settings involving delay-coordinates, representative examples test for \ac{KV} structure across delays in the resulting delay-coordinates modes
\cite{Bronstein2022}. Related post-processing strategies have also been
developed in closely connected subspace and matrix-pencil settings~\cite{SegmanAmarTalmon2025,SegmanAmarTalmon2025Plus}.

The fourth group consists of \emph{amplitude-based heuristics}, which rank modes by simple magnitude proxies such as Euclidean norms or reconstruction amplitudes~\cite{Schmid2010,Tu2014}. These should be understood as practical magnitude proxies, rather than as likelihood-based models. The behavior of these proxies under delay-coordinates is analyzed in Sec.~\ref{subsec:mode_norm_analysis}.

Two recent approaches do not fit neatly into the above families. ResDMD~\cite{ColbrookTownsend2024} is a post hoc method that ranks eigenpairs by an approximated residual with respect to the ideal Koopman operator, providing a general-purpose diagnostic that is not tailored to delay-coordinate structure; related theory on learning Koopman spectra from trajectory data was developed by Colbrook \emph{et al.}~\cite{ColbrookMezicStepanenko2024}. The Koopman Reduced-Order Modeling (KROM) framework of Mohr \emph{et al.}~\cite{MohrFonoberovaMezic2025} combines amplitude ordering with a likelihood-style residual test: it orders modes by amplitude and chooses the retained order by testing whether a projected reconstruction error is better explained as Gaussian noise than as unresolved structured dynamics, yielding an order-detection rule and prediction confidence bounds rather than a per-mode selection rule.

The above approaches address order detection and mode selection from complementary perspectives, but important gaps remain.
Spectral heuristics rely on global summaries that conflate eigenvalue scaling with subspace residual geometry. Information-criterion methods require refitting models across hypothesized orders, making them computationally expensive and sensitive to noise-model mismatch. Robustness-based approaches emphasize empirical stability but do not provide a structural interpretation of the recovered components or a principled definition of true versus spurious modes. Norm- and amplitude-based heuristics are simple to apply but lack a clear mechanistic justification under delay-coordinates.
Structure-aware post hoc methods explicitly formalize the expected organization of \emph{true} modes arising from delay-coordinates, but the corresponding geometric characterization of \emph{spurious} modes remains largely implicit. Spurious components are typically treated as unstructured residual artifacts rather than as objects with their own systematic geometry, and general-purpose residual diagnostics offer no delay-coordinate-specific account of them either. As a result, existing methods do not exploit the subspace and structural organization of spurious modes in their selection rules, leaving the geometric distinction between true and spurious components under delay-coordinates insufficiently understood.

%% file: sections/mode_geometry.tex
\section{Mode geometry, signal subspace residual, and true and spurious modes}
\label{sec:mode_geometry}

This section presents a geometric analysis of the \ac{DMD} modes and, based on it, derives a computable residual score for mode selection.

By definition \eqref{eq:signal_subspace_def}, the signal subspace $\mathcal{S}$ spans the $m$ lifted underlying modes $\{\widetilde{\bm{\phi}}_j\}_{j=1}^m$~\eqref{eq:lifted_signal_structure}.
However, in practice, $m$ is unknown and the underlying signal modes are accessed only through the $M>m$ computed \ac{DMD} modes $\{\widehat{\bm{\phi}}_j\}_{j=1}^M$. We now define a partition of the computed \ac{DMD} modes into \emph{true} and \emph{spurious} modes, where the true modes can be viewed as noisy estimators of the underlying signal modes.

\begin{definition}[Signal-subspace residual vector]
\label{def:ssr_vector}
For any vector $\bm{v}\in\mathbb{C}^{DL}$, the \ac{SSR}
vector is defined by
\begin{equation}
\bm r_{\mathcal S}(\bm v)
\;\coloneqq\;
(\bm{I}-\bm{P}_{\mathcal S})\,\bm{v},
\label{eq:ssr_vector_def}
\end{equation}
where $\bm{P}_{\mathcal S}$ is the orthogonal projector onto the signal subspace $\mathcal S$.
\end{definition}

We define the \emph{true} modes to be the $m$ modes with the smallest residual norms, and the remaining $M-m$ modes to be \emph{spurious}.

\begin{definition}[True and spurious modes]
\label{def:true_spurious_modes}
Given $M$ computed modes $\{\widehat{\bm{\phi}}_j\}_{j=1}^M$ (either exact or projected),
let $\mathcal J_{\mathrm{true}}\subset\{1,\dots,M\}$ be a set of $m$ indices attaining the
$m$ smallest values of
\[
\bigl\|\bm r_{\mathcal{S}}(\widehat{\bm{\phi}}_j)\bigr\|_2^2,\qquad j=1,\ldots,M,
\]
and~let $\mathcal J_{\mathrm{spur}} \coloneqq \{1,\dots,M\}\setminus\mathcal J_{\mathrm{true}}$.
For each $j\in\{1,\dots,M\}$, $\widehat{\bm{\phi}}_j$ is a \emph{true mode} if $j\in\mathcal J_{\mathrm{true}}$, and a \emph{spurious mode} otherwise.
\end{definition}

Definition \ref{def:true_spurious_modes} suggests that the modes can be classified as true or spurious by evaluating their residual energy outside the signal subspace $\mathcal{S}$.
However, since $\mathcal{S}$ cannot be directly observed, this definition serves only as an ideal reference for distinguishing between true and spurious modes.
To bridge this gap, we introduce a measurable statistic for the SSR.

By Assumption~\ref{ass:subspace_distance_bound}, the $m$-dimensional subspace $\mathcal U_m$ is close to
$\mathcal S$.
However, the true order $m$, and therefore $\mathcal U_m$, are unknown.
We postulate that $\mathcal U_M$, where $M>m$ is an arbitrary overestimate of the true order, can serve as a practical proxy for $\mathcal S$.
We therefore propose to measure residual energy with respect to $\mathcal U_M$ rather than
$\mathcal S$, using the orthogonal projector $\bm P_{\mathcal U_M}$.

\begin{definition}[Estimated-subspace residual vector]
\label{def:esr_vector}
For an exact \ac{DMD} mode $\widehat{\bm{\phi}}^{\,e}_j$, define the
\ac{ESR} vector by
\begin{equation}
\bm r_{\mathcal U_M}\bigl(\widehat{\bm{\phi}}^{\,e}_j\bigr)
\;\coloneqq\;
(\bm I-\bm P_{\mathcal U_M})\,\widehat{\bm{\phi}}^{\,e}_j,
\qquad j=1,\dots,M.
\label{eq:esr_vector_def}
\end{equation}
\end{definition}

We note that Definition~\ref{def:esr_vector} is only informative for exact
\ac{DMD} modes because the projected modes lie in $\mathcal U_M$ by construction, and so their residual is $\bm r_{\mathcal U_M}\bigl(\widehat{\bm{\phi}}^{\,p}_j\bigr) \equiv 0$.

Figure~\ref{fig:two_panel_residual} summarizes the geometry of the \ac{SSR} and the \ac{ESR} and highlights that \ac{ESR} captures the $\mathcal U_M^\perp$ component of \ac{SSR}.

\input{figures/residual_visualization}

We note that the term \emph{residual} is used in several distinct senses across the \ac{DMD} literature. Our usage is geometric: the residual of a \emph{mode} with respect to a designated \emph{subspace} (Definitions~\ref{def:ssr_vector},~\ref{def:esr_vector}). This differs from the \emph{eigenpair residual} of ResDMD~\cite{ColbrookTownsend2024}, which measures how well a candidate eigenpair satisfies the Koopman eigenrelation against a finite-dimensional approximation, and from the \emph{projection residual} of KROM~\cite{MohrFonoberovaMezic2025}, computed on partial-model reconstruction errors.

We define the ESR-energy score based on the squared norm of the \ac{ESR} vectors:
\begin{equation}\label{eq:esr_energy_def}
\mathcal R_j \;\coloneqq\; \bigl\|\bm r_{\mathcal U_M}(\widehat{\bm{\phi}}^{\,e}_j)\bigr\|_2^2.
\end{equation}
Specifically, we use the logarithmic score $\zeta_j \coloneqq \log(\mathcal R_j+\varepsilon)$ for some small $\varepsilon>0$, with smaller values indicating stronger signal-subspace consistency. Algorithm~\ref{alg:residual_selection_energy} summarizes this procedure.

Section~\ref{sec:numerical_experiments} evaluates this score as a mode-selection method. Here, the point is the residual geometry:
If $\widehat{\bm{\phi}}^{\,e}_j$ is a true mode, it is dominated by the signal
component, so its energy concentrates on the signal subspace $\mathcal S$ and
therefore, by Assumption~\ref{ass:subspace_distance_bound}, on $\mathcal U_m$.
Since $\mathcal U_m\subset\mathcal U_M$, a true mode typically places very
little energy outside $\mathcal U_M$ (and hence in $\mathcal U_M^\perp$), so
$\bigl\|\bm r_{\mathcal U_M}(\widehat{\bm{\phi}}^{\,e}_j)\bigr\|_2^2$ is small.

Conversely, if $\widehat{\bm{\phi}}^{\,e}_j$ is a spurious mode, it is a noise-driven artifact of the finite-rank least-squares
fit rather than of coherent low-dimensional structure in the data. Such modes
are not associated with the signal subspace and therefore are not confined to
any low-dimensional subspace, typically placing non-negligible energy outside
$\mathcal U_M$. This behavior is consistent with well-documented observations
of spectral pollution and noise-induced modes in practical \ac{DMD} computations
\cite{Schmid2022,ColbrookAytonSzoke2023}. Consequently, a spurious mode generally
has a larger component in $\mathcal U_M^\perp$, and thus
$\|\bm r_{\mathcal U_M}(\widehat{\bm{\phi}}^{\,e}_j)\|_2^2$ is large relative to
that of a true mode.
Section~\ref{sec:kv_structure} further develops the geometric viewpoint, and its significance, by showing that it governs the internal structure of the modes.

\begin{algorithm}
  \centering
  \setlength{\fboxsep}{6pt}
  \setlength{\fboxrule}{0.4pt}
  \fbox{%
    \begin{minipage}{0.95\linewidth}
      \begin{algorithmic}[1]
        \Require Eigenpairs $\{(\widehat{\lambda}_j,\widehat{\bm{\phi}}^{\,e}_j)\}_{j=1}^M$, small $\varepsilon>0$
        \Ensure Per-mode scores $\{\zeta_j\}_{j=1}^M$ (smaller is more likely true)
        \For{$j = 1, \dots, M$}
          \State $\mathcal{R}_j \gets \|\widehat{\bm{\phi}}^{\,e}_j\|_2^2 - |\widehat{\lambda}_j|^2$ \Comment{by \eqref{eq:esr_energy_from_mode_and_eig}}
          \State $\zeta_j \gets \log(\mathcal{R}_j + \varepsilon)$
        \EndFor
      \end{algorithmic}
    \end{minipage}%
  }
  \caption{\ac{ESR}-energy score per mode.}
  \label{alg:residual_selection_energy}
\end{algorithm}

To compute the residual norm efficiently, following~\eqref{eq:proj_identity}, we decompose the exact mode as
\begin{equation}\label{eq:exact_mode_decomp}
\widehat{\bm{\phi}}^{\,e}_j
=
\widehat{\lambda}_j\,\widehat{\bm{\phi}}^{\,p}_j
+
\bm r_{\mathcal{U}_M}(\widehat{\bm{\phi}}^{\,e}_j).
\end{equation}

Since $\|\widehat{\bm{\phi}}^{\,p}_j\|_2=1$ and $\bm r_{\mathcal{U}_M}(\widehat{\bm{\phi}}^{\,e}_j)\in \mathcal U_M^\perp$, the two terms in~\eqref{eq:exact_mode_decomp} are orthogonal, and hence
\begin{equation}\label{eq:exact_mode_energy_decomp}
\bigl\|\widehat{\bm{\phi}}^{\,e}_j\bigr\|_2^2
=
|\widehat{\lambda}_j|^2
+
\bigl\|\bm r_{\mathcal{U}_M}(\widehat{\bm{\phi}}^{\,e}_j)\bigr\|_2^2.
\end{equation}
Therefore,
\begin{equation}\label{eq:esr_energy_from_mode_and_eig}
\bigl\|\bm r_{\mathcal{U}_M}(\widehat{\bm{\phi}}^{\,e}_j)\bigr\|_2^2
=
\|\widehat{\bm{\phi}}^{\,e}_j\|_2^2 - |\widehat{\lambda}_j|^2 .
\end{equation}
The expression in~\eqref{eq:esr_energy_from_mode_and_eig} implies that the residual norms can be computed directly from the norms of the exact modes and the magnitudes of the eigenvalues, without forming $\bm P_{\mathcal U_M}$. This adds only a negligible additional cost beyond the cost of \ac{SVD} used in the standard delay-coordinates \ac{DMD}. For a complete complexity analysis, see Appendix~\ref{app:complexity}.

Since the scores $\mathcal R_j$ can be computed directly from the rank-$M$ \ac{DMD} eigenpairs via \eqref{eq:esr_energy_from_mode_and_eig}, the ESR-energy score applies to the classical case, where $L=1$ without delay-coordinates.

%% file: figures/residual_visualization.tex
\begin{figure*}[t]
  \centering
  \begin{tikzpicture}
  \tikzset{>=stealth}
    \begin{groupplot}[
      group style={
        group size=2 by 1,
        horizontal sep=2cm,
      },
      view={45}{20},
      width=0.48\textwidth,
      height=0.4\textwidth,
      xmin=0, xmax=1.6,
      ymin=0, ymax=1.6,
      zmin=0, zmax=0.75,
      axis lines=none,
      ticks=none,
      clip=false,
    ]

      \nextgroupplot[title={(a) Signal Subspace Residual (SSR)}]

        \fill[gray!2] (axis cs:0,0,0) -- (axis cs:1.5,0,0) -- (axis cs:1.5,1.5,0) -- (axis cs:0,1.5,0) -- cycle;
        \pgfplotsinvokeforeach{0,0.25,...,1.5}{
            \draw[gray!30, thin] (axis cs:#1,0,0) -- (axis cs:#1,1.5,0);
            \draw[gray!30, thin] (axis cs:0,#1,0) -- (axis cs:1.5,#1,0);
        }

        \addplot3[->, black, line width=0.8pt] coordinates {(0,0,0) (1.6,0,0)}; \node[anchor=west] at (axis cs:1.6,0,0) {$u_1$};
        \addplot3[->, black, line width=0.8pt] coordinates {(0,0,0) (0,1.6,0)}; \node[anchor=south] at (axis cs:0,1.6,0) {$u_2$};
        \addplot3[->, black, line width=0.8pt] coordinates {(0,0,0) (0,0,0.75)}; \node[anchor=south] at (axis cs:0,0,0.73) {$u_3$};
        \fill[black] (axis cs:0,0,0) circle (1pt);

        \addplot3[->, dotted, line width=1pt, draw=green!60!black] coordinates {(0,0,0) (1.5, 0.23, 0)};
        \node[anchor=west, text=green!60!black] at (axis cs:1.5, 0.23, 0) {$\mathcal{S}$};

        \draw[dotted, black, line width=0.8pt] (axis cs:1.44, 0, 0) to [bend right=20] (axis cs:1.4, 0.2, 0);
        \node[anchor=south west, font=\scriptsize, inner sep=1pt] at (axis cs: 1.45, -0.5, 0.0) {$\arcsin(\eta)$};

        \addplot3[->, line width=1pt, draw=blue] coordinates {(0,0,0) (1, 0.25, 0.1)};
        \node[anchor=south east, text=blue] at (axis cs:1.3, 0.175, 0.13) {$\bm{\widehat{\phi}^{e}}_{ \text{true} }$};

        \addplot3[->, line width=1pt, draw=red] coordinates {(0,0,0) (0.1025660859, 1.025660859, 0.65)};
        \node[anchor=south east, text=red] at (axis cs:0.12, 0.66, 0.45) {$\bm{\widehat{\phi}^{e}}_{ \text{spur} }$};

        \addplot3[->, line width=1pt, draw=violet] coordinates {(1.0143, 0.1555, 0) (1, 0.25, 0.1)};
        \node[anchor=west, text=violet, font=\footnotesize, inner sep=2pt] at (axis cs:1, 0.25, 0.05) {True-SSR};

        \addplot3[->, line width=1pt, draw=violet] coordinates {(0.2538, 0.0389, 0) (0.1025660859, 1.025660859, 0.65)};
        \node[anchor=west, text=violet, font=\footnotesize, inner sep=2pt] at (axis cs:0.2, 0.6, 0.4) {Spurious-SSR};

        \draw[black, line width=0.4pt]
          (axis cs: 1.0536, 0.1616, 0) --
          (axis cs: 1.0536, 0.1889, 0.0289) --
          (axis cs: 1.0143, 0.1829, 0.0289);

        \draw[black, line width=0.4pt]
          (axis cs: 0.2931, 0.0450, 0) --
          (axis cs: 0.2886, 0.0843, 0.0285) --
          (axis cs: 0.2493, 0.0782, 0.0285);

      \nextgroupplot[title={(b) Estimated Subspace Residual (ESR)}]

        \fill[gray!2] (axis cs:0,0,0) -- (axis cs:1.5,0,0) -- (axis cs:1.5,1.5,0) -- (axis cs:0,1.5,0) -- cycle;
        \pgfplotsinvokeforeach{0,0.25,...,1.5}{
            \draw[gray!30, thin] (axis cs:#1,0,0) -- (axis cs:#1,1.5,0);
            \draw[gray!30, thin] (axis cs:0,#1,0) -- (axis cs:1.5,#1,0);
        }

        \addplot3[->, black, line width=0.8pt] coordinates {(0,0,0) (1.6,0,0)}; \node[anchor=west] at (axis cs:1.6,0,0) {$u_1$};
        \addplot3[->, black, line width=0.8pt] coordinates {(0,0,0) (0,1.6,0)}; \node[anchor=south] at (axis cs:0,1.6,0) {$u_2$};
        \addplot3[->, black, line width=0.8pt] coordinates {(0,0,0) (0,0,0.75)}; \node[anchor=south] at (axis cs:0,0,0.73) {$u_3$};
        \fill[black] (axis cs:0,0,0) circle (1pt);

        \addplot3[->, dotted, line width=1pt, draw=green!60!black] coordinates {(0,0,0) (1.5, 0.23, 0)};
        \node[anchor=west, text=green!60!black] at (axis cs:1.5, 0.23, 0) {$\mathcal{S}$};

        \draw[dotted, black, line width=0.8pt] (axis cs:1.44, 0, 0) to [bend right=20] (axis cs:1.4, 0.2, 0);
        \node[anchor=south west, font=\scriptsize, inner sep=1pt] at (axis cs: 1.45, -0.5, 0.0) {$\arcsin(\eta)$};

        \addplot3[->, line width=1pt, draw=blue] coordinates {(0,0,0) (1, 0.25, 0.1)};
        \node[anchor=south east, text=blue] at (axis cs:1.3, 0.175, 0.13) {$\bm{\widehat{\phi}^{e}}_{ \text{true} }$};

        \addplot3[->, line width=1pt, draw=red] coordinates {(0,0,0) (0.1025660859, 1.025660859, 0.65)};
        \node[anchor=south east, text=red] at (axis cs:0.12, 0.66, 0.45) {$\bm{\widehat{\phi}^{e}}_{ \text{spur} }$};

        \addplot3[->, dashed, line width=0.8pt, draw=blue!80] coordinates {(0,0,0) (1, 0.25, 0)};
        \addplot3[->, line width=1pt, draw=orange] coordinates {(1, 0.25, 0) (1, 0.25, 0.1)};
        \node[anchor=west, text=orange, font=\footnotesize] at (axis cs:1, 0.25, 0.05) {True-ESR};

        \addplot3[->, dashed, line width=0.8pt, draw=red!80] coordinates {(0,0,0) (0.1025660859, 1.025660859, 0)};
        \addplot3[-, line width=1pt, draw=orange] coordinates {(0.1025660859, 1.025660859, 0) (0.1025660859, 1.025660859, 0.65)};
        \node[anchor=west, text=orange, font=\footnotesize] at (axis cs:0.1, 1, 0.4) {Spurious-ESR};

    \end{groupplot}
  \end{tikzpicture}
  \caption{Geometric illustration of signal-subspace residual (SSR) and estimated-subspace residual (ESR) in the case \(m=1\) and \(M=2\), where the true signal subspace \(\mathcal{S}\) (green) lies inside the truncation subspace \(\mathcal{U}_M = \operatorname{span}\{u_1,u_2\}\). True and spurious modes are drawn in blue and red, respectively, together with their projections onto \(\mathcal{U}_M\) (dashed). (a) SSR corresponds to the in-plane deviation of a mode from \(\mathcal{S}\), shown as violet arrows. (b) ESR corresponds to the out-of-plane component of a mode, shown as orange arrows. In this constructed example, the true mode has small SSR and ESR, whereas the spurious mode has larger residual in both senses.}
  \label{fig:two_panel_residual}
\end{figure*}

%% file: sections/kv_structure_and_nested_DMD.tex
\section{Kronecker--Vandermonde structure in modes}\label{sec:kv_structure}

Assuming an exponential signal model as in~\eqref{eq:explicit_signal_structure},
Bronstein \emph{et al.}\cite{Bronstein2022} showed that, when using delay-coordinates, each true mode
exhibits a \ac{KV} structure matching the lifted underlying mode $\widetilde{\bm{\phi}}_j$ of~\eqref{eq:lifted_signal_structure},
\begin{equation}\label{eq:kv_kronecker_def}
\bm v_L(\lambda_j)\otimes\bm{\phi}_j
\;=\;
\begin{bmatrix}
\bm{\phi}_j \\
\lambda_j\,\bm{\phi}_j \\
\vdots \\
\lambda_j^{L-1}\,\bm{\phi}_j
\end{bmatrix}
\in \mathbb C^{DL},
\end{equation}
where $\bm v_L(\lambda)\in\mathbb C^{L}$ is the Vandermonde vector:
\begin{equation}\label{eq:vandermonde_vector_def}
\bm v_L(\lambda)\;=\;[\,1,\ \lambda,\ \ldots,\ \lambda^{L-1}\,]^\top,
\end{equation}
and $\otimes$ is the Kronecker product, defined by
\begin{equation}\label{eq:kronecker_def}
\bm a\otimes\bm b
\;\coloneqq\;
\begin{bmatrix}
a_1\,\bm b\\
a_2\,\bm b\\
\vdots\\
a_L\,\bm b
\end{bmatrix}
\in\mathbb C^{DL}.
\end{equation}
for any $\bm a\in\mathbb C^{L}$ and $\bm b\in\mathbb C^{D}$.
See App.~\ref{app:data_side_kv} for a compact derivation of this result.
Here, we extend this analysis in three ways. First, we derive the same \ac{KV} structure without assuming a signal model, showing that this structure is induced by the delay-coordinates rather than by the signal model. Second, our new derivation shows that KV structure can arise in both \emph{true}
and \emph{spurious} components, often with different strengths (with true modes adhering significantly more strongly). This motivates continuous KV-deviation metrics that quantify the extent to which each mode conforms to the KV template, while avoiding undue sensitivity to small, localized perturbations.
Third, we link the new derivation of the KV structure to the \ac{ESR} (Definition~\ref{def:esr_vector})
and use their relationship to motivate the KV-deviation scores developed below.

We note that this \ac{KV} structure implies that, in delay-coordinates, a mode no longer represents a purely spatial pattern in $\mathbb{C}^{D}$, but rather a coupled spatiotemporal pattern in $\mathbb{C}^{DL}$: its $D$-dimensional lagged segments are tied across lags by a geometric progression parameterized by $\lambda_j$. This structure is illustrated schematically in Fig.~\ref{fig:kv_structure}.

\subsection{Block-companion minimizer and compression}
\label{subsec:block_companion}

To derive the KV structure without assuming a particular signal model, we examine the following unconstrained least-squares problem:
\begin{equation}\label{eq:least_squares_objective}
\min_{\bm A\in\mathbb C^{DL\times DL}} \|\bm X_1-\bm A\bm X_0\|_F^2,
\end{equation}
where $\bm X_0,\bm X_1\in\mathbb C^{DL\times N}$ are the snapshot matrices defined in~\eqref{eq:x0_x1}.
Although the minimizer of~\eqref{eq:least_squares_objective} is not necessarily unique, there exists a minimizer whose structure is dictated by the delay-coordinates.
\begin{lemma}[Block-companion minimizer]\label{lem:companion_minimizer}
The optimization problem~\eqref{eq:least_squares_objective} admits the minimizer $\bm{C}_L$ with the following block-companion form:
\begin{equation}\label{eq:companion_def}
\bm C_L
=
\begin{bmatrix}
\bm 0   & \bm I_D &        &        & \bm 0 \\
        & \ddots  & \ddots &        &       \\
        &         & \bm 0  & \bm I_D & \bm 0 \\
\bm B_1 & \bm B_2 & \cdots & \bm B_{L-1} & \bm B_L
\end{bmatrix}
\in\mathbb C^{DL\times DL},
\end{equation}
where $\bm B_\ell\in\mathbb C^{D\times D}$.
\end{lemma}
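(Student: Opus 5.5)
The plan is to decouple the least-squares objective row-block by row-block, and then to notice that for the first $L-1$ block rows a shift matrix gives zero residual. Since the Frobenius norm is separable across rows, we can write $\bm A$ in $D$-row blocks $\bm A^{(1)},\dots,\bm A^{(L)}$, each in $\mathbb C^{D\times DL}$, and likewise split $\bm X_1$ into blocks $\bm X_1^{(\ell)}\in\mathbb C^{D\times (N-L)}$. The objective~\eqref{eq:least_squares_objective} then becomes
\[
\sum_{\ell=1}^{L}\bigl\|\bm X_1^{(\ell)}-\bm A^{(\ell)}\bm X_0\bigr\|_F^2 ,
\]
and the $L$ summands can be minimized independently over $\bm A^{(1)},\dots,\bm A^{(L)}$.

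The key structural observation comes straight from the delay construction~\eqref{eq:delay_vector} and~\eqref{eq:x0_x1}. The $\ell$-th block of column $k$ of $\bm X_1$ is $\bm x_{k+1+(\ell-1)} = \bm x_{k+\ell}$, and this is exactly the $(\ell+1)$-th block of column $k$ of $\bm X_0$ whenever $\ell\le L-1$. Hence $\bm X_1^{(\ell)}=\bm X_0^{(\ell+1)}$ for $\ell=1,\dots,L-1$. Choosing $\bm A^{(\ell)}=\bm e_{\ell+1}^\top\otimes\bm I_D$, the block row that selects the $(\ell+1)$-th block, gives $\bm A^{(\ell)}\bm X_0=\bm X_1^{(\ell)}$. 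That summand is therefore zero, which is its global minimum because it is nonnegative. These choices are exactly the shift rows (identity blocks on the superdiagonal) in~\eqref{eq:companion_def}.

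For the last block row, any minimizer of $\|\bm X_1^{(L)}-\bm A^{(L)}\bm X_0\|_F^2$ will do. A canonical choice is $\bm A^{(L)}=\bm X_1^{(L)}\bm X_0^{\dagger}$, which is the block $[\bm B_1\ \cdots\ \bm B_L]$. A minimizer always exists because this is a linear least-squares problem; it is unique only when $\bm X_0$ has full row rank, which matches the remark that the minimizer need not be unique. Stacking the blocks gives $\bm C_L$, and since each summand is individually minimized, $\bm C_L$ attains the minimum of the full objective.

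The main obstacle is only bookkeeping. I would state the index identity $\bm X_1^{(\ell)}=\bm X_0^{(\ell+1)}$ carefully, including the column ranges, and then point out that the per-row-block separability justifies minimizing each summand independently. I would also remark that $\bm C_L$ generally differs from $\bm X_1\bm X_0^{\dagger}$ in its first $L-1$ block rows when $\bm X_0$ is rank-deficient, but it attains the same optimal value.
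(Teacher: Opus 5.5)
Your proof is correct and follows essentially the same argument as the paper. Both split the Frobenius objective over $D$-row blocks, use the delay identity $\bm X_1^{(\ell)}=\bm X_0^{(\ell+1)}$ so that the shift rows give zero residual for the first $L-1$ blocks, and take the Moore--Penrose predictor $\bm X_1^{(L)}\bm X_0^{\dagger}$ for the last block row; your explicit index bookkeeping is more careful than the paper's, which only asserts the block-shift relation.
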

\noindent\textit{Proof}
See Appendix~\ref{app:existence_block_companion}.

Lemma~\ref{lem:companion_minimizer} shows that the delay-coordinates are
explicitly encoded in a minimizer: the top $L-1$ block rows implement the
one-step shift between consecutive delay blocks, while the last block row
defines a linear predictor of the next sample from the stacked history
$\widetilde{\bm x}_k=[\,\bm x_k^\top,\ldots,\bm x_{k+L-1}^\top\,]^\top$, namely
$\bm x_{k+L}\approx \bm B\,\widetilde{\bm x}_k$, where
$\bm B=[\,\bm B_1~\cdots~\bm B_L\,]\in\mathbb C^{D\times DL}$.
Throughout, we fix $\bm C_L$ by taking $\bm B$ to be the Moore-Penrose
(pseudoinverse) least-squares predictor; Appendix~\ref{app:existence_block_companion}
makes this choice explicit.

\begin{lemma}[Eigenvectors of block-companion matrices (see, e.g., Ref.~\onlinecite{MackeyMackeyMehlMehrmann2006})]
\label{lem:kv_eigenvectors}
Any eigenvector of $\bm C_L$ associated with eigenvalue $\mu\in\mathbb C$ has the following
Kronecker--Vandermonde form:
\begin{equation}\label{eq:kv_eigenvector_form}
\bm v_L(\mu)\otimes \bm \varphi,
\end{equation}
where $\bm v_L(\mu)=[ 1, \mu, \ldots,  \mu^{L-1}]^\top$, and
$\bm \varphi\in\mathbb C^{D}\setminus\{\bm 0\}$.
\end{lemma}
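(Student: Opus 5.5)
The argument will read the eigen-equation $\bm C_L\bm x=\mu\bm x$ block row by block row. Partition a candidate eigenvector as $\bm x=[\,\bm x_1^\top,\ldots,\bm x_L^\top\,]^\top$ with $\bm x_\ell\in\mathbb C^D$, conformally with the block structure in~\eqref{eq:companion_def}. Only the top $L-1$ shift rows will be used to pin down the form; the bottom row $\bm B=[\,\bm B_1~\cdots~\bm B_L\,]$ will only constrain which $\mu$ and $\bm\varphi$ are admissible.

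\emph{Step 1 (shift rows).} For $\ell=1,\dots,L-1$, the $\ell$-th block row of $\bm C_L\bm x=\mu\bm x$ reads
\[
\bm x_{\ell+1}=\mu\,\bm x_\ell .
\]
Induction on $\ell$ then gives $\bm x_\ell=\mu^{\ell-1}\bm x_1$ for all $\ell=1,\dots,L$. Hence $\bm x=\bm v_L(\mu)\otimes\bm x_1$, and we set $\bm\varphi\coloneqq\bm x_1$. This also covers $\mu=0$ with the convention $0^0=1$: then $\bm x_\ell=\bm 0$ for $\ell\ge2$.

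\emph{Step 2 ($\bm\varphi\neq\bm 0$).} If $\bm x_1=\bm 0$, Step~1 forces every block to vanish, so $\bm x=\bm 0$. This contradicts $\bm x$ being an eigenvector, so $\bm\varphi\in\mathbb C^D\setminus\{\bm 0\}$.

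\emph{Step 3 (last row, for completeness).} Substituting the KV form into the last block row gives
\[
\sum_{\ell=1}^{L}\mu^{\ell-1}\bm B_\ell\bm\varphi=\mu^{L}\bm\varphi,
\qquad\text{i.e.}\qquad
\Bigl(\mu^L\bm I_D-\sum_{\ell=1}^{L}\mu^{\ell-1}\bm B_\ell\Bigr)\bm\varphi=\bm 0 .
\]
So the eigenvalues are the roots of the associated matrix-polynomial determinant, and $\bm\varphi$ lies in the kernel of that polynomial evaluated at $\mu$. This identity is not needed for the stated form, but it shows the converse: any such pair $(\mu,\bm\varphi)$ yields an eigenvector. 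It connects to the cited linearization theory of Ref.~\onlinecite{MackeyMackeyMehlMehrmann2006}.

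There is no real obstacle. The only points needing care are the degenerate case $L=1$, where the statement is trivial because $\bm v_1(\mu)=1$, and the case $\mu=0$. Note also that the result holds for any choice of the $\bm B_\ell$, so it does not depend on the pseudoinverse choice fixed after Lemma~\ref{lem:companion_minimizer}.
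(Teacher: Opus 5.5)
Your proof is correct and follows the same approach as the paper's proof in Appendix~\ref{app:kv_eigenvectors}: the shift block rows force each block to equal $\mu^{\ell}$ times the first block, and the last block row yields the matrix-polynomial condition $\bigl(\sum_{\ell}\mu^{\ell-1}\bm B_\ell\bigr)\bm\varphi=\mu^{L}\bm\varphi$. Your explicit argument that $\bm\varphi\neq\bm 0$ and your converse remark are small additions that the paper leaves implicit.
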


\noindent\textit{Proof.}
See Appendix~\ref{app:kv_eigenvectors}.

We now connect the block-companion solution $\bm C_L\in\mathbb C^{DL\times DL}$ of \eqref{eq:companion_def} to the truncated DMD propagator $\bm A_M\in\mathbb C^{M\times M}$ defined in \eqref{eq:reduced_propagator}.

\begin{proposition}
\label{prop:AM_is_compression_of_CL}
The truncated \ac{DMD} propagator can be recast as
\begin{equation}\label{eq:AM_is_compression_of_CL}
\bm A_M \;=\; \bm U_M^{\!H}\,\bm C_L\,\bm U_M,
\end{equation}
where $\bm U_M$ is defined in \eqref{eq:svd_x0}.
\end{proposition}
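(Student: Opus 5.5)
The plan is to rewrite $\bm U_M$ in terms of the data and then use only the fact that $\bm C_L$ is \emph{some} minimizer of~\eqref{eq:least_squares_objective}. The block-companion form from Lemma~\ref{lem:companion_minimizer} plays no role here; only minimality is used. I assume the truncation is nondegenerate, $\sigma_M>0$, so that $\bm\Sigma_M^{-1}$ in~\eqref{eq:reduced_propagator} exists. This is implicit in the definition of $\bm A_M$.

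\emph{Step 1 (characterize all minimizers).} The objective in~\eqref{eq:least_squares_objective} is a convex quadratic in $\bm A$. Its minimizers are exactly the solutions of the normal equations $\bm A\bm X_0\bm X_0^{H}=\bm X_1\bm X_0^{H}$. Equivalently, they are the matrices $\bm A=\bm X_1\bm X_0^{\dagger}+\bm Z$ with $\bm Z\bm X_0=\bm 0$. In either description, every minimizer satisfies
\[
\bm A\bm X_0=\bm X_1\bm X_0^{\dagger}\bm X_0 ,
\]
so the fitted values do not depend on which minimizer is chosen. Applied to $\bm C_L$, this gives $\bm C_L\bm X_0=\bm X_1\bm X_0^{\dagger}\bm X_0$. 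Establishing this identity cleanly is the only substantive step. I would verify it directly from the normal equations, using $\bm X_0^{H}=\bm X_0^{\dagger}\bm X_0\bm X_0^{H}$ and the fact that $\operatorname{col}(\bm X_0^{H})=\operatorname{col}(\bm X_0^{\dagger})$.

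\emph{Step 2 (express $\bm U_M$ through $\bm X_0$).} From the SVD~\eqref{eq:svd_x0}, $\bm X_0\bm V_M=\bm U_M\bm\Sigma_M$, which holds exactly for the leading singular triplets. Hence
\[
\bm U_M=\bm X_0\bm V_M\bm\Sigma_M^{-1}.
\]

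\emph{Step 3 (combine).} Substituting Step 2 and then Step 1,
\[
\bm C_L\bm U_M=\bm C_L\bm X_0\bm V_M\bm\Sigma_M^{-1}=\bm X_1\bigl(\bm X_0^{\dagger}\bm X_0\bigr)\bm V_M\bm\Sigma_M^{-1}.
\]
Here $\bm X_0^{\dagger}\bm X_0$ is the orthogonal projector onto $\operatorname{col}(\bm X_0^{H})$. The columns of $\bm V_M$ are right singular vectors belonging to nonzero singular values, so they lie in this subspace, and therefore $\bm X_0^{\dagger}\bm X_0\bm V_M=\bm V_M$. This yields $\bm C_L\bm U_M=\bm X_1\bm V_M\bm\Sigma_M^{-1}$. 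Left-multiplying by $\bm U_M^{H}$ gives exactly~\eqref{eq:reduced_propagator}, which establishes~\eqref{eq:AM_is_compression_of_CL}.
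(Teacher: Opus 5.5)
Your proof is correct and is essentially the paper's argument. The paper uses that every minimizer satisfies $\bm A_\star\bm P_{\mathcal U}=\bm X_1\bm X_0^{\dagger}$, together with $\bm P_{\mathcal U}\bm U_M=\bm U_M$; this is the same content as your fitted-value identity $\bm C_L\bm X_0=\bm X_1\bm X_0^{\dagger}\bm X_0$ combined with $\bm U_M=\bm X_0\bm V_M\bm\Sigma_M^{-1}$, and your packaging lands directly on \eqref{eq:reduced_propagator}, whereas the appendix defines $\bm A_M:=\bm U_M^{\mathrm H}\bm A_{\mathrm{MP}}\bm U_M$ and only checks $\bm X_0^{\dagger}\bm U_M=\bm V_M\bm\Sigma_M^{-1}$ later, in the proof of Theorem~\ref{thm:residual_companion_identity}.
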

\noindent\textit{Proof.}
See Appendix~\ref{app:reduced-propagator-is-compression}.

Based on Proposition \ref{prop:AM_is_compression_of_CL}, we connect the eigenvectors of $\bm C_L$, which have a KV structure (Lemma~\ref{lem:kv_eigenvectors}), to the projected-\ac{DMD} modes.
\begin{theorem}
\label{thm:residual_companion_identity}
Let $(\widehat{\lambda}_j,\bm w_j)$ be an eigenpair of the reduced propagator $\bm A_M$
defined in~\eqref{eq:reduced_propagator}, and let
$\widehat{\bm\phi}^{\,p}_j$ and $\widehat{\bm\phi}^{\,e}_j$ be the associated
projected and exact \ac{DMD} modes defined in~\eqref{eq:proj_mode_def} and \eqref{eq:exact_mode_def}, respectively.
Then, $(\widehat\lambda_j,\widehat{\bm\phi}^{\,p}_j)$ satisfies
\begin{equation}\label{eq:main_identity_vector}
(\bm C_L-\widehat{\lambda}_j\bm I)\,\widehat{\bm\phi}^{\,p}_j
\;=\;
\bm r_{\mathcal U_M}\!\bigl(\widehat{\bm\phi}^{\,e}_j\bigr),
\end{equation}
where $\bm r_{\mathcal U_M}(\widehat{\bm\phi}^{\,e}_j)$ is the corresponding estimated-subspace residual vector (Definition~\ref{def:esr_vector}). 
\end{theorem}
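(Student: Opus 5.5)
The plan is to show first that $\bm C_L$ maps each projected mode exactly onto the corresponding exact mode, i.e.\ $\bm C_L\widehat{\bm\phi}^{\,p}_j=\widehat{\bm\phi}^{\,e}_j$. The identity~\eqref{eq:main_identity_vector} then follows from the orthogonal decomposition~\eqref{eq:exact_mode_decomp}. With this route Proposition~\ref{prop:AM_is_compression_of_CL} is not really needed, since the key fact is stronger: $\bm C_L\bm U_M=\bm X_1\bm V_M\bm\Sigma_M^{-1}$.

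\textbf{Step 1 (fitted values of any minimizer).} Any minimizer $\bm A$ of~\eqref{eq:least_squares_objective} satisfies the normal equations $\bm A\bm X_0\bm X_0^{\mathrm H}=\bm X_1\bm X_0^{\mathrm H}$. Multiplying on the right by $(\bm X_0^{\mathrm H})^{\dagger}$ and using $\bm X_0^{\mathrm H}(\bm X_0^{\mathrm H})^{\dagger}=\bm X_0^{\dagger}\bm X_0$ gives $\bm A\bm X_0=\bm X_1\bm X_0^{\dagger}\bm X_0$. The right-hand side is $\bm X_1$ composed with the orthogonal projector onto $\operatorname{row}(\bm X_0)$. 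In particular this holds for $\bm C_L$ by Lemma~\ref{lem:companion_minimizer}. I expect this step to be the main point of care. The minimizer is not unique, but its fitted values $\bm A\bm X_0$ are, and so the argument must not depend on the specific choice of $\bm B$.

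\textbf{Step 2 (action of $\bm C_L$ on $\bm U_M$).} From the SVD~\eqref{eq:svd_x0}, $\bm X_0\bm V_M=\bm U_M\bm\Sigma_M$ holds exactly, assuming the $M$ retained singular values are nonzero so that $\bm\Sigma_M^{-1}$ exists. Hence $\bm U_M=\bm X_0\bm V_M\bm\Sigma_M^{-1}$. The columns of $\bm V_M$ lie in $\operatorname{row}(\bm X_0)$, so $\bm X_0^{\dagger}\bm X_0\bm V_M=\bm V_M$. Combining this with Step 1,
\[
\bm C_L\bm U_M=\bm X_1\bm X_0^{\dagger}\bm X_0\bm V_M\bm\Sigma_M^{-1}=\bm X_1\bm V_M\bm\Sigma_M^{-1}.
\]
Applying both sides to $\bm w_j$ and using~\eqref{eq:proj_mode_def} and~\eqref{eq:exact_mode_def} yields $\bm C_L\widehat{\bm\phi}^{\,p}_j=\widehat{\bm\phi}^{\,e}_j$. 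As a by-product, left-multiplying by $\bm U_M^{\mathrm H}$ recovers Proposition~\ref{prop:AM_is_compression_of_CL}.

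\textbf{Step 3 (conclude).} I will verify~\eqref{eq:proj_identity}:
\[
\bm P_{\mathcal U_M}\widehat{\bm\phi}^{\,e}_j=\bm U_M\bm U_M^{\mathrm H}\bm X_1\bm V_M\bm\Sigma_M^{-1}\bm w_j=\bm U_M\bm A_M\bm w_j=\widehat\lambda_j\bm U_M\bm w_j=\widehat\lambda_j\widehat{\bm\phi}^{\,p}_j.
\]
This gives $\widehat{\bm\phi}^{\,e}_j-\widehat\lambda_j\widehat{\bm\phi}^{\,p}_j=(\bm I-\bm P_{\mathcal U_M})\widehat{\bm\phi}^{\,e}_j=\bm r_{\mathcal U_M}(\widehat{\bm\phi}^{\,e}_j)$. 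Substituting Step 2 into the left-hand side gives $(\bm C_L-\widehat\lambda_j\bm I)\widehat{\bm\phi}^{\,p}_j=\bm r_{\mathcal U_M}(\widehat{\bm\phi}^{\,e}_j)$, which is~\eqref{eq:main_identity_vector}.
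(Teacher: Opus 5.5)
Your proposal is correct and follows essentially the same route as the paper. You first verify $\bm P_{\mathcal U_M}\widehat{\bm\phi}^{\,e}_j=\widehat\lambda_j\widehat{\bm\phi}^{\,p}_j$, and then obtain $\bm C_L\widehat{\bm\phi}^{\,p}_j=\widehat{\bm\phi}^{\,e}_j$ from the fact that every least-squares minimizer agrees with $\bm X_1\bm X_0^{\dagger}$ on $\operatorname{col}(\bm X_0)$. The paper packages that fact as $\bm X_1\bm X_0^{\dagger}=\bm C_L\bm P_{\mathcal U}$ together with $\bm X_0^{\dagger}\bm U_M=\bm V_M\bm\Sigma_M^{-1}$, while you derive $\bm C_L\bm X_0=\bm X_1\bm X_0^{\dagger}\bm X_0$ directly from the normal equations and write $\bm U_M=\bm X_0\bm V_M\bm\Sigma_M^{-1}$; this difference is only cosmetic.
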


\noindent\textit{Proof.}
See Appendix~\ref{app:proof_residual_companion_identity}.
Theorem~\ref{thm:residual_companion_identity} shows that the estimated subspace residual vector $\bm r_{\mathcal U_M}(\widehat{\bm\phi}^{\,e}_j)$ governs (through the spectrum of $\bm C_L$) the extent to which the projected mode $\widehat{\bm\phi}^{\,p}_j$ conforms to the KV structure exhibited by the eigenvectors of the block-companion minimizer $\bm C_L$. In particular, $\bm r_{\mathcal U_M}\!\bigl(\widehat{\bm\phi}^{\,e}_j\bigr)=\bm 0$ implies that $\widehat{\bm\phi}^{\,p}_j$ is an eigenvector of $\bm C_L$, and hence has exact Kronecker--Vandermonde structure.
Specifically, $\bm r_{\mathcal U_M}(\widehat{\bm\phi}^{\,e}_j)$ is exactly the eigenrelation residual of the projected eigenpair $(\widehat{\lambda}_j,\widehat{\bm\phi}^{\,p}_j)$ with respect to $\bm C_L$.

Theorem~\ref{thm:residual_companion_identity} therefore provides a structural link motivating KV-deviation metrics: the \ac{ESR} norm $\bigl\|\bm r_{\mathcal U_M}(\widehat{\bm\phi}^{\,e}_j)\bigr\|_2$ directly measures the extent to which the projected \ac{DMD} mode violates the block-companion eigenrelation. Since the eigenvectors of $\bm C_L$ have KV structure, smaller \ac{ESR} suggests closer KV consistency, while larger \ac{ESR} indicates a stronger departure from the ideal KV form. In Sec.~\ref{sec:mode_geometry}, the \ac{ESR} norm was used as a per-mode score for ranking true and spurious components; the link established here motivates KV deviation as a score that targets the structural property directly.

\subsection{Nested DMD}
\label{subsec:kv_nested_dmd}
Above, we showed that delay-embedded \ac{DMD} induces \ac{KV} structure in the computed modes. We postulate that true modes adhere more closely to the KV template than spurious modes. We therefore introduce direct KV-deviation scores computed from the mode entries and $\widehat{\lambda}_j$, and use them as per-mode structural scores.

To motivate the proposed quantification, consider the \emph{ideal KV template} for a projected \ac{DMD} mode:
\begin{equation}\label{eq:kv_mode}
\widehat{\bm\phi}^{\,p}_j \;=\; \bm v_L(\widehat{\lambda}_j)\otimes \widehat{\bm\phi}^{(0)}_j,
\qquad
\widehat{\bm\phi}^{(0)}_j \in \mathbb{C}^{D},
\end{equation}
where $\bm v_L(\lambda)\coloneqq[1,\lambda,\ldots,\lambda^{L-1}]^\top$.
Reshaping this mode, of length $DL$, as a $D\times L$ matrix by organizing blocks of $D$ coordinates as $L$ columns, yields 
\begin{equation}\label{eq:kv_mode_matrix_ideal}
\bigl[\,\widehat{\bm\phi}^{(0)}_j,\;
\widehat{\lambda}_j\,\widehat{\bm\phi}^{(0)}_j,\;
\ldots,\;
\widehat{\lambda}_j^{\,L-1}\widehat{\bm\phi}^{(0)}_j\,\bigr]
.
\end{equation}
Viewing this matrix as a sequence of columns, the $\ell$th column can be recast as
\begin{equation}\label{eq:kv_column_exponential_form}
\check b_j\,\check{\bm\phi}_j\,\widehat{\lambda}_j^{\,\ell},
\end{equation}
where $\check b_j\coloneqq\bigl\|\widehat{\bm\phi}^{(0)}_j\bigr\|_2$ and $\check{\bm\phi}_j\coloneqq
\widehat{\bm\phi}^{(0)}_j/\bigl\|\widehat{\bm\phi}^{(0)}_j\bigr\|_2$. This expression coincides with the one-mode specialization of \eqref{eq:explicit_signal_structure},
with the lag index $\ell$ replacing the time index $k$. Therefore, this reshaped form is consistent with the signal model \emph{assumed} by an order-1 \ac{DMD}.

Consequently, we reshape each computed projected mode
$\widehat{\bm\phi}^{\,p}_j=[\,\widehat{\bm\phi}^{(0)}_j;\ldots;\widehat{\bm\phi}^{(L-1)}_j\,]$ as above, obtaining the following mode matrix,
\begin{equation}\label{eq:mode_matrix}
\widehat{\bm \Phi}_j
\;\coloneqq\;
\bigl[\,\widehat{\bm\phi}^{(0)}_j,\ldots,\widehat{\bm\phi}^{(L-1)}_j\,\bigr]
\in \mathbb{C}^{D\times L}.
\end{equation}
Then, we apply an order-1 \ac{DMD} to the column sequence of $\widehat{\bm \Phi}_j$, along the lag axis, yielding a single \ac{DMD} eigenvalue $\check{\lambda}_{j}\in\mathbb C$
and a corresponding \ac{DMD} mode
$\check{\bm\phi}^{(0)}_{j}\in\mathbb C^{D}$.

Following the standard \ac{DMD} reconstruction convention, we reconstruct the rank-1 mode matrix by:
\begin{equation}\label{eq:kv_reconstruction}
\widehat{\bm \Phi}^{(\mathrm{KV})}_j
\;\coloneqq\;
\bigl(\check{\bm\phi}^{(0)}_j\,\check{\bm\phi}^{(0)H}_j\,\widehat{\bm\phi}^{(0)}_j\bigr)\,
\bm v_L^{\!\top}(\check{\lambda}_j),
\end{equation}
where $\widehat{\bm\phi}^{(0)}_j$ is the first column of $\widehat{\bm \Phi}_j$.

Based on the reconstructed matrix $\widehat{\bm \Phi}^{(\mathrm{KV})}_j$, we define a KV-deviation score as the mean squared reconstruction error:
\begin{equation}\label{eq:kv_scores_def}
\mathcal R^{(\mathrm{KV})}_j
\;\coloneqq\;
\frac{\bigl\|\widehat{\bm \Phi}_j-\widehat{\bm \Phi}^{(\mathrm{KV})}_j\bigr\|_F^2}{DL}.
\end{equation}

The score $\mathcal R^{(\mathrm{KV})}_j$ is a direct KV-deviation metric: it
vanishes when the reshaped mode follows an exact rank-1 KV template along the
lag axis and increases as the mode departs from KV structure.

From this point onward we follow Algorithm~\ref{alg:residual_selection_energy}: we apply the same score scaling. Here the per-mode feature is scalar and derived from $\mathcal R^{(\mathrm{KV})}_j$. Algorithm~\ref{alg:kv_nested_rank1} summarizes this procedure.

\begin{algorithm}
  \centering
  \setlength{\fboxsep}{6pt}
  \setlength{\fboxrule}{0.4pt}
  \fbox{%
    \begin{minipage}{0.95\linewidth}
      \begin{algorithmic}[1]
        \Require Projected modes $\{\widehat{\bm\phi}^{\,p}_j\}_{j=1}^M \subset \mathbb{C}^{DL}$, embedding length $L$, spatial dimension $D$, small $\varepsilon>0$
        \Ensure Per-mode scores $\{\zeta_j\}_{j=1}^M$ (smaller is more likely true)
        \For{$j = 1, \dots, M$}
          \State $\widehat{\bm \Phi}_j \gets \mathrm{reshape}(\widehat{\bm\phi}^{\,p}_j,\, D\times L)$
          \State Run rank-1 DMD on $\widehat{\bm \Phi}_j$ to obtain $(\check{\bm\phi}^{(0)}_j,\check{\lambda}_j)$
          \State $\bm y_0 \gets$ first column of $\widehat{\bm \Phi}_j$
          \State $\widehat{\bm \Phi}^{(\mathrm{KV})}_j \gets \bigl(\check{\bm\phi}^{(0)}_j\,\check{\bm\phi}^{(0)H}_j\,\bm y_0\bigr)\,
\bm v_L(\check{\lambda}_j)^{\!\top}$
          \State $\mathcal R^{(\mathrm{KV})}_j \gets \|\widehat{\bm \Phi}_j - \widehat{\bm \Phi}^{(\mathrm{KV})}_j\|_F^2 /(DL)$
          \State $\zeta_j \gets \log(\mathcal R^{(\mathrm{KV})}_j+\varepsilon)$
        \EndFor
      \end{algorithmic}
    \end{minipage}%
  }
  \caption{\ac{KV}-deviation score via nested rank-1 DMD.}
  \label{alg:kv_nested_rank1}
\end{algorithm}

\begin{figure*}[t]
  \centering
  \includegraphics[width=0.75\textwidth]{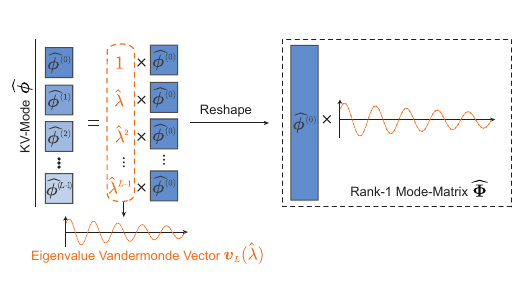}
  \caption{
  Kronecker--Vandermonde structure of a \ac{DMD} mode.
  A KV mode is formed by stacking $L$ lag segments of length $D$, where each
  segment is a scaled and rotated copy of a base spatial vector
  $\widehat{\bm{\phi}}^{(0)}$. The multipliers $\widehat{\lambda}^{\ell}$ form the
  Vandermonde vector $\bm v_L(\widehat{\lambda})=[1,\widehat{\lambda},\ldots,\widehat{\lambda}^{L-1}]^\top$,
  illustrated at the bottom. Reshaping the KV mode yields the rank-1 outer
  product $\widehat{\bm{\phi}}^{(0)}\,\bm v_L(\widehat{\lambda})^\top$, mirroring the
  rank-1 structure underlying order-1 \ac{DMD}.
}

  \label{fig:kv_structure}
\end{figure*}

The nested rank-1 construction replaces one delay-coordinates \ac{DMD} fit of
unknown order with $M$ independent order-1 fits along the lag axis, one per
mode.

The added cost scales linearly with $D$, $L$, and $M$ and is negligible compared
with the truncated \ac{SVD} used to form the projected modes; see
Appendix~\ref{app:complexity} for detailed asymptotic costs.

As a simpler variant, we consider a \emph{fixed-eigenvalue KV fit} (FEKVF).
Here ``fixed eigenvalue'' means that we do not re-estimate a lag-axis
eigenvalue: we set the lag multiplier to the original \ac{DMD} eigenvalue
$\widehat{\lambda}_j$ and fit only the spatial coefficient in the KV template. For
each mode this amounts to the least-squares problem
\[
\min_{\bm \varphi\in\mathbb C^{D}}
\bigl\|\widehat{\bm \Phi}_j-\bm \varphi\,\bm v_L(\widehat{\lambda}_j)^{\!\top}\bigr\|_F^2,
\]
yielding a single scalar deviation score in closed form
(Appendix~\ref{app:fekvf}). 
Compared with the nested rank-1 method, FEKVF
removes the inner order-1 \ac{DMD} step and is therefore computationally cheaper per mode.

%% file: sections/connection_to_existing_methods.tex
\section{Connection to existing mode-selection methods}
\label{sec:connection_to_existing_methods}
This section interprets existing mode-selection heuristics through the geometric framework developed above. In particular, we show that commonly used magnitude-based ranking rules implicitly conflate eigenvalue scaling with subspace residual geometry, and therefore degrade under delay-coordinates.

\subsection{Spatiotemporal coupling ratio-based method}
\label{subsec:stc_ratio_connection}
Both the approach of Bronstein \emph{et al.}\cite{Bronstein2022} and our approach use \ac{KV} structure to form per-mode
scores that are small when the structure is present. Bronstein \emph{et al.}\cite{Bronstein2022} do so
via entrywise quotient checks across consecutive lag blocks, testing whether
$(\widehat{\bm\phi}^{(\ell+1)}_j)_d/(\widehat{\bm\phi}^{(\ell)}_j)_d\approx\widehat\lambda_j$ for entries $d$ within each block,
and aggregating the results. In delay-coordinates, however, ratio tests can be
numerically fragile: whenever $(\widehat{\bm\phi}^{(\ell)}_j)_d$ is small, the
quotient is ill-conditioned and can take arbitrarily large values, so the
aggregate can be dominated by a few unstable coordinates. Our analysis suggests
that KV adherence is a continuous property shared by true and spurious modes to
different degrees, so quotient checks often behave like a near-binary test (pass or fail). In contrast, our \ac{KV}-fit score measures a graded distance from the \ac{KV} template through a global reconstruction error. We compare these criteria numerically in
Sec.~\ref{sec:numerical_experiments}.

\subsection{Mode-norm}
\label{subsec:mode_norm_analysis}
A common order-detection method in standard \ac{DMD} ranks modes by the Euclidean norm of the exact-\ac{DMD} mode, $\|\widehat{\bm{\phi}}^{\,e}_j\|_2$, motivated by the intuition that true modes carry larger energy than spurious modes. 
Our analysis in Sec.~\ref{sec:mode_geometry} sheds new light on this heuristic.
In \eqref{eq:exact_mode_energy_decomp}, the squared-norm of exact \ac{DMD} modes $\|\widehat{\bm{\phi}}^{\,e}_j\|_2^2$ is expressed as the sum of the squared magnitude of the eigenvalue $|\widehat{\lambda}_j|^2$ and the \ac{ESR} squared-norm
$\|\bm r_{\mathcal U_M}(\widehat{\bm{\phi}}^{\,e}_j)\|_2^2$.

When using delay-coordinates, we empirically observe that the magnitude of the eigenvalue loses its discriminative power.
Specifically, as the embedding length $L$ grows, the bulk of spurious eigenvalue
magnitudes concentrates in an annular band near the unit circle, as illustrated
in Fig.~\ref{fig:spur_eigs_L} (left panel). 
Intuitively, delay embedding enlarges the lifted state dimension while preserving the shift structure of the block-companion operator, increasing the effective polynomial degree governing spurious components. This induces a magnitude concentration near the unit circle, reducing the discriminative power of eigenvalue magnitude alone.
This behavior is qualitatively consistent
with annulus concentration phenomena studied for random polynomial roots; see
Soundararajan \cite{Soundararajan2019} for a detailed analysis. 

We first focus on
the case where $L$ is sufficiently large, so that spurious modes satisfy $|\widehat{\lambda}_j|\approx 1$ for  $j\in\mathcal{J}_{\mathrm{spur}}$.

In the weakly damped case, the underlying eigenvalues satisfy $|\widehat{\lambda_j}|\approx 1$, for $j\in[m]$, where we assume that the
associated \ac{DMD} eigenvalues satisfy $|\widehat{\lambda}_j|\approx|\lambda_j|$ for $j\in\mathcal{J}_{\mathrm{true}}$ to simplify the discussion.
In this case, both true and spurious eigenvalues cluster near the unit circle as $L$ increases. Consequently, eigenvalue magnitude becomes nearly independent of mode identity, and, by the decomposition of the exact \ac{DMD} mode squared norm in \eqref{eq:exact_mode_energy_decomp}, the only discriminative component of the exact-mode norm is the \ac{ESR} squared norm $|\bm r_{\mathcal U_M}(\widehat{\bm{\phi}}^{\,e}_j)|_2^2$.
Using the mode norm $\|\widehat{\bm{\phi}}^{\,e}_j\|_2^2$ therefore confounds this informative \ac{ESR} norm with the non-informative eigenvalue magnitude $|\widehat{\lambda}_j|^2$. Since the eigenvalue magnitudes may vary independently of mode identity, they mask or dilute the separation between true and spurious modes.
For this reason, it is preferable to work directly with the \ac{ESR} norm, which isolates the meaningful geometric information without contamination from eigenvalue scaling.

Next, consider a strongly damped signal, where the underlying eigenvalues satisfy
$|\lambda_j|<1$ for $j\in[m]$.
When $L$ is sufficiently large, spurious modes typically
have $|\widehat{\lambda}_j|\approx 1$, while true \ac{DMD} modes have smaller
$|\widehat{\lambda}_j|$. In addition, true modes typically have smaller \ac{ESR} energies than spurious modes. Together, these two facts imply that true modes tend to have
\emph{smaller} exact-mode norms than spurious modes. This is the opposite of how the heuristic is commonly applied, which treats larger norms as more likely to be true.

For completeness, we note that for small-$L$ embeddings (and in the standard \ac{DMD} case
$L=1$), spurious eigenvalues are often observed well inside the unit disk. In
this regime, true modes typically have significantly larger $|\widehat{\lambda}_j|^2$
than spurious modes, so mode-norm ranking \emph{is} an effective separator. However,
the \ac{ESR} norm contributes in the opposite
direction, being larger for spurious modes, and therefore it weakens the
separation provided by $|\widehat{\lambda}_j|^2$ alone.

These observations illustrate that eigenvalue magnitude and mode norm are not intrinsic indicators of dynamical relevance under delay embedding, whereas the \ac{ESR} isolates the geometric quantity that directly reflects signal-subspace consistency and mode identity.

\subsection{ResDMD residual, \ac{ESR}, and the block-companion eigenresidual}
\label{subsec:resdmd_esr_companion}

ResDMD~\cite{ColbrookTownsend2024} is framed in the Koopman operator setting, where
a (possibly nonlinear) discrete-time dynamical system is represented by a
linear infinite-dimensional operator $\mathcal K$ acting on observables
(functions of the state)~\cite{Colbrook2023,Mezic2005,Mezic2013,Budisic2012}.
{Given a candidate Koopman eigenpair $(\lambda, \bm g)$, with a coordinate vector $\bm c$ representing
$\bm g$ in a chosen finite basis, ResDMD computes a residual that
measures how well the eigenrelation $\mathcal K \bm g \approx \lambda \bm g$
is satisfied on the data. When applying ResDMD to the delay-coordinate setting (see details in Appendix~\ref{app:resdmd_in_our_setting}), the coordinate vector $\bm c$
corresponds to a left eigenvector $\bm q \in \mathbb{C}^{M}$ of the reduced propagator
$\bm A_M$ associated with the \ac{DMD} eigenvalue $\widehat\lambda$ (that is, $\bm q^{\mathrm H}\bm A_M=\widehat\lambda\,\bm q^{\mathrm H}$), namely $\bm c = \bm q$. \textcolor{blue}{As an eigenvector of $\bm K=\bm A_M^{\mathrm H}$, this candidate's Koopman eigenvalue is the conjugate $\widehat\lambda^{*}$, and it is this conjugate that enters the residual, while the candidate is paired with the \ac{DMD} mode of eigenvalue $\widehat\lambda$.} The squared ResDMD residual is then given by:
\begin{equation}
\label{eq:resdmd_residual_in_dcdmd}
\mathrm{res}^{2}(\widehat \lambda,\bm q)
=\frac{\bigl\|\bm X_1^{\mathrm H}\bm U_M \bm q-\textcolor{blue}{\widehat \lambda^{*}}\,\bm X_0^{\mathrm H}\bm U_M \bm q\bigr\|_2^{2}}
     {\bigl\|\bm X_0^{\mathrm H}\bm U_M \bm q\bigr\|_2^{2}}.
\end{equation}}

{The \ac{ESR} of Sec.~\ref{sec:mode_geometry} is also an eigenrelation
residual: by Theorem~\ref{thm:residual_companion_identity} \eqref{eq:main_identity_vector}, it equals the
eigenresidual of the projected pair $(\widehat{\lambda}_j,
\widehat{\bm\phi}^{\,p}_j)$ against the finite-dimension block-companion operator
$\bm C_L$ of Lemma~\ref{lem:companion_minimizer}.}

{Two differences are essential. (i) The reference operators differ
fundamentally: ResDMD residual approximates an eigen-residual with respect to an
infinite-dimensional Koopman operator. In comparison, \ac{ESR} is a projection residual with respect to the estimated subspace. (ii) The two act on different objects, as can be seen by recasting each as a projection residual:
\begin{equation}
\label{eq:esr_resdmd_sidebyside}
\begin{aligned}
\text{\ac{ESR} squared-norm:}\quad & \bigl\|(\bm I-\bm U_M\bm U_M^{\mathrm H})\,\bm X_1\bm V_M\bm\Sigma_M^{-1}\bm w_j \bigr\|^2, \\
\text{ResDMD numerator:}\quad & \bigl\|(\bm I-\bm V_M\bm V_M^{\mathrm H})\,\bm X_1^{\mathrm H}\bm U_M\bm q_j \bigr\|^2,
\end{aligned}
\end{equation}
where $\bm w_j$ and $\bm q_j$ are the right and left eigenvectors of $\bm A_M$. The squared-norms of both the ESR and the ResDMD numerator are orthogonal projections of an $\bm X_1$-mapped eigenvector, but onto complementary objects: the \ac{ESR} removes the $\mathrm{col}(\bm U_M)$ component of the right exact mode, while the ResDMD residual removes the $\mathrm{col}(\bm V_M)$ component of the lifted left eigenvector.}

\textcolor{blue}{We emphasize that the \ac{ESR} is therefore not a ResDMD residual, and does not inherit its Koopman spectral guarantees (see Appendix~\ref{app:resdmd_in_our_setting}).}

{For more details, including the explicit ResDMD residual, its instantiation to the delay-coordinate setting, and the implementation used as our baseline, see Appendix~\ref{app:resdmd_in_our_setting}.}

Since our methods exploit delay-coordinates induced structure, our empirical comparison with ResDMD in Sec.~\ref{sec:numerical_experiments} focuses on scenarios where delay-coordinates are strictly necessary, namely when the spatial $D$ is small.

\subsection{KROM, \ac{ESR}, and projection residuals}
\label{subsec:krom_esr}

The Koopman reduced-order modeling framework of Mohr~\emph{et al.}~\cite{MohrFonoberovaMezic2025} is a stochastic reduced-order modeling and forecasting method: it represents the dynamics as a finite Koopman mode decomposition plus a residual noise process and uses the residual statistics to attach confidence bounds to forecasts. Its order-selection heuristic, our only point of overlap, ranks the computed \ac{DMD} modes by their reconstruction amplitudes $\widehat{\bm b}$ from~\eqref{eq:reconstruction} and chooses the smallest top-amplitude subset whose reconstruction-error projection passes a Gaussianity test.

Both KROM and the \ac{ESR} are based on the same geometric primitive: an orthogonal split of a residual against a data-driven subspace;
three differences are essential.
(i) \emph{The subspace}: the \ac{ESR} uses the fixed estimated subspace $\mathcal U_M$,
whereas KROM uses a sequence of subspaces spanned by nested subsets of $m' < M$ amplitude-ordered \ac{DMD} modes.
(ii) \emph{The object}: the \ac{ESR} acts on an exact \ac{DMD} mode, whereas KROM acts on the per-snapshot reconstruction error $\widetilde{\bm x}_k - \widehat{\bm x}_k$ of~\eqref{eq:reconstruction}, under the assumption that for $m' > m$ this projected error becomes Gaussian. 
(iii) \emph{The output}: the \ac{ESR} produces per-mode scores, whereas KROM selects the smallest subset of high-amplitude modes that achieves near-Gaussianity of the projected error.
Modifying KROM to projected \ac{DMD} modes and $m' = M$, the orthogonal component of the reconstruction error reduces to $\bm r_{\mathcal U_M}(\widetilde{\bm x}_k - \widehat{\bm x}_k) = \bm r_{\mathcal U_M}(\widetilde{\bm x}_k)$, the \ac{ESR} of the snapshot itself.

%% file: sections/numerical.tex
\section{Numerical Results}
\label{sec:numerical_experiments}

This section presents numerical experiments evaluating our proposed methods
against existing baselines. We first describe the experimental setup, then
present mode-selection results in the low-spatial-dimension regime,
illustrate the same comparison on a concrete dynamical system, and finally
report empirical observations on the magnitude of spurious eigenvalues.

\subsection{Experimental setup}
\label{subsec:experimental_setup}

We simulate signals according to~\eqref{eq:measurement_model} using the linear $m$-mode signal described in~\eqref{eq:explicit_signal_structure}, with additive Gaussian noise. We note that we observed qualitatively similar performance under other noise distributions, including bi-Gaussian, heavy-tailed Student's $t$, uniform, and heteroscedastic noise, but report only Gaussian results here for brevity.
The ground-truth spatial modes $\bm{\phi}_j$ are drawn independently and normalized to unit norm.

Across experiments, we control the following generation parameters:
the \ac{SNR}, the number of true modes $m$, the eigenvalue magnitudes (we use a common value by setting $\rho_j = \rho$ for all $j$ so that damping is controlled by the single scalar $\rho$), the minimal phase gap $\Delta\theta$, and the amplitude heterogeneity $\kappa_b$.
The minimal phase gap is defined as
\begin{equation}
\Delta\theta = \min_{j\neq k} \min\bigl(|\theta_j - \theta_k|,\, 2\pi - |\theta_j - \theta_k|\bigr).
\label{eq:delta_theta}
\end{equation}
Amplitude heterogeneity is quantified by
\begin{equation}
\kappa_b = \frac{\max_j |b_j|}{\min_j |b_j|}.
\label{eq:kappa_b}
\end{equation}

Throughout this section we focus on a regime in which the spatial dimension
$D$ is small. The motivation is structural: when $D$ is small, the snapshot
matrices carry few independent measurements per time index, so delay
coordinates must absorb most of the identification burden. In this regime the
\ac{KV} structure analyzed in Sec.~\ref{sec:kv_structure} is the dominant form
of structure available, and the comparison of scoring rules becomes informative
about whether a method exploits that structure or treats the lifted state
generically. We make this point precise across the four single-parameter
sweeps reported below. This regime is also of substantial practical interest:
it arises in applications across radar and array signal processing,
communications, neuroscience, biomedical signal analysis, and localization.
Some of these applications fit naturally within \ac{DMD}, while others are
conventionally handled by closely related Hankel- and delay-based spectral
estimators (matrix pencil, ESPRIT, Prony-type), to which the proposed scores
apply directly.

We compare three of our scores, \ac{ESR}-energy
(Algorithm~\ref{alg:residual_selection_energy}), \ac{KV} deviation via nested
rank-1 DMD (Algorithm~\ref{alg:kv_nested_rank1}, denoted NestedDMD), and the
\ac{FEKVF} variant, against two baselines: \ac{STC}~\cite{Bronstein2022}
and \textcolor{blue}{a same-data, data-adaptive, linear $\bm U_M$-dictionary ResDMD-style residual (detailed in Appendix~\ref{app:resdmd_in_our_setting}), which is a reasonable empirical baseline that does not carry all theoretical guarantees presented in}~\cite{ColbrookTownsend2024}. To compare methods
independently of any hard-decision rule, we report the
\emph{precision--recall AUC} (PR-AUC) of the induced true-vs-spurious
ranking task, evaluated per trial and averaged across trials.

\subsection{Mode classification under low spatial dimension}
\label{subsec:mode_classification_low_D}

We first evaluate mode selection for $D=2$, with two independent measurements per time index. The working point is $N=200$,
$L=66$, $M=15$, $\rho=1.0$, $\Delta\theta=0.007$, $\kappa_b=2.5$,
\ac{SNR}$=0$\,dB; in each experiment we vary one of these parameters while
holding the rest fixed. Across the four figures of this subsection, the panels show results for $m=2$ (left), $m=3$ (center), and $m=5$
(right).

Figure~\ref{fig:pr_auc_snr} shows the PR-AUC of mode selection as a
function of the \ac{SNR}. Across the tested range, the two \ac{KV}-aware scores
(NestedDMD, \ac{FEKVF}) saturate at the lowest \ac{SNR}, with
\ac{ESR}-energy close behind and ResDMD reaching the same
plateau at slightly higher \ac{SNR}. \ac{STC} is the slowest to rise and is
consistently the weakest of the methods. Increasing $m$ shifts all transitions
toward higher \ac{SNR} while preserving the relative ordering of the methods.

\begin{figure*}[t]
  \centering
  \includegraphics[width=\linewidth]{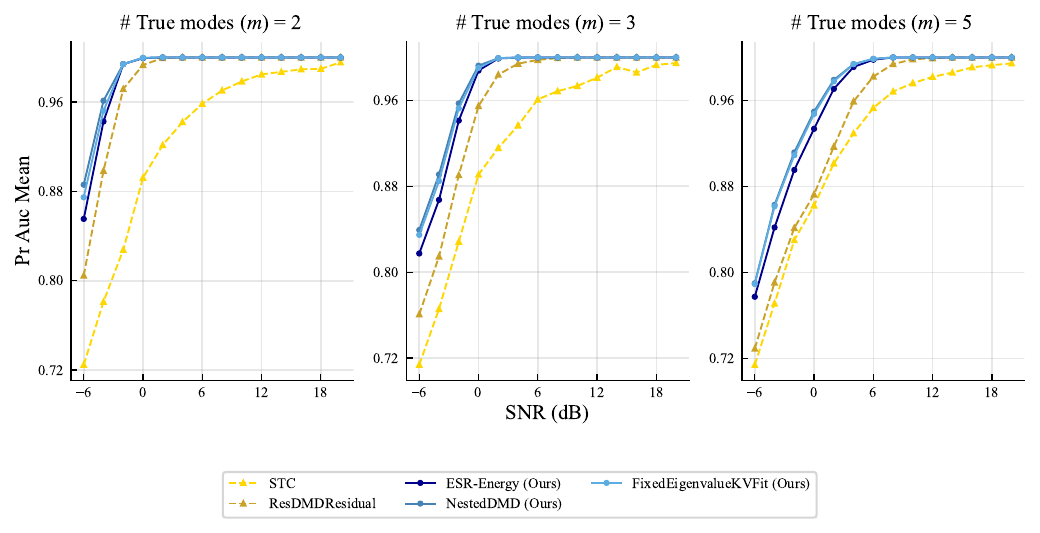}
   \caption{PR-AUC of mode classification vs.\ \ac{SNR} at $D=2$ under
   Gaussian noise. Panels show $m=2$ (left), $m=3$ (center), and $m=5$
   (right). Working point: $N=200$, $L=66$, $M=15$, $\rho=1.0$,
   $\Delta\theta=0.007$, $\kappa_b=2.5$.}
  \label{fig:pr_auc_snr}
\end{figure*}

Figure~\ref{fig:pr_auc_dtheta} reports the PR-AUC versus the minimal phase
separation $\Delta\theta$, sampled more densely around the Rayleigh limit,
where transitions are sharpest. The ordering of the methods matches
Fig.~\ref{fig:pr_auc_snr}: the \ac{KV}-aware scores saturate at the smallest
separations, \ac{ESR}-energy is close behind, and ResDMD
saturates only at larger separations. \ac{STC} plateaus near $0.9$ and
remains below the other methods within the tested range.

\begin{figure*}[t]
  \centering
  \includegraphics[width=\linewidth]{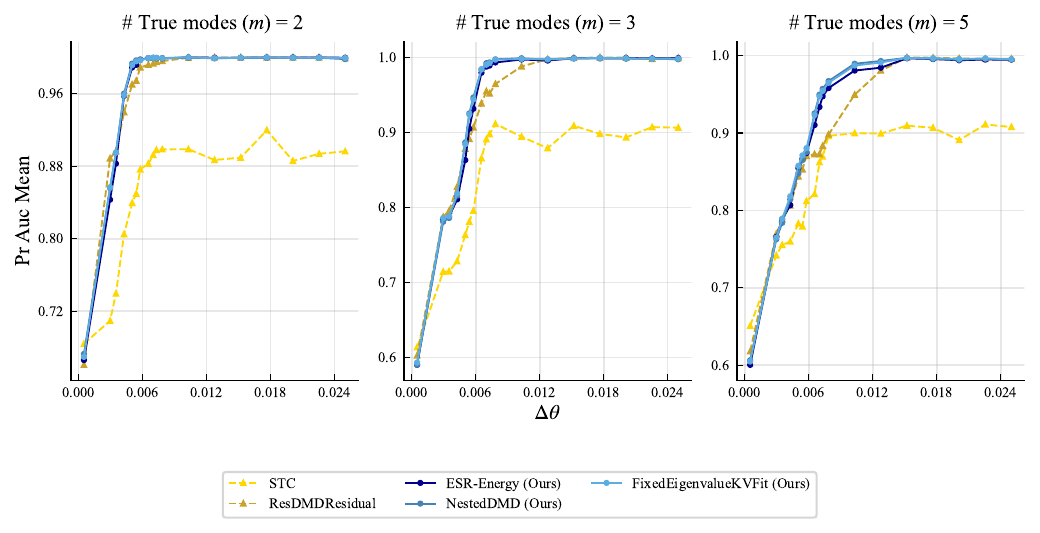}
   \caption{PR-AUC of mode classification vs.\ phase separation $\Delta\theta$
   (defined in \eqref{eq:delta_theta}) at $D=2$ under Gaussian noise.
   Panels show $m=2$ (left), $m=3$ (center), and $m=5$ (right).
   Working point matches Fig.~\ref{fig:pr_auc_snr}, with \ac{SNR} fixed at
   $0$\,dB;}
  \label{fig:pr_auc_dtheta}
\end{figure*}

Figure~\ref{fig:pr_auc_kappa} reports the PR-AUC versus the amplitude
heterogeneity $\kappa_b$. The \ac{KV}-aware scores achieve the best performance across the
tested range, followed by the \ac{ESR}-energy, ResDMD, and \ac{STC}. All methods
degrade as $\kappa_b$ grows.

\begin{figure*}[t]
  \centering
  \includegraphics[width=\linewidth]{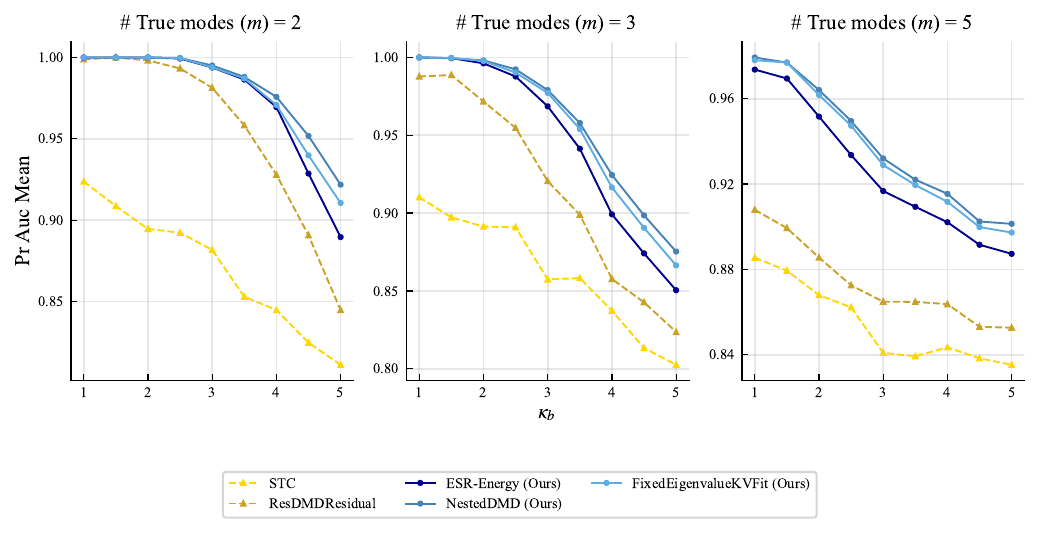}
   \caption{PR-AUC of mode classification vs.\ amplitude heterogeneity
   $\kappa_b$ (defined in \eqref{eq:kappa_b}) at $D=2$ under Gaussian noise.
   Panels show $m=2$ (left), $m=3$ (center), and $m=5$ (right).
   Working point matches Fig.~\ref{fig:pr_auc_snr};}
  \label{fig:pr_auc_kappa}
\end{figure*}

Figure~\ref{fig:pr_auc_rho} reports the PR-AUC as a function of the common
eigenvalue magnitude $\rho$. At small $\rho$, the baselines \ac{STC} and
ResDMD perform better than our scores. As $\rho \to 1$, the
\ac{KV}-aware scores and the \ac{ESR}-energy outperform the baselines.

\begin{figure*}[t]
  \centering
  \includegraphics[width=\linewidth]{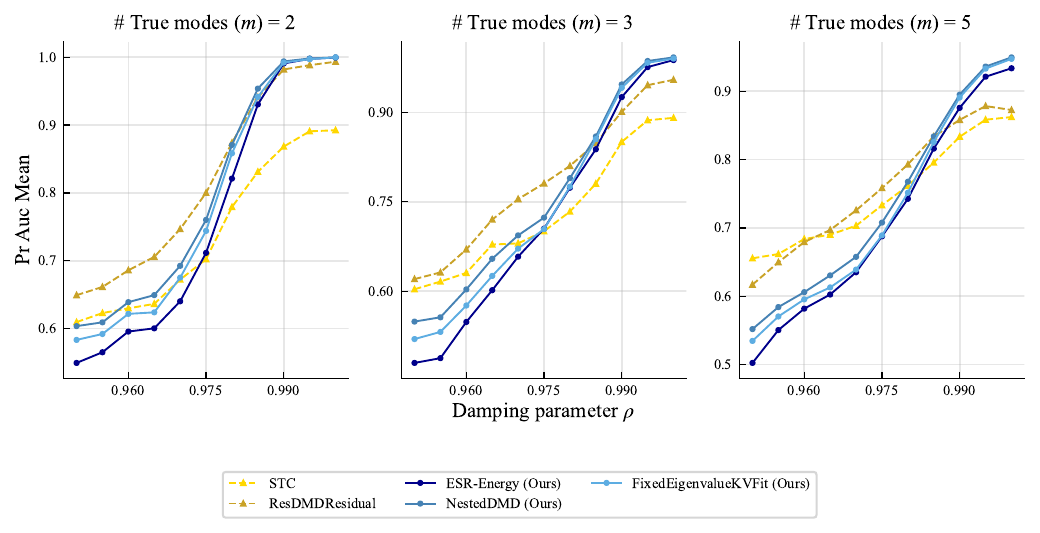}
   \caption{PR-AUC of mode classification vs.\ common eigenvalue magnitude
   $\rho$ at $D=2$ under Gaussian noise. Panels show $m=2$ (left), $m=3$
   (center), and $m=5$ (right). Working point matches
   Fig.~\ref{fig:pr_auc_snr};}
  \label{fig:pr_auc_rho}
\end{figure*}

Table~\ref{tab:pr_auc_unified} aggregates the mean PR-AUC of mode
selection across the four single-parameter sweeps and an additional
order-overestimation sweep $M$, per number of true modes $m$ and per spatial
dimension $D\in\{1,2,3\}$. The $D=2$ results correspond to the regime shown
in Figs.~\ref{fig:pr_auc_snr}--\ref{fig:pr_auc_kappa}; the $D=1$ and $D=3$
results summarize the same evaluation at smaller and larger spatial
dimension, with the corresponding figures available for viewing and
reproduction in the repository release~\cite{harris2026_dcdmd}. The cross-D results imply that the advantage of the \ac{KV}-aware scores over ResDMD is largest at $D=1$.
As $D$ increases and the delay-coordinates become less critical, the advantage of the \ac{KV}-aware scores becomes smaller.

\input{tables/pr_auc_unified.tex}

(Figs.~\ref{fig:pr_auc_snr}--\ref{fig:pr_auc_rho}), NestedDMD and
\ac{FEKVF} are at the top of the ordering, with \ac{ESR}-energy close
behind. The ResDMD residual is the strongest of the baselines: it tracks
\ac{ESR}-energy across most of the range and sits clearly above \ac{STC},
with one notable exception at low $\rho$, where both \ac{STC} and ResDMD
hold a small edge over our scores before the ordering reverses as $\rho \to
1$. The cross-$D$ trend in Table~\ref{tab:pr_auc_unified} is consistent
with the structural picture set out in
Sec.~\ref{subsec:resdmd_esr_companion}: the advantage of the
\ac{KV}-aware scores over ResDMD is largest at $D=1$, while at $D=3$
ResDMD is competitive across sweeps and even leads in the $\Delta\theta$
and $\rho$ scans.

\subsection{Concrete dynamical system: coupled masses on springs}
\label{subsec:masses_on_springs}

We complement the synthetic-mode evaluation of
Sec.~\ref{subsec:mode_classification_low_D} with a concrete
mass-spring-damper system from the family considered by
Bronstein~\emph{et al.}~\cite{Bronstein2022}: two masses, two springs, and
a wall-attached dashpot (Fig.~\ref{fig:two_mass_spring}). The system has
spatial dimension $D=2$ (one observation per mass) and produces $m=4$
underlying complex modes (two complex-conjugate oscillation pairs).

\input{figures/masses_on_springs.tex}

The equations of motion are
\begin{equation}
\bm M\,\ddot{\bm x} + \bm C\,\dot{\bm x} + \bm K\,\bm x = \bm 0,
\end{equation}
with
\[
\bm M = \begin{bmatrix} m_1 & 0 \\ 0 & m_2 \end{bmatrix},\quad
\bm C = \begin{bmatrix} c_1 & 0 \\ 0 & 0 \end{bmatrix},\quad
\bm K = \begin{bmatrix} k_1+k_2 & -k_2 \\ -k_2 & k_2 \end{bmatrix},
\]
producing weakly damped oscillatory modes whose eigenvalues lie just
inside the unit circle. The physical parameters $m_1, m_2, k_1, k_2, c_1$
determine the effective signal parameters $\rho$, $\Delta\theta$, and
$\kappa_b$ of the resulting modes; we therefore sweep over those effective
parameters (and \ac{SNR}) around a working point chosen to match the
simulated system, rather than over the physical parameters, which would
not directly inform the score-comparison question.

Figure~\ref{fig:masses_xparam_scan} reports the resulting multi-parameter
scans. On the \ac{SNR} and $\kappa_b$ axes, NestedDMD and \ac{FEKVF} outperform the baselines, with \ac{ESR}-energy close behind. ResDMD is weaker than ESR and stronger than \ac{STC}.
On $\Delta\theta$, all residual-based methods reach the same plateau quickly, with
On $\rho$, \ac{STC} and ResDMD perform better than our scores at small
$\rho$, with the ordering reversing as $\rho\to 1$. This is structurally
natural: when damping is strong, the true modes decay rapidly and the
effective support of the Vandermonde sequence $\bm{v}_L(\widehat{\lambda}_j)$
is short and only the early lags carry appreciable energy, and the \ac{KV}
template becomes a less discriminative criterion across the full lag window.

\begin{figure*}[t]
  \centering
  \includegraphics[width=\linewidth]{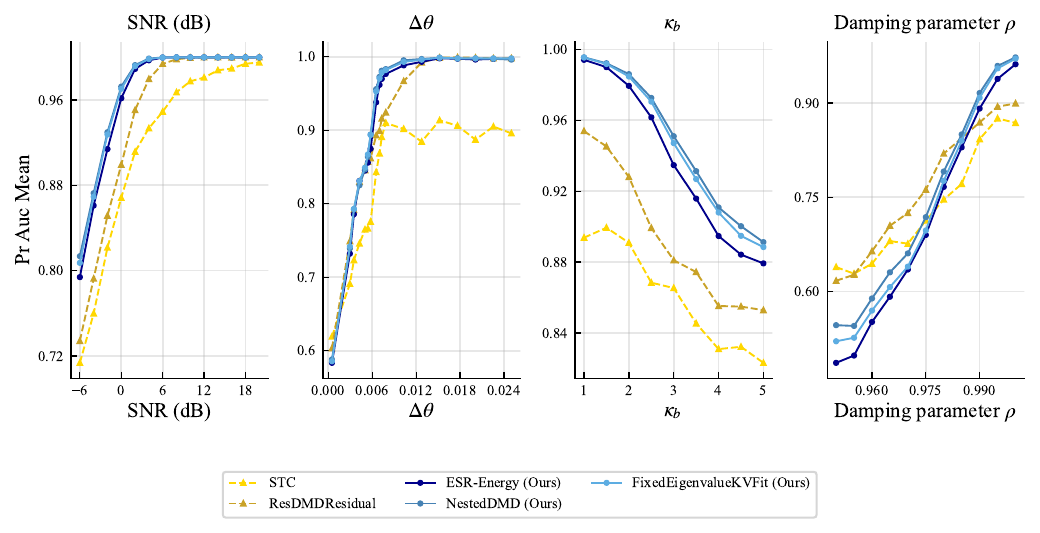}
   \caption{PR-AUC of mode classification for the coupled-mass-on-springs
   system with two masses ($D=2$, $m=4$), under Gaussian noise. Panels
   report sweeps over \ac{SNR}, phase separation $\Delta\theta$, amplitude
   heterogeneity $\kappa_b$, and common eigenvalue magnitude $\rho$. The
   setup follows Bronstein~\emph{et al.}~\cite{Bronstein2022}.}
  \label{fig:masses_xparam_scan}
\end{figure*}

\subsection{Spurious eigenvalue behavior under delay-coordinates}
\label{subsec:spurious_eigenvalue_behavior}

To test the effect of the embedding length $L$ on spurious eigenvalues, we fix $m = 3$ and $M = 15$ and run 500 trials.
For each $L$, we pool the magnitudes $|\widehat{\lambda}_j|$ of all spurious eigenvalues across all trials and compute the empirical CDF.

Figure~\ref{fig:spur_eigs_L} shows that the distribution of spurious eigenvalue magnitudes shifts upward with $L$, increasingly concentrating near the unit circle. At the same time, the pooled distributions retain a non-negligible lower tail for all tested $L$, indicating substantial variability even when typical magnitudes are large.

\begin{figure}[t]
  \centering
  \includegraphics[width=\columnwidth]{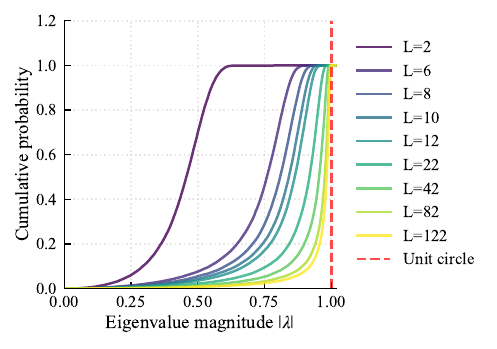}
    \caption{Spurious eigenvalue magnitude statistics versus embedding length $L$.
    For each $L$, we pool the magnitudes $|\widehat{\lambda}|$ of all spurious eigenvalues from all Monte--Carlo trials and plot the resulting empirical CDF (one curve per $L$). Overall, spurious eigenvalue magnitudes shift upward with increasing $L$, but the pooled distributions retain a non-negligible lower tail.}
      \label{fig:spur_eigs_L}
\end{figure}

This is the empirical observation highlighted in Sec.~\ref{subsec:mode_norm_analysis}: as $L$ increases, the bulk of spurious
eigenvalue magnitudes shifts toward the unit circle, while a substantial
lower tail persists. Consequently, even at large $L$, eigenvalue magnitude
alone cannot reliably separate true and spurious components.

%% file: tables/pr_auc_unified.tex
\begin{table*}[t]
    \centering
    \scriptsize
    \setlength{\tabcolsep}{2pt}
    \begin{tabular}{llcccccccccccccccc}
    \hline
     &  & \multicolumn{5}{c}{$m = 2$} & \multicolumn{5}{c}{$m = 3$} & \multicolumn{5}{c}{$m = 5$} &  \\
    $D$ & Method & SNR & $\Delta\theta$ & $r$ & $\kappa_b$ & $M$ & SNR & $\Delta\theta$ & $r$ & $\kappa_b$ & $M$ & SNR & $\Delta\theta$ & $r$ & $\kappa_b$ & $M$ & \# Wins \\
    \hline
    1 & STC & 0.931 & 0.864 & 0.753 & 0.859 & 0.832 & 0.912 & 0.849 & \textbf{0.733} & 0.812 & 0.771 & 0.881 & 0.849 & \textbf{0.757} & 0.809 & 0.750 & 2 \\
     & ResDMDResidual & 0.912 & 0.897 & 0.669 & 0.786 & 0.779 & 0.819 & 0.847 & 0.628 & 0.665 & 0.527 & 0.735 & 0.816 & 0.627 & 0.642 & 0.466 & 0 \\
     & ESR-Energy & 0.964 & 0.935 & 0.734 & 0.910 & 0.969 & 0.931 & 0.902 & 0.681 & 0.826 & 0.812 & 0.875 & 0.885 & 0.681 & 0.813 & 0.777 & 0 \\
     & NestedDMD & \textbf{0.973} & \textbf{0.940} & \textbf{0.775} & \textbf{0.935} & \textbf{0.980} & \textbf{0.941} & \textbf{0.912} & 0.721 & \textbf{0.851} & \textbf{0.861} & \textbf{0.891} & \textbf{0.895} & 0.719 & \textbf{0.835} & \textbf{0.820} & 13 \\
     & FixedEigenvalueKVFit & 0.967 & 0.937 & 0.736 & 0.917 & 0.969 & 0.937 & 0.910 & 0.689 & 0.841 & 0.837 & 0.887 & 0.894 & 0.693 & 0.830 & 0.806 & 0 \\
    \hline
    2 & STC & 0.930 & 0.864 & 0.738 & 0.871 & 0.839 & 0.927 & 0.855 & 0.730 & 0.863 & 0.828 & 0.923 & 0.862 & 0.748 & 0.854 & 0.793 & 0 \\
     & ResDMDResidual & 0.982 & \textbf{0.970} & \textbf{0.821} & 0.959 & 0.988 & 0.963 & 0.943 & \textbf{0.785} & 0.918 & 0.938 & 0.940 & 0.924 & \textbf{0.762} & 0.873 & 0.832 & 4 \\
     & ESR-Energy & 0.990 & 0.969 & 0.763 & 0.978 & \textbf{0.999} & 0.977 & 0.946 & 0.725 & 0.949 & 0.983 & 0.963 & 0.933 & 0.713 & 0.926 & 0.926 & 1 \\
     & NestedDMD & \textbf{0.992} & \textbf{0.970} & 0.797 & \textbf{0.984} & 0.998 & \textbf{0.982} & \textbf{0.948} & 0.758 & \textbf{0.961} & \textbf{0.989} & \textbf{0.969} & \textbf{0.937} & 0.737 & \textbf{0.938} & \textbf{0.947} & 11 \\
     & FixedEigenvalueKVFit & 0.991 & \textbf{0.970} & 0.784 & 0.981 & 0.998 & 0.980 & \textbf{0.948} & 0.742 & 0.957 & 0.987 & 0.968 & \textbf{0.937} & 0.725 & 0.936 & 0.942 & 3 \\
    \hline
    3 & STC & 0.938 & 0.870 & 0.747 & 0.879 & 0.853 & 0.933 & 0.857 & 0.737 & 0.877 & 0.835 & 0.930 & 0.869 & 0.749 & 0.872 & 0.813 & 0 \\
     & ResDMDResidual & 0.995 & \textbf{0.980} & \textbf{0.871} & 0.991 & \textbf{1.000} & 0.986 & \textbf{0.960} & \textbf{0.849} & 0.970 & 0.993 & 0.973 & \textbf{0.951} & \textbf{0.818} & 0.948 & 0.961 & 7 \\
     & ESR-Energy & 0.996 & 0.977 & 0.773 & 0.993 & \textbf{1.000} & 0.989 & 0.954 & 0.742 & 0.977 & 0.998 & 0.979 & 0.946 & 0.726 & 0.961 & 0.979 & 1 \\
     & NestedDMD & \textbf{0.997} & 0.977 & 0.813 & \textbf{0.996} & \textbf{1.000} & \textbf{0.992} & 0.955 & 0.779 & \textbf{0.986} & \textbf{0.999} & \textbf{0.983} & 0.946 & 0.749 & \textbf{0.971} & \textbf{0.990} & 9 \\
     & FixedEigenvalueKVFit & \textbf{0.997} & 0.978 & 0.807 & 0.995 & \textbf{1.000} & \textbf{0.992} & 0.955 & 0.773 & 0.984 & \textbf{0.999} & \textbf{0.983} & 0.947 & 0.748 & 0.969 & 0.989 & 5 \\
    \hline
    \end{tabular}
    \caption{Normalized PR-AUC for Gaussian-noise DC experiments across one-dimensional parameter sweeps at $\mathrm{SNR}=0\,\mathrm{dB}$, $\Delta\theta=0.007$, $r=1.0$, $\kappa_b=2.5$, $M=15$ (except $M$ sweep, in which $M$ is varied over $[10,40]$).}
    \label{tab:pr_auc_unified}
    \end{table*}

%% file: figures/masses_on_springs.tex
\begin{figure}[t]
\centering

\begin{tikzpicture}[
    mass/.style={
        draw,
        thick,
        minimum width=1.4cm,
        minimum height=1.2cm,
        fill=white
    },
    spring/.style={
        thick,
        decorate,
        decoration={
            coil,
            aspect=0.42,
            segment length=4pt,
            amplitude=4pt
        }
    },
    disp/.style={
        -{Latex[length=2mm]},
        thick
    }
]

\def\groundy{0}
\def\massCenterY{1.0}
\def\springAtY{1.15}
\def\damperAtY{0.55}
\def\damperH{0.13}
\def\wheelY{0.20}
\def\wheelR{0.20}

\draw[thick] (0,0) -- (0,2.0);
\fill[pattern=north east lines] (-0.35,0) rectangle (0,2.0);

\draw[thick] (0,\groundy) -- (5.8,\groundy);
\fill[pattern=north east lines] (0,-0.25) rectangle (5.8,\groundy);

\node[mass] (m1) at (2.4,\massCenterY) {$m_1$};
\node[mass] (m2) at (4.7,\massCenterY) {$m_2$};

\foreach \x in {2.1, 2.7} {
    \fill (\x,\wheelY) circle (\wheelR);
}
\foreach \x in {4.4, 5.0} {
    \fill (\x,\wheelY) circle (\wheelR);
}

\draw[spring] (0,\springAtY) -- (1.7,\springAtY);
\draw[spring] (3.1,\springAtY) -- (4.0,\springAtY);

\node[above] at (0.85,1.30) {$k_1$};
\node[above] at (3.55,1.30) {$k_2$};

\def\damperCylRightX{1.00}
\def\damperPistonX{0.75}
\draw[thick] (0,\damperAtY+\damperH) -- (\damperCylRightX,\damperAtY+\damperH);
\draw[thick] (0,\damperAtY-\damperH) -- (\damperCylRightX,\damperAtY-\damperH);
\draw[thick] (0,\damperAtY-\damperH) -- (0,\damperAtY+\damperH);
\draw[thick] (\damperPistonX,\damperAtY+\damperH-0.01) -- (\damperPistonX,\damperAtY-\damperH+0.01);
\draw[thick] (\damperPistonX,\damperAtY) -- (1.7,\damperAtY);
\node[below] at (0.55,\damperAtY-\damperH) {$c_1$};

\draw[disp] (2.0,1.85) -- (2.8,1.85);
\node[above] at (2.4,1.85) {$x_1(t)$};

\draw[disp] (4.3,1.85) -- (5.1,1.85);
\node[above] at (4.7,1.85) {$x_2(t)$};

\draw[thick] (2.0,1.70) -- (2.0,2.00);
\draw[thick] (4.3,1.70) -- (4.3,2.00);

\end{tikzpicture}

\caption{Two-mass spring system, comprised of two masses $m_1$ and $m_2$, connected via two springs ($k_1, k_2$) and damped by a dashpot ($c_1$) between the wall and $m_1$. Each real oscillation corresponds to two complex oscillations, and hence the total order is $m=4$.}
\label{fig:two_mass_spring}

\end{figure}

%% file: sections/discussion.tex
\section{Conclusions}
\label{sec:discussion}

A central finding of this work concerns the internal organization that delay coordinates impose on the computed modes, and the role it plays in mode selection. Through an operator-theoretic analysis of the delay embedding via the block-companion least-squares propagator, we show that every computed mode (true or spurious) inherits a Kronecker--Vandermonde (KV) structure across delays. True modes are distinguished not by the presence of this structure, but by the degree to which they conform to it. The deviation from the ideal KV form is governed precisely by the geometric residual associated with the signal-subspace viewpoint, the estimated-subspace residual (ESR). Mode-internal KV structure is therefore the primary discriminative signal in scenarios where delay coordinates are essential, and the ESR is the geometric quantity that controls it.

An additional conceptual message of our analysis concerns the relationship between \ac{DMD} and the unconstrained least-squares (LS) problem used to approximate temporal evolution. By exploiting the tight connection between truncation in \ac{DMD} and rank-constrained LS formulations, we show that truncation manifests geometrically as a deviation of the computed modes from the ideal \ac{KV} structure induced by delay-coordinates. Moreover, by linking this structural deviation to signal-subspace geometry through the \ac{ESR}, we demonstrate that it is not uniform across modes: signal-consistent (true) modes adhere more strongly to the \ac{KV} structure, while spurious modes exhibit systematically larger departures. This perspective clarifies how finite-rank approximation reshapes the mode geometry and provides a unified interpretation of residual- and structure-based selection criteria.
In addition, it reveals that truncation in \ac{DMD} is not merely a numerical device, but a structural operation that reshapes the companion-induced organization of the modes in a mode-dependent manner.

A practical consideration is sensitivity to the hyper-parameters $L$ and $M$.
The embedding length $L$ should be selected for modeling and spectral
estimation, not for detection. Empirically, our methods are effective from
modest $L$ values and do not deteriorate as $L$ grows. 
The truncation rank $M$ is best interpreted through the overestimation gap $M-m$: while $M$ must exceed
$m$ to allow post hoc mode selection, overly large $M$ shrinks residual energy
for all modes and can reduce the discriminative power of the \ac{ESR}. 
This sensitivity suggests that richer geometric descriptors, such as the distribution of mode energy across singular directions, may provide more stable information than an aggregated residual summary alone.
In contrast, the \ac{KV}-deviation criteria largely mitigate the dependence
on $M$, remaining reliable even under aggressive overestimation.

Several future directions appear promising. First, it would be valuable to characterize
how true and spurious modes distribute energy across singular directions and to
develop selection rules that exploit distributional differences directly.
Second, we observe that spurious spectral statistics vary systematically with
the embedding length $L$; explaining the mechanism behind this dependence would
deepen our understanding of how delay-coordinates reshape the spurious
spectrum. Finally, although the \ac{ESR} and \ac{KV} deviations are conceptually linked through
the companion eigenrelation (Theorem~\ref{thm:residual_companion_identity}),
developing a unified theory that quantitatively links residual geometry, \ac{KV} conformity, and spectral statistics under delay embedding remains an important open direction.

%% file: sections/data_availability.tex
\section*{Data Availability Statement}
The data that support the findings of this study are available within the article. The data were generated via numerical simulations, and the Python code used to generate the data and reproduce the figures is available on GitHub at \url{https://github.com/YoavHarris/geometric-dc-dmd-order-detection}. A citable archival release is available via Zenodo \cite{harris2026_dcdmd}.

%% file: appendices/wide_regime.tex
\section[Relations between D, L and N]{Relations between \texorpdfstring{$D$}{D}, \texorpdfstring{$L$}{L} and \texorpdfstring{$N$}{N}}
\label{app:wide_matrix}

This appendix revisits several identities from the main text in the
\emph{wide} delay-embedded setting, where the snapshot matrix
$\bm{X}_0 \in \mathbb{C}^{DL\times (N-L)}$ satisfies
\begin{equation}
  DL < N - L .
\end{equation}
In this regime, $\bm{X}_0$ has more columns than rows.  Throughout we
assume that $\bm{X}_0$ has full row rank,
\begin{equation}
  \operatorname{rank}(\bm{X}_0)=DL,
\end{equation}
which is a generic outcome under sufficiently rich excitation or
additive noise.

Full row rank implies that $\bm{X}_0\bm{X}_0^{\mathrm{H}}$ is invertible and that
the Moore--Penrose pseudoinverse admits the form
\begin{equation}
  \bm{X}_0^\dagger \;=\; \bm{X}_0^{\mathrm{H}}\,(\bm{X}_0\bm{X}_0^{\mathrm{H}})^{-1}.
\end{equation}
Let $\mathcal U=\operatorname{col}(\bm X_0)$ and denote by
$\bm P_{\mathcal U}$ the orthogonal projector onto $\mathcal U$.  Since
$\operatorname{rank}(\bm X_0)=DL$, we have $\mathcal U=\mathbb{C}^{DL}$
and therefore
\begin{equation}
  \bm P_{\mathcal U}=\bm I_{DL}.
\end{equation}
Consequently, the least-squares propagator
\begin{equation}
  \bm{A}_{\mathrm{MP}} \;=\; \bm{X}_1\bm{X}_0^\dagger
\end{equation}
is uniquely defined.

This identity has immediate implications for residual-based diagnostics.
Because $\mathcal U=\mathbb{C}^{DL}$, there is no component outside
$\mathcal U$; in particular, for an exact DMD mode
$\widehat{\bm{\phi}}^{\,e}$ the decomposition
\begin{equation}
  \widehat{\bm{\phi}}^{\,e}
  \;=\;
  \bm{P}_{\mathcal U_M}\widehat{\bm{\phi}}^{\,e}
  +(\bm{P}_{\mathcal U}-\bm{P}_{\mathcal U_M})\widehat{\bm{\phi}}^{\,e}
  +(\bm{I}-\bm{P}_{\mathcal U})\widehat{\bm{\phi}}^{\,e}
\end{equation}
reduces to
\begin{equation}
  \widehat{\bm{\phi}}^{\,e}
  \;=\;
  \bm{P}_{\mathcal U_M}\widehat{\bm{\phi}}^{\,e}
  +(\bm{I}-\bm{P}_{\mathcal U_M})\widehat{\bm{\phi}}^{\,e},
\end{equation}
because $(\bm{I}-\bm{P}_{\mathcal U})\widehat{\bm{\phi}}^{\,e}=\bm 0$.
Thus, residual-type scores relative to $\mathcal U_M$ quantify only the
truncation effect $(\bm{I}-\bm{P}_{\mathcal U_M})\widehat{\bm{\phi}}^{\,e}$,
which shrinks monotonically as $M$ increases toward $\min(DL,N-L)=DL$.
In other words, the wide full-row-rank regime is an extreme case in which
$\bm P_{\mathcal U}=\bm I_{DL}$ and residual-type scores
relative to $\mathcal U_M$ reflect only the truncation term
$(\bm I-\bm P_{\mathcal U_M})\widehat{\bm\phi}^{\,e}$, which shrinks to zero as
$M\uparrow DL$.  We therefore focus on the more common tall regime
$N-L<DL$, which is typical in high-dimensional measurements (and even
more so under delay coordinates) and is also more challenging from an
estimation standpoint since fewer delay vectors are available relative to the
number of unknown parameters.

The same simplification carries over to the block-companion structure.
In the delay-embedded construction, the shift relations in the first
$L-1$ block rows hold exactly by definition, so the
block-companion matrix $\bm C_L$ is feasible for the constrained
least-squares problem that enforces these shift equations.  In the wide
full-row-rank regime, the identity $\bm A_{\mathrm{MP}}=\bm C_L\bm P_{\mathcal U}$
simplifies (since $\bm P_{\mathcal U}=\bm I_{DL}$) to
\begin{equation}
  \bm A_{\mathrm{MP}}=\bm C_L .
\end{equation}
Hence the Moore--Penrose solution already acts as an exact shift on all
$DL$ coordinates, and no additional projection through $\bm P_{\mathcal U}$
takes place.

%% file: appendices/choose_m_singular_vectors.tex
\section[Choosing the signal-aligned subspace]{Choosing the signal-aligned subspace}
\label{app:choose_m_vecs_out_of_M}

The best $m$-dimensional proxy to $\mathcal S$ \emph{within} $\mathcal U_M$
is not necessarily given by the span of the leading $m$ singular vectors.
More generally, $\mathcal U_m$ may be defined as the span of the $m$
singular vectors (among the leading $M$) that minimize the projector
distance (cf.\ \eqref{eq:eta_bound}) to $\mathcal S$:
\begin{equation}
\mathcal U_m \coloneqq \operatorname{col}(\bm U_{\mathcal I_m}),
\label{eq:Um_def}
\end{equation}
where $\bm U_{\mathcal I}$ denotes the matrix consisting of the singular
vectors $\{\bm u_i\}_{i\in\mathcal I}$ for an index set
$\mathcal I\subset\{1,\dots,M\}$. The optimal index set is defined by
\begin{equation}
\mathcal I_m \in
\arg\min_{\mathcal I\subset\{1,\dots,M\},\,|\mathcal I|=m}
\bigl\|\bm P_{\mathcal S}
-
\bm P_{\operatorname{col}(\bm U_{\mathcal I})}
\bigr\|_2,
\label{eq:Im_def}
\end{equation}
where $\bm P_{\operatorname{col}(\bm U_{\mathcal I})}$ denotes the
orthogonal projector onto $\operatorname{col}(\bm U_{\mathcal I})$.

This construction is conceptual, since $\mathcal S$, and hence the optimal
index set $\mathcal I_m$, are unknown in practice. It serves to formalize
the ideal signal-consistent subspace against which computed modes are
compared.

%% file: appendices/bound_on_subspace_deviation.tex
\section{Bound on signal--subspace deviation under noise}
\label{app:signal_subspace_identifiability}

This appendix specializes classical subspace perturbation theory to the
delay--embedded trajectory matrix and quantifies the deviation between the
true signal subspace
\[
\mathcal{S}=\operatorname{col}(\bm{S}_0)
\]
and its empirical estimate $\operatorname{col}(\bm{U}_m)$, where $\bm{U}_m$
contains the leading $m$ left singular vectors of
\[
\bm{X}_0=\bm{S}_0+\bm{E}.
\]

\subsection*{Model and notation}

The clean signal follows the exponential model
\[
\bm{s}_k=\sum_{j=1}^{m} b_j\,\bm{\phi}_j\,\lambda_j^{k},
\qquad
\lambda_j=\rho_j\,\mathrm{e}^{\mathrm{i}\theta_j},
\]
with spatial vectors $\bm{\phi}_j\in\mathbb{C}^D$. We assume a bounded-radius
regime $0<\rho_{\min}\le \rho_j\le \rho_{\max}\le 1$.

Let $N$ denote the number of samples $\{\bm{s}_k\}_{k=0}^{N-1}$.
After delay embedding with $L$ delays, the number of usable delay vectors is
\[
N_{\mathrm{eff}}:=N-L+1,
\]
and the standard DC--DMD snapshot matrices satisfy
\[
\bm{X}_0,\bm{X}_1\in\mathbb{C}^{DL\times (N-L)}.
\]
In particular, the noiseless part of $\bm{X}_0$ admits the factorization
\[
\bm{S}_0
=
\widetilde{\bm{\Phi}}_L\,\bm{B}\,\bm{V}_{N-L}^{\top},
\]
where the lifted spatial factor is
\[
\widetilde{\bm{\Phi}}_L
:=
\bm{V}_L\odot\bm{\Phi}
=
\bigl[
\bm{v}_L(\lambda_1)\otimes\bm{\phi}_1\ \cdots\
\bm{v}_L(\lambda_m)\otimes\bm{\phi}_m
\bigr]
\in\mathbb{C}^{DL\times m},
\]
with base spatial factor $\bm{\Phi}=[\bm{\phi}_1\cdots\bm{\phi}_m]$. The amplitudes are given by $\bm{B}=\operatorname{diag}(b_1,\ldots,b_m)$, and the Vandermonde matrix is defined generally as

\[
\begin{aligned}
\bm{V}_{n}
&=
\bigl[\bm{v}_{n}(\lambda_1)\ \cdots\ \bm{v}_{n}(\lambda_m)\bigr]
\in\mathbb{C}^{n\times m}, \\
\bm{v}_{n}(\lambda)
&=
[\,1,\lambda,\ldots,\lambda^{n-1}\,]^{\top}.
\end{aligned}
\]

Let $\theta_{\max}$ denote the largest principal angle between
$\operatorname{col}(\bm{U}_m)$ and $\mathcal{S}$.

\subsection*{Assumptions}

We assume:
\begin{itemize}
\item[(A1)] \emph{Phase separation}:
\[
\Delta_{\theta}
:=
\min_{j\neq j'}
\min\bigl(|\theta_j-\theta_{j'}|,\,2\pi-|\theta_j-\theta_{j'}|\bigr)>0.
\]
\item[(A2)] \emph{Nondegeneracy}:
$\sigma_m(\bm{\Phi})>0$ and $\min_j |b_j|>0$.
\item[(A3)] $L$ and $N-L$ exceed a constant multiple of $1/\Delta_{\theta}$.
\end{itemize}

\subsection*{Signal--subspace deviation}

\begin{lemma}[Signal--subspace deviation]
\label{lem:signal_subspace_dev}
If $\|\bm{E}\|_2$ is sufficiently small, then
\[
\|
\bm{P}_{\operatorname{col}(\bm{U}_m)}
-
\bm{P}_{\mathcal{S}}
\|_2
=
\sin\theta_{\max}
\le
\eta,
\]
where
\[
\eta
=
\frac{\|\bm{E}\|_2}{
\sigma_m(\bm{\Phi})\,\min_j |b_j|\,
\alpha_L\,\alpha_{N-L}
-
\|\bm{E}\|_2}.
\]
Here $\alpha_n$ denotes any lower bound on $\sigma_m(\bm{V}_n)$ valid under
assumptions (A1)--(A3).
\end{lemma}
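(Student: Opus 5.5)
This is a direct application of Wedin's $\sin\Theta$ theorem to the perturbation $\bm X_0=\bm S_0+\bm E$.

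\emph{Step 1: setting up Wedin's theorem.} Since $\operatorname{rank}(\bm S_0)=m$, the singular values satisfy $\sigma_{m+1}(\bm S_0)=0$, and $\mathcal S$ is exactly the span of the leading $m$ left singular vectors of $\bm S_0$. For the perturbed matrix, Wedin's theorem with spectral gap $\delta=\sigma_m(\bm X_0)-\sigma_{m+1}(\bm S_0)=\sigma_m(\bm X_0)$ gives
\[
\sin\theta_{\max}\le \frac{\max\{\|\bm E\bm V_{S}\|_2,\|\bm E^{\mathrm H}\bm U_{S}\|_2\}}{\sigma_m(\bm X_0)}\le\frac{\|\bm E\|_2}{\sigma_m(\bm X_0)}.
\]
Weyl's inequality then yields $\sigma_m(\bm X_0)\ge\sigma_m(\bm S_0)-\|\bm E\|_2$. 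The hypothesis that $\|\bm E\|_2$ is "sufficiently small" is used precisely to keep this denominator positive. The identity $\|\bm P_{\operatorname{col}(\bm U_m)}-\bm P_{\mathcal S}\|_2=\sin\theta_{\max}$ holds because the two subspaces have equal dimension $m$; this is the standard fact already noted after Assumption~\ref{ass:subspace_distance_bound}.

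\emph{Step 2: lower-bounding $\sigma_m(\bm S_0)$.} This is the main step. We use the factorization $\bm S_0=\widetilde{\bm\Phi}_L\bm B\bm V_{N-L}^\top$, in which every factor has rank $m$, together with the inequality $\sigma_m(\bm{XYZ})\ge\sigma_m(\bm X)\sigma_m(\bm Y)\sigma_m(\bm Z)$ for full-rank factors of inner dimension $m$. This gives
\[
\sigma_m(\bm S_0)\ge \sigma_m(\widetilde{\bm\Phi}_L)\,\min_j|b_j|\,\sigma_m(\bm V_{N-L}).
\]
Assumptions (A1) and (A3) give $\sigma_m(\bm V_{N-L})\ge\alpha_{N-L}$.

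\emph{Step 3: the Khatri--Rao factor.} It remains to show $\sigma_m(\bm V_L\odot\bm\Phi)\ge\sigma_m(\bm\Phi)\,\alpha_L$, and this is the step I expect to be the real obstacle. The plan is to compute the Gram matrix:
\[
(\bm V_L\odot\bm\Phi)^{\mathrm H}(\bm V_L\odot\bm\Phi)=(\bm V_L^{\mathrm H}\bm V_L)\circ(\bm\Phi^{\mathrm H}\bm\Phi),
\]
a Hadamard product of two positive semidefinite matrices. Schur's inequality gives $\lambda_{\min}(\bm G\circ\bm H)\ge\lambda_{\min}(\bm G)\min_i H_{ii}$. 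Applying it with $\bm H=\bm\Phi^{\mathrm H}\bm\Phi$ gives $\lambda_{\min}\ge\sigma_m(\bm V_L)^2\min_j\|\bm\phi_j\|^2$, and with the roles swapped it gives $\lambda_{\min}\ge\sigma_m(\bm\Phi)^2\min_\ell\|\bm v_L(\lambda_\ell)\|^2$. Neither of these is literally $\sigma_m(\bm\Phi)^2\alpha_L^2$. Taking the bound with $\sigma_m(\bm V_L)^2$ and using the unit normalization $\|\bm\phi_j\|=1\ge\sigma_m(\bm\Phi)$, the first one yields
\[
\sigma_m(\widetilde{\bm\Phi}_L)\ge\sigma_m(\bm V_L)\ge\alpha_L\ge\sigma_m(\bm\Phi)\,\alpha_L,
\]
which is exactly the stated form. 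So I would invoke the Schur/Oppenheim-type bound with the unit-norm spatial modes, and I would flag that without normalization the constant changes to $\min_j\|\bm\phi_j\|$.

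\emph{Step 4: combining.} Substituting Steps 2 and 3 into Step 1 gives
\[
\sin\theta_{\max}\le\frac{\|\bm E\|_2}{\sigma_m(\bm\Phi)\min_j|b_j|\,\alpha_L\alpha_{N-L}-\|\bm E\|_2}=\eta,
\]
which is the claimed bound.
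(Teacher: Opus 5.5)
Your proposal is correct and follows the paper's own sketch step for step. Both use Wedin's $\sin\Theta$ theorem with Weyl's inequality, then the product bound $\sigma_m(\bm S_0)\ge\sigma_m(\widetilde{\bm\Phi}_L)\min_j|b_j|\,\sigma_m(\bm V_{N-L})$, then the Khatri--Rao bound. The paper merely asserts that last bound, whereas you justify it via Schur's inequality on $(\bm V_L^{\mathrm H}\bm V_L)\circ(\bm\Phi^{\mathrm H}\bm\Phi)$.

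The unit-norm normalization you invoke in Step 3 is not needed, because $\min_j\|\bm\phi_j\|_2=\min_j\|\bm\Phi\bm e_j\|_2\ge\sigma_m(\bm\Phi)$ always holds. So your bound $\sigma_m(\widetilde{\bm\Phi}_L)\ge\min_j\|\bm\phi_j\|_2\,\sigma_m(\bm V_L)$ implies the stated form in general. It is in fact stronger, since it stays positive even when $D<m$, where the paper's factor $\sigma_m(\bm\Phi)$ vanishes.
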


\begin{proof}[Sketch]
Wedin's $\sin\Theta$ theorem gives
\[
\sin\theta_{\max}
\le
\frac{\|\bm{E}\|_2}{
\sigma_m(\bm{S}_0)-\|\bm{E}\|_2}.
\]
Using the factorization of $\bm{S}_0$,
\[
\sigma_m(\bm{S}_0)
\ge
\sigma_m(\widetilde{\bm{\Phi}}_L)\,
\min_j|b_j|\,
\sigma_m(\bm{V}_{N-L}).
\]
Since $\widetilde{\bm{\Phi}}_L=\bm{V}_L\odot\bm{\Phi}$,
\[
\sigma_m(\widetilde{\bm{\Phi}}_L)
\ge
\sigma_m(\bm{\Phi})\,\sigma_m(\bm{V}_L).
\]
Combining the inequalities and using $\sigma_m(\bm{V}_n)\ge\alpha_n$ yields the
result.
\end{proof}

\subsection*{Vandermonde conditioning}

Under~(A1), classical results on Vandermonde conditioning imply that
\[
\sigma_m(\bm{V}_n)\ge \alpha_n,
\]
for some $\alpha_n>0$ depending on the minimal phase gap $\Delta_\theta$, the
order $m$, and the bounded-radius regime $[\rho_{\min},\rho_{\max}]$.
Smaller phase gaps worsen the conditioning and reduce $\alpha_n$.
When $\rho_{\max}=1$ (unit-modulus nodes), one may take a bound of the form
\[
\alpha_n \ge c_{\mathrm{sep}}(\Delta_\theta,m)\sqrt{n},
\]
whereas for $\rho_{\max}<1$ the dependence on $n$ need not exhibit the same
$\sqrt{n}$ growth and may level off, with constants controlled additionally by
$[\rho_{\min},\rho_{\max}]$.
See, e.g.,~\cite{Candes2014,Kunis2016}.

\subsection*{Discussion}

The deviation $\sin\theta_{\max}$ scales linearly with $\|\bm{E}\|_2$ and is
controlled by the conditioning of the lifted signal factors. The dependence on
$L$ and $N-L$ enters through the Vandermonde conditioning constants $\alpha_L$
and $\alpha_{N-L}$; when $\rho_{\max}=1$ this recovers the familiar
$\sqrt{L\,(N-L)}$ scaling, while for $\rho_{\max}<1$ the improvement with $n$ may
saturate. The result provides a quantitative justification for
Assumption~\ref{ass:subspace_distance_bound} in the main text.

%% file: appendices/concise_spatiotemporal_coupling.tex
\section{Delay-coordinates yield Kronecker--Vandermonde spatiotemporal modes}
\label{app:data_side_kv}

This appendix gives a short data-side derivation of the structural result
identified in Ref.~\onlinecite{Bronstein2022}: a signal that is a sum of spatially
modulated exponentials in time becomes, after delay-coordinates, a sum of
spatiotemporally coupled \ac{KV} templates, each still
modulating a temporal exponential across the embedded columns.
The argument requires no operator assumptions and applies only to the true
components, providing a concise complement to the operator-side derivation in
Sec.~\ref{sec:kv_structure}, which extends the \ac{KV} characterization to
\emph{all} modes (true and spurious).

\subsection*{Signal model}

We adopt the exponential signal model from~\eqref{eq:explicit_signal_structure}.
For an embedding length $L$, collect $L$ successive samples into
\[
\widetilde{\bm{s}}_k
=
\begin{bmatrix}
\bm{s}_k\\
\bm{s}_{k+1}\\
\vdots\\
\bm{s}_{k+L-1}
\end{bmatrix}
\in\mathbb{C}^{DL}.
\]

\subsection*{Embedding of a single exponential component}

For the $j$th component $b_j\,\bm{\phi}_j\,\lambda_j^{k}$,
\[
\widetilde{\bm{s}}_{k}^{(j)}
=
b_j
\begin{bmatrix}
\bm{\phi}_j\,\lambda_j^{k}\\
\bm{\phi}_j\,\lambda_j^{k+1}\\
\vdots\\
\bm{\phi}_j\,\lambda_j^{k+L-1}
\end{bmatrix}
=
b_j\,
\bigl(\bm{v}_L(\lambda_j)\otimes\bm{\phi}_j\bigr)\,\lambda_j^{k},
\]
where $\bm{v}_L(\lambda_j)=[1,\lambda_j,\ldots,\lambda_j^{L-1}]^{\!\top}$.
Delay-coordinates therefore fuse the spatial vector $\bm{\phi}_j$ and temporal
exponential $\lambda_j^{k}$ into the \ac{KV} template
$\bm{v}_L(\lambda_j)\otimes\bm{\phi}_j$.

\subsection*{\ac{KV} decomposition of the embedded trajectory matrix}

Let $N_{\mathrm{eff}}=-L+1$ be the number of embedded columns.  The trajectory
matrix becomes
\[
\widetilde{\bm{S}}
=
\sum_{j=1}^{m}
b_j\,
\bigl(\bm{v}_L(\lambda_j)\otimes\bm{\phi}_j\bigr)
\bigl[1,\lambda_j,\ldots,\lambda_j^{N_{\mathrm{eff}}-1}\bigr].
\]
Writing $\bm{v}_{N_{\mathrm{eff}}}(\lambda_j)$ for the temporal Vandermonde
vector, this is
\[
\widetilde{\bm{S}}
=
\sum_{j=1}^{m}
b_j\,
\bigl(\bm{v}_L(\lambda_j)\otimes\bm{\phi}_j\bigr)
\bm{v}_{N_{\mathrm{eff}}}(\lambda_j)^{\!\top}.
\]

\subsection*{Conclusion}

Under delay-coordinates, every true component becomes a \ac{KV} term whose
spatial and temporal factors are inseparably coupled.  These are the ideal \ac{KV} templates assumed in Bronstein \emph{et al.}\cite{Bronstein2022}.
The operator-side analysis of Sec.~\ref{sec:kv_structure} generalizes this
perspective by showing that \emph{all} projected-\ac{DMD} modes,
including spurious ones, satisfy an approximate \ac{KV} relation whose deviation is
measured by their estimated-subspace residual.

%% file: appendices/block_companion.tex
\section{Block-Companion Structure and Least-Squares Propagators}
\label{app:companion_proofs}

We consider the least-squares problem
\[
\min_{\bm{A}\in\mathbb{C}^{DL\times DL}}
\|\bm{A}\bm{X}_0 - \bm{X}_1\|_F^2,
\]
with delay-embedded snapshot matrices
$\bm{X}_0,\bm{X}_1\in\mathbb{C}^{DL\times N_{\mathrm{eff}}}$.
Among all minimizers we single out the Moore–Penrose (minimum-norm) minimizer
\[
\bm{A}_{\mathrm{MP}} := \bm{X}_1\bm{X}_0^\dagger,
\]
which will be related to the reduced propagator via an orthogonal projection.

\subsection{Existence of a block-companion minimizer}
\label{app:existence_block_companion}

Partition $\bm{A}$ into $D\times D$ block rows
\[
\bm{A}=
\begin{bmatrix}
\bm{A}^{(0)}\\[-1mm]\vdots\\[-1mm]\bm{A}^{(L-1)}
\end{bmatrix},
\qquad
\bm{A}^{(\ell)}\in\mathbb{C}^{D\times DL}.
\]
The objective splits as
\[
\|\bm{A}\bm{X}_0-\bm{X}_1\|_F^2
=\sum_{\ell=0}^{L-1}\|\bm{A}^{(\ell)}\bm{X}_0-\bm{X}_1^{(\ell)}\|_F^2.
\]

In \ac{DCDMD},
\[
\bm{X}_1^{(\ell)}=\bm{S}_L^{(\ell)}\bm{X}_0,\qquad \ell=0,\ldots,L-2,
\]
for the block-shift matrix $\bm{S}_L$.
Thus setting $\bm{A}^{(\ell)}=\bm{S}_L^{(\ell)}$ for $\ell<L-1$ is already optimal.
Minimizing only the last block row yields some $\bm{B}$, giving the
block–companion minimizer
\[
\bm{C}_L=\bm{S}_L+(\bm{e}_L\!\otimes\!\bm{I}_D)\bm{B}.
\]

Minimizing only the last block row amounts to
\[
\min_{\bm B\in\mathbb C^{D\times DL}}\ \|\bm B\bm X_0-\bm X_1^{(L-1)}\|_F^2.
\]
Throughout, we fix
\[
\bm B:=\bm X_1^{(L-1)}\bm X_0^\dagger,
\]
the Moore--Penrose least-squares predictor.

\subsection[Eigenvectors of the block-companion matrix are KV]{Eigenvectors of the block-companion matrix are \ac{KV}}
\label{app:kv_eigenvectors}

Let
\[
\bm{C}_L=
\begin{bmatrix}
\bm{0} & \bm{I}_D & & \\
 & \ddots & \ddots & \\
 & & \bm{0} & \bm{I}_D\\
\bm{B}_1 & \bm{B}_2 & \cdots & \bm{B}_L
\end{bmatrix}.
\]
Write an eigenvector as
\[
\bm{v}=[\,\bm{v}^{(0)};\ldots;\bm{v}^{(L-1)}\,].
\]
For $\ell=0,\ldots,L-2$,
\[
\bm{v}^{(\ell+1)}=\lambda\,\bm{v}^{(\ell)},
\]
so $\bm{v}^{(\ell)}=\lambda^\ell\bm{v}^{(0)}$.
The final block row imposes
\[
(\bm{B}_1+\lambda\bm{B}_2+\cdots+\lambda^{L-1}\bm{B}_L)\bm{v}^{(0)}
=\lambda^L\bm{v}^{(0)}.
\]
Hence every eigenvector has \ac{KV} form
\[
\bm{v}=\bm{v}_L(\lambda)\otimes\bm{v}^{(0)},
\qquad
\bm{v}_L(\lambda)=[\,1,\lambda,\ldots,\lambda^{L-1}\,]^{\!\top}.
\]

\subsection[The Moore-Penrose minimizer satisfies X1 X0+ = CL PU]{The Moore–Penrose minimizer satisfies \texorpdfstring{$\bm{X}_1\bm{X}_0^\dagger = \bm{C}_L \bm{P}_{\mathcal U}$}{X1 X0+ = CL PU}}
\label{app:moore_penrose_projection}

Let $\mathcal{U}=\operatorname{col}(\bm{X}_0)$ and $\bm{P}_{\mathcal U}$ its projector.

\begin{lemma}[Moore--Penrose projection identity]
\label{lem:moore_penrose_projection_core}
Let $\mathcal U=\operatorname{col}(\bm X_0)$ and let
\[
\bm P_{\mathcal U}:=\bm X_0\bm X_0^\dagger
\]
be the orthogonal projector onto $\mathcal U$.
Consider the least-squares problem
\[
\min_{\bm A\in\mathbb C^{DL\times DL}}\;\|\bm A\bm X_0-\bm X_1\|_F^2,
\]
and define the Moore--Penrose (minimum-norm) minimizer
\[
\bm A_{\mathrm{MP}}:=\bm X_1\bm X_0^\dagger.
\]
Then, for any minimizer $\bm A_\star$,
\[
\bm A_{\mathrm{MP}}=\bm A_\star\,\bm P_{\mathcal U}.
\]
\end{lemma}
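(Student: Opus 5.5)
We will rely on the normal-equation characterization of least-squares minimizers. A matrix $\bm A_\star$ minimizes $\|\bm A\bm X_0-\bm X_1\|_F^2$ if and only if the residual $\bm A_\star\bm X_0-\bm X_1$ is orthogonal to every direction $\bm\Delta\bm X_0$. Equivalently,
\begin{equation*}
(\bm A_\star\bm X_0-\bm X_1)\bm X_0^{\mathrm H}=\bm 0 .
\end{equation*}
This follows by expanding $\|(\bm A_\star+\bm\Delta)\bm X_0-\bm X_1\|_F^2$ and requiring that the cross term vanish for all $\bm\Delta$, exactly as in the row-wise splitting used in Appendix~\ref{app:existence_block_companion}. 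Because $\bm A_{\mathrm{MP}}$ is itself a minimizer, we get $\bm A_\star\bm X_0\bm X_0^{\mathrm H}=\bm X_1\bm X_0^{\mathrm H}=\bm A_{\mathrm{MP}}\bm X_0\bm X_0^{\mathrm H}$ for every minimizer. Since $\bm X_0^{\mathrm H}$ has the same null space as $\bm X_0\bm X_0^{\mathrm H}$, this gives the cleaner identity $\bm A_\star\bm X_0=\bm A_{\mathrm{MP}}\bm X_0$. In words, all minimizers agree on $\mathcal U=\operatorname{col}(\bm X_0)$.

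Next we right-multiply this identity by $\bm X_0^\dagger$ and use $\bm P_{\mathcal U}=\bm X_0\bm X_0^\dagger$. This yields $\bm A_\star\bm P_{\mathcal U}=\bm X_1\bm X_0^\dagger\bm X_0\bm X_0^\dagger$. The Penrose identity $\bm X_0^\dagger\bm X_0\bm X_0^\dagger=\bm X_0^\dagger$ reduces the right-hand side to $\bm X_1\bm X_0^\dagger=\bm A_{\mathrm{MP}}$, which is the claimed equality.

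The one point that needs care is cancelling $\bm X_0\bm X_0^{\mathrm H}$ in the first step. To avoid the null-space argument, we can instead right-multiply $\bm A_\star\bm X_0\bm X_0^{\mathrm H}=\bm X_1\bm X_0^{\mathrm H}$ by $(\bm X_0^{\mathrm H})^\dagger\bm X_0^\dagger=(\bm X_0\bm X_0^{\mathrm H})^\dagger$. Using $\bm X_0^{\mathrm H}(\bm X_0^{\mathrm H})^\dagger=\bm X_0^\dagger\bm X_0$, the left side becomes $\bm A_\star\bm X_0\bm X_0^\dagger\bm X_0\bm X_0^\dagger=\bm A_\star\bm P_{\mathcal U}$. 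Using $\bm X_0^{\mathrm H}(\bm X_0\bm X_0^{\mathrm H})^\dagger=\bm X_0^\dagger$, the right side becomes $\bm X_1\bm X_0^\dagger$. This gives the result directly.

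As a corollary, applying the lemma with $\bm A_\star=\bm C_L$ from Lemma~\ref{lem:companion_minimizer} gives $\bm A_{\mathrm{MP}}=\bm C_L\bm P_{\mathcal U}$. This is the identity that will then feed the proof of Proposition~\ref{prop:AM_is_compression_of_CL}.
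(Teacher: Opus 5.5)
Your proof is correct and follows essentially the same route as the paper: both start from the normal equations $(\bm A_\star\bm X_0-\bm X_1)\bm X_0^{\mathrm H}=\bm 0$ and finish with a right-multiplication that removes the component outside $\mathcal U$. The only difference is that the paper quotes the standard solution-set parametrization $\bm A_\star=\bm X_1\bm X_0^\dagger+\bm Z(\bm I-\bm P_{\mathcal U})$ and multiplies by $\bm P_{\mathcal U}$, whereas you derive the equivalent fact $\bm A_\star\bm X_0=\bm A_{\mathrm{MP}}\bm X_0$ directly (via $\ker(\bm X_0\bm X_0^{\mathrm H})=\ker(\bm X_0^{\mathrm H})$, or alternatively via the pseudoinverse identities), which makes your version slightly more self-contained.
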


\begin{proof}
Any minimizer $\bm A_\star$ satisfies the normal equations
\[
(\bm A_\star\bm X_0-\bm X_1)\bm X_0^{\!H}=\bm 0,
\]
equivalently,
\[
\bm A_\star\bm X_0\bm X_0^{\!H}=\bm X_1\bm X_0^{\!H}.
\]
A standard characterization of the solution set of this Frobenius least-squares
problem is
\[
\bm A_\star=\bm X_1\bm X_0^\dagger+\bm Z(\bm I-\bm P_{\mathcal U}),
\qquad \bm Z\in\mathbb C^{DL\times DL}.
\]
Right-multiplying by $\bm P_{\mathcal U}$ and using
$\bm P_{\mathcal U}(\bm I-\bm P_{\mathcal U})=\bm 0$ yields
\[
\bm A_\star\bm P_{\mathcal U}
=(\bm X_1\bm X_0^\dagger+\bm Z(\bm I-\bm P_{\mathcal U}))\bm P_{\mathcal U}
=\bm X_1\bm X_0^\dagger
=\bm A_{\mathrm{MP}}.
\]
\end{proof}

\subsection[The reduced propagator is the orthogonal compression of CL]{The reduced propagator is the orthogonal compression of \texorpdfstring{$\bm{C}_L$}{CL}}
\label{app:reduced-propagator-is-compression}

Let $\bm{X}_0=\bm{U}\bm{\Sigma}\bm{V}^{\!H}$ be the SVD and let $\bm{U}_M$ be
its first $M$ columns.
Define the reduced propagator
\[
\bm{A}_M:=\bm{U}_M^{\!H}\,\bm{A}_{\mathrm{MP}}\,\bm{U}_M.
\]

\begin{lemma}
\label{lem:compression_of_CL}
With $\bm{C}_L$ from Subsection~\ref{app:existence_block_companion},
\[
\bm{A}_M=\bm{U}_M^{\!H}\bm{C}_L\bm{U}_M.
\]
\end{lemma}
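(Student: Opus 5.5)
The plan is to chain the Moore--Penrose projection identity (Lemma~\ref{lem:moore_penrose_projection_core}) with the fact that $\bm C_L$ is itself a least-squares minimizer, and then use that the columns of $\bm U_M$ lie in $\mathcal U=\operatorname{col}(\bm X_0)$. I will also check that the appendix definition $\bm A_M=\bm U_M^{\!H}\bm A_{\mathrm{MP}}\bm U_M$ agrees with the main-text definition~\eqref{eq:reduced_propagator}, so that the lemma indeed yields Proposition~\ref{prop:AM_is_compression_of_CL}.

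\emph{Step 1: $\bm C_L$ is a minimizer.} By the block-row splitting in Subsection~\ref{app:existence_block_companion}, the objective is a sum of independent block-row terms. For $\ell\le L-2$, the shift rows give $\bm S_L^{(\ell)}\bm X_0=\bm X_1^{(\ell)}$ exactly, because consecutive delay vectors overlap in $L-1$ blocks. These terms therefore vanish, which is their global minimum. The last block row $\bm B=\bm X_1^{(L-1)}\bm X_0^\dagger$ minimizes the remaining term. Hence $\bm C_L$ minimizes~\eqref{eq:least_squares_objective}, which also establishes Lemma~\ref{lem:companion_minimizer}.

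\emph{Step 2: apply the projection identity.} Taking $\bm A_\star=\bm C_L$ in Lemma~\ref{lem:moore_penrose_projection_core} gives $\bm A_{\mathrm{MP}}=\bm C_L\bm P_{\mathcal U}$. Therefore
\[
\bm U_M^{\!H}\bm A_{\mathrm{MP}}\bm U_M=\bm U_M^{\!H}\bm C_L\bm P_{\mathcal U}\bm U_M .
\]

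\emph{Step 3: remove the projector.} I assume the truncation keeps only nonzero singular values, i.e.\ $\sigma_M>0$; this is implicit anyway, since $\bm\Sigma_M^{-1}$ appears in~\eqref{eq:reduced_propagator}. Under this assumption each column of $\bm U_M$ equals $\bm X_0\bm v_i/\sigma_i\in\mathcal U$, so $\bm P_{\mathcal U}\bm U_M=\bm U_M$. This yields $\bm A_M=\bm U_M^{\!H}\bm C_L\bm U_M$.

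\emph{Step 4: consistency with~\eqref{eq:reduced_propagator}.} Write $\bm X_0^\dagger=\bm V\bm\Sigma^{\dagger}\bm U^{\!H}$. Then $\bm X_0^\dagger\bm U_M=\bm V_M\bm\Sigma_M^{-1}$, and so
\[
\bm U_M^{\!H}\bm X_1\bm X_0^\dagger\bm U_M=\bm U_M^{\!H}\bm X_1\bm V_M\bm\Sigma_M^{-1}.
\]
This is exactly the main-text $\bm A_M$.

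The only delicate point is Step 1. One must verify the exact shift identity $\bm X_1^{(\ell)}=\bm X_0^{(\ell+1)}$ from the Hankel construction~\eqref{eq:x0_x1}, and confirm that this holds for the noisy data $\bm X$ and not only for the clean signal $\bm S$. It does, because the embedding is applied directly to the observed samples $\bm x_k$. The remaining steps are routine linear algebra.
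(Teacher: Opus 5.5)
Your proof is correct and follows essentially the same route as the paper: apply Lemma~\ref{lem:moore_penrose_projection_core} with $\bm A_\star=\bm C_L$ to obtain $\bm A_{\mathrm{MP}}=\bm C_L\bm P_{\mathcal U}$, then use $\bm P_{\mathcal U}\bm U_M=\bm U_M$. Your extra checks make explicit what the paper leaves implicit or handles elsewhere: that $\bm C_L$ is a genuine minimizer, that $\sigma_M>0$ is needed to guarantee $\mathcal U_M\subset\mathcal U$, and that the appendix definition of $\bm A_M$ agrees with \eqref{eq:reduced_propagator} via $\bm X_0^\dagger\bm U_M=\bm V_M\bm\Sigma_M^{-1}$.
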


\begin{proof}
From Lemma~\ref{lem:moore_penrose_projection_core},
$\bm{A}_{\mathrm{MP}}=\bm{C}_L\bm{P}_{\mathcal U}$.
Since $\mathcal U_M\subset\mathcal U$, we have $\bm P_{\mathcal U}\bm U_M=\bm U_M$, hence
\[
\bm{A}_M
=\bm{U}_M^{\!H}\bm{A}_{\mathrm{MP}}\bm{U}_M
=\bm{U}_M^{\!H}\bm{C}_L\bm{U}_M.
\]
\end{proof}

\subsection[Proof of the residual-companion identity theorem]{Proof of Theorem~\ref{thm:residual_companion_identity}}
\label{app:proof_residual_companion_identity}

\begin{proof}
Fix an eigenpair $(\widehat\lambda_j,\bm w_j)$ of $\bm A_M$ from
\eqref{eq:reduced_propagator}, and let $\widehat{\bm\phi}^{\,p}_j$ and
$\widehat{\bm\phi}^{\,e}_j$ be the associated projected and exact \ac{DMD} modes from
\eqref{eq:proj_mode_def} and \eqref{eq:exact_mode_def}.
Recall that $\widehat{\bm\phi}^{\,p}_j\in\mathcal U_M$, so
$\bm P_{\mathcal U_M}\widehat{\bm\phi}^{\,p}_j=\widehat{\bm\phi}^{\,p}_j$.

We begin by projecting the exact mode onto $\mathcal U_M$ and identifying the
reduced propagator within the expression. Using \eqref{eq:exact_mode_def},
\begin{equation}
\bm P_{\mathcal U_M}\widehat{\bm\phi}^{\,e}_j
=\bm U_M\underbrace{\bigl(\bm U_M^{\!H}\bm X_1\bm V_M\bm\Sigma_M^{-1}\bigr)}_{\bm A_M}\bm w_j \\
=\bm U_M\,\bm A_M\,\bm w_j,
\end{equation}

We now invoke the eigen-relation $\bm A_M\bm w_j=\widehat\lambda_j\bm w_j$ and the
definition of projected \ac{DMD} modes \eqref{eq:proj_mode_def} to obtain

\begin{equation}
\bm P_{\mathcal U_M}\widehat{\bm\phi}^{\,e}_j
=\bm U_M\bm A_M\bm w_j \\
=\widehat\lambda_j\,\bm U_M\bm w_j \\
=\widehat\lambda_j\,\widehat{\bm\phi}^{\,p}_j.
\end{equation}

Hence,
\begin{equation}\label{eq:residual_as_difference}
\bm r_{\mathcal U_M}\!\bigl(\widehat{\bm\phi}^{\,e}_j\bigr)
=(\bm I-\bm P_{\mathcal U_M})\widehat{\bm\phi}^{\,e}_j
=\widehat{\bm\phi}^{\,e}_j-\hat\lambda_j\,\widehat{\bm\phi}^{\,p}_j.
\end{equation}

Next, by Lemma~\ref{lem:moore_penrose_projection_core} and the existence of the
block-companion minimizer $\bm C_L$ from
Subsection~\ref{app:existence_block_companion}, we have
\[
\bm A_{\mathrm{MP}}=\bm X_1\bm X_0^\dagger=\bm C_L\,\bm P_{\mathcal U},
\qquad
\mathcal U=\operatorname{col}(\bm X_0).
\]
Since $\widehat{\bm\phi}^{\,p}_j=\bm U_M\bm w_j\in\mathcal U$, it follows that
$\bm P_{\mathcal U}\widehat{\bm\phi}^{\,p}_j=\widehat{\bm\phi}^{\,p}_j$ and thus
\[
\bm C_L\widehat{\bm\phi}^{\,p}_j
=\bm A_{\mathrm{MP}}\widehat{\bm\phi}^{\,p}_j
=\bm X_1\bm X_0^\dagger \bm U_M\bm w_j.
\]
Using the (untruncated) SVD $\bm X_0=\bm U\bm\Sigma\bm V^{\!H}$ and the fact that
$\bm U_M$ consists of the first $M$ columns of $\bm U$, we obtain
\[
\bm X_0^\dagger \bm U_M
=\bm V\bm\Sigma^{-1}\bm U^{\!H}\bm U_M
=\bm V_M\bm\Sigma_M^{-1},
\]
and therefore
\[
\bm C_L\widehat{\bm\phi}^{\,p}_j
=\bm X_1\bm V_M\bm\Sigma_M^{-1}\bm w_j
=\widehat{\bm\phi}^{\,e}_j
\]
by \eqref{eq:exact_mode_def}. Combining with \eqref{eq:residual_as_difference}
yields
\[
(\bm C_L-\hat\lambda_j\bm I)\widehat{\bm\phi}^{\,p}_j
=\widehat{\bm\phi}^{\,e}_j-\hat\lambda_j\,\widehat{\bm\phi}^{\,p}_j
=\bm r_{\mathcal U_M}\!\bigl(\widehat{\bm\phi}^{\,e}_j\bigr),
\]
which is \eqref{eq:main_identity_vector}. 
\end{proof}

%% file: appendices/fixed-eigenvalue-kv-fit.tex
\section{Fixed-Eigenvalue KV Fit (FEKVF)}
\label{app:fekvf}

The fixed-eigenvalue \ac{KV} fit (FEKVF) provides a lightweight alternative to the
nested rank-1 \ac{DMD} of Sec.~\ref{subsec:kv_nested_dmd}. After reshaping each
projected mode into a per-mode lag matrix
\[
\widehat{\bm\phi}^{\,p}_j
\;\mapsto\;
\widehat{\bm \Phi}_j
\;\coloneqq\;
\mathrm{reshape}\!\bigl(\widehat{\bm\phi}^{\,p}_j,\,D\times L\bigr)
\in\mathbb{C}^{D\times L},
\]
FEKVF evaluates how well $\widehat{\bm \Phi}_j$ conforms to a \ac{KV}
evolution that is consistent with its existing eigenvalue $\widehat{\lambda}_j$.
Unlike the nested fit, no inner DMD is performed; the eigenvalue is held fixed.

\subsection*{Closed-form residual}

For the eigenvalue $\widehat{\lambda}_j$, define the Vandermonde vector
\[
\bm a_j \;\coloneqq\; \bm v_L(\widehat{\lambda}_j)
=
[\,1,\widehat{\lambda}_j,\ldots,\widehat{\lambda}_j^{L-1}\,]^{\!\top}.
\]

Projecting $\widehat{\bm \Phi}_j$ onto the fixed-eigenvalue \ac{KV} cone amounts to the
rank-1 least-squares problem
\[
\min_{\bm u\in\mathbb C^D}\;
\bigl\|\widehat{\bm \Phi}_j-\bm u\,\bm a_j^{\!\top}\bigr\|_F^2,
\]
whose minimizer is
\[
\bm u_j^\star
\;=\;
\frac{\widehat{\bm \Phi}_j\,\bm a_j^{*}}{\|\bm a_j\|_2^2},
\qquad
\widehat{\bm \Phi}_j^{(\mathrm{KV})}
\;\coloneqq\;
\bm u_j^\star\,\bm a_j^{\!\top}.
\]
The corresponding normalized residual is
\[
\mathcal R^{(\mathrm{KV})}_j
\;\coloneqq\;
\frac{\|\widehat{\bm \Phi}_j-\widehat{\bm \Phi}_j^{(\mathrm{KV})}\|_F^2}{DL}
\]

We form a scalar \ac{KV}-conformity score
\[
\zeta_j := \log\!\big(\mathcal R^{(\mathrm{KV})}_j+\varepsilon\big),
\]
where $\varepsilon>0$ prevents singularities. Smaller values indicate closer
agreement with the KV pattern implied by $\widehat{\lambda}_j$.

\subsection*{Computation}

All operations in FEKVF reduce to one $D\times L$ matrix-vector product and a
small number of scalar reductions per mode. In particular, no inner \ac{DMD} problem
is solved and the eigenvalue is reused rather than re-estimated. As a result,
the per-mode effort is substantially smaller than for the nested rank-1 fit,
and the procedure applies independently to each mode and is easily implemented
in a batched fashion. A formal comparison of the asymptotic costs of FEKVF and
nested \ac{DMD} is given in Appendix~\ref{app:complexity}.

\begin{algorithm}
  \centering
  \setlength{\fboxsep}{6pt}
  \setlength{\fboxrule}{0.4pt}
  \fbox{%
    \begin{minipage}{0.95\linewidth}
      \begin{algorithmic}[1]
        \Require Mode matrices $\{\widehat{\bm \Phi}_j\in\mathbb{C}^{D\times L}\}_{j=1}^M$,
          eigenvalues $\{\widehat{\lambda}_j\}$, small $\varepsilon>0$
        \Ensure Per-mode scores $\{\zeta_j\}_{j=1}^M$ (smaller is more likely true)
        \For{$j=1,\ldots,M$}
          \State $\bm a \gets [\,1,\widehat{\lambda}_j,\ldots,\widehat{\lambda}_j^{L-1}\,]^{\!\top}$
          \State $\bm w \gets \bm a^{*}$
          \State $\bm u^\star \gets \widehat{\bm \Phi}_j\,\bm w / \|\bm a\|_2^2$
          \State $\widehat{\bm \Phi}^{(\mathrm{KV})}_j \gets \bm u^\star\,\bm a^{\!\top}$
          \State $\mathcal R^{(\mathrm{KV})}_j \gets \|\widehat{\bm \Phi}_j-\widehat{\bm \Phi}^{(\mathrm{KV})}_j\|_F^2/(DL)$
          \State $\zeta_j \gets \log(\mathcal R^{(\mathrm{KV})}_j+\varepsilon)$
        \EndFor
      \end{algorithmic}
    \end{minipage}}
  \caption[Fixed-Eigenvalue KV Fit (FEKVF)]{Fixed-Eigenvalue \ac{KV} Fit (FEKVF)}
  \label{alg:fekvf}
\end{algorithm}

\noindent\textbf{Remarks.}
FEKVF shares the same conceptual interpretation as the nested rank-1 \ac{DMD}: both measure global \ac{KV} conformity of the reconstructed mode trajectories.
The key difference is that FEKVF holds the eigenvalue fixed and performs a
single rank-1 projection, producing a scalar \ac{KV}-conformity score $\zeta_j$ that enters the downstream selection step shared by all mode-selection scores in this paper (Sec.~\ref{subsec:score_to_decision}).

%% file: appendices/complexity.tex
\section{Computational Complexity of the Procedures}
\label{app:complexity}

This appendix summarizes the computational costs associated with the
procedures used throughout the paper.  
Throughout, $D$ denotes the spatial dimension, $L$ the embedding length,
$DL$ the lifted dimension, and $M>m$ the truncation rank.

\subsection[DMD construction]{\ac{DMD} construction}
The delay-embedded snapshot matrices $\bm{X}_0,\bm{X}_1\in\mathbb{C}^{DL\times N}$
are processed by a rank-$M$ \ac{SVD},
\[
\bm{X}_0 \approx \bm{U}_M \bm{\Sigma}_M \bm{V}_M^\top.
\]
Computing this truncated \ac{SVD} has cost
\[
O(DL\,N\,M),
\]
which is the dominant cost in the full pipeline when $N$ is not extremely small.
The projected modes $\widehat{\bm{\phi}}^{\,p}_j$ require a single matrix–vector
multiplication per mode and cost $O(DL\,M)$ in total.  
Exact modes $\widehat{\bm{\phi}}^{\,e}_j=\bm{X}_1\bm{V}_M\bm{\Sigma}_M^{-1}e_j$
require an additional multiplication by $\bm{X}_1$, giving the same overall
$O(DL\,M)$ cost.

\subsection{Nested \ac{DMD} and block-companion construction}
The nested-\ac{DMD} procedure (Section~\ref{subsec:kv_nested_dmd}) applies a rank-1 \ac{DMD}
to each $D\times L$ mode matrix $\widehat{\bm\Phi}_j$.
The dominant operation per mode is a rank-1 \ac{SVD} of a $D\times L$ matrix, which costs $O(DL)$.
Running this for all $M$ modes, and computing each residual $\mathcal{R}^{(\mathrm{KV})}_j$
(rank-1 outer product and Frobenius-norm difference, also $O(DL)$ per mode), gives total cost
\[
O(DL\,M).
\].

\subsection{Fixed-eigenvalue \ac{KV} fit (FEKVF)}
FEKVF (Appendix~\ref{app:fekvf}) also costs $O(DL\,M)$ in total, but with a smaller
constant per mode: it replaces the rank-1 \ac{SVD} with a single matrix--vector product
$\widehat{\bm\Phi}_j\bm a_j^*\in\mathbb{C}^D$ ($O(DL)$), followed by scalar reductions
to evaluate $\mathcal{R}^{(\mathrm{KV})}_j$.
FEKVF is therefore cheaper than nested \ac{DMD} by the constant factor associated with
eliminating the inner \ac{SVD} per mode.

\subsection[ESR-score computation]{\ac{ESR}-score computation}
Each exact mode is a vector in $\mathbb{C}^{DL}$.
Using the orthogonal decomposition (Sec.~\ref{sec:mode_geometry})
\[
\widehat{\bm\phi}^{\,e}_j
\;=\;
\widehat{\lambda}_j\,\widehat{\bm\phi}^{\,p}_j
\;+\;
\bm r_{\mathcal U_M}\!\bigl(\widehat{\bm\phi}^{\,e}_j\bigr),
\qquad
\|\widehat{\bm\phi}^{\,p}_j\|_2=1,
\]
the \ac{ESR} score (residual energy) satisfies
\[
\mathcal R_j
\;=\;
\bigl\|\bm r_{\mathcal U_M}(\widehat{\bm\phi}^{\,e}_j)\bigr\|_2^2
\;=\;
\|\widehat{\bm\phi}^{\,e}_j\|_2^2 - |\widehat{\lambda}_j|^2.
\]
Thus no explicit projector applications are required, and evaluating all
$\mathcal R_j$ (and derived scalar scores such as $\zeta_j$) over $M$ modes costs
\[
O(DL\,M),
\]
with $O(M)$ additional memory.

\subsection{Score-to-decision step}
The downstream rules of Sec.~\ref{subsec:score_to_decision} operate on a per-mode scalar score $\zeta_j\in\mathbb R$. Thresholding is $O(M)$.

Clustering the $M$ scalar scores into $k=2$ groups (e.g., K-Means or a two-component GMM)
costs
\[
O(M)
\]
per iteration for K-Means, and similarly
\[
O(M)
\]
per EM iteration for a diagonal-covariance GMM. In all cases, with $k=2$, the score-to-decision cost is
negligible compared with the other steps of the pipeline.

%% file: appendices/resdmd_in_our_setting.tex
\section{ResDMD in the delay-coordinate setting}
\label{app:resdmd_in_our_setting}

\begingroup
This appendix presents the ResDMD residual~\cite{ColbrookTownsend2024}, its
instantiation to the delay-coordinate setting, and the implementation used in
the experiments. Throughout, $\bm X_0,\bm X_1\in\mathbb{C}^{DL\times N}$ are the
delay-embedded snapshot pair \eqref{eq:x0_x1}, and
$\bm U_M,\bm\Sigma_M,\bm V_M$ are the rank-$M$ truncated-\ac{SVD} matrices of
$\bm X_0$ \eqref{eq:svd_x0}.

\subsection{The ResDMD residual}
\label{app:resdmd_standard_form}

ResDMD assigns a score to a candidate eigenpair $(\lambda,g)$ of the infinite-dimensional
Koopman operator $\mathcal K$ by approximating its respective eigenresidual
$\|\mathcal K g-\lambda g\|^2_{L^2(\mu)}/\|g\|^2_{L^2(\mu)}$. Given a dictionary
$\bm\Psi=(\psi_1,\ldots,\psi_{M_\Psi})$, the candidate is
represented by its coordinate vector $\bm c\in\mathbb{C}^{M_\Psi}$ via
$g=\sum_i c_i\psi_i$. The eigenresidual is approximated by a finite-data proxy,
quadratic in $\bm c$, in which the $L^2(\mu)$ norms are replaced by quadrature
sums over the snapshots with weights $\bm W=\mathrm{diag}(w_1,\ldots,w_N)\succeq 0$.
Let $\bm\Psi_X,\bm\Psi_Y\in\mathbb{C}^{N\times M_\Psi}$ be the matrices whose rows are the dictionary
evaluated on the snapshot pairs, with Gramians
\begin{equation}
\label{eq:resdmd_gramians_standard}
\begin{aligned}
\bm G_{XX} &\coloneqq \bm\Psi_X^{\mathrm{H}}\bm W\bm\Psi_X, \\
\bm G_{YY} &\coloneqq \bm\Psi_Y^{\mathrm{H}}\bm W\bm\Psi_Y, \\
\bm G_{XY} &\coloneqq \bm\Psi_X^{\mathrm{H}}\bm W\bm\Psi_Y.
\end{aligned}
\end{equation}
The squared ResDMD residual is defined by:
\begin{equation}
\label{eq:resdmd_residual_standard}
\mathrm{res}^2(\lambda,\bm c)
\;=\;
\frac{\bm c^{\mathrm{H}}\bigl(\bm G_{YY}-\lambda\,\bm G_{XY}^{\mathrm{H}}-\lambda^{*}\,\bm G_{XY}+|\lambda|^2\bm G_{XX}\bigr)\bm c}
     {\bm c^{\mathrm{H}}\bm G_{XX}\bm c}\,,
\end{equation}
where $\lambda^{*}$ denotes the complex conjugate of $\lambda$. It was shown \cite{ColbrookTownsend2024} that ResDMD provides spectral guarantees: rigorous error control on the
Koopman pseudospectrum that avoids spectral pollution. Here it serves only as an
empirical baseline score on the true-vs-spurious ranking task of
Sec.~\ref{sec:numerical_experiments}. \textcolor{blue}{The \ac{ESR} of Sec.~\ref{subsec:resdmd_esr_companion} does not share these guarantees: it does not approximate the infinite-dimensional Koopman eigenresidual, and it does not inherit ResDMD's a posteriori, pollution-free spectral guarantees.}

\subsection{ResDMD residual in \ac{DCDMD} setting}
\label{app:resdmd_instantiated}

The instantiation of ResDMD residual in \ac{DCDMD} setting uses the linear dictionary $\bm\Psi=\bm U_M^{\mathrm H}$, corresponding to the $M$ data-driven coordinates, and sets the quadrature weights to $\bm W=\bm I$.
The dictionary matrices in \eqref{eq:resdmd_gramians_standard}
are therefore: $
\bm\Psi_X=\bm X_0^{\mathrm H}\bm U_M, \quad \bm\Psi_Y=\bm X_1^{\mathrm H}\bm U_M,
$
and the corresponding $M\times M$ Gramians are as follows:
\begin{equation}
\label{eq:resdmd_gramians_instantiated}
\begin{aligned}
\bm G_{XX} &= \bm U_M^{\mathrm H}\bm X_0\bm X_0^{\mathrm H}\bm U_M,\\
\bm G_{XY} &= \bm U_M^{\mathrm H}\bm X_0\bm X_1^{\mathrm H}\bm U_M,\\
\bm G_{YY} &= \bm U_M^{\mathrm H}\bm X_1\bm X_1^{\mathrm H}\bm U_M.
\end{aligned}
\end{equation}

The candidates $\bm c$ are obtained by diagonalizing the empirical (\ac{EDMD}~\cite{WilliamsKevrekidisRowley2015}) Koopman matrix $\bm K=\bm G_{XX}^{-1}\bm G_{XY}$
associated with the linear dictionary. Using 
\eqref{eq:resdmd_gramians_instantiated}, $\bm K$ coincides with the conjugate
transpose of the reduced \ac{DMD} propagator $\bm A_M$
\eqref{eq:reduced_propagator}:
\begin{equation}
\label{eq:resdmd_koopman_matrix}
\bm K=\bm\Sigma_M^{-1}\bm V_M^{\mathrm H}\bm X_1^{\mathrm H}\bm U_M=\bm A_M^{\mathrm H}.
\end{equation}
The coefficient vector $\bm c$ is therefore a left eigenvector of
$\bm A_M$, denoted $\bm q\in\mathbb{C}^{M}$, associated with the \ac{DMD}
eigenvalue $\widehat\lambda$, i.e. $\bm q^{\mathrm H}\bm A_M=\widehat\lambda\,\bm q^{\mathrm H}$.
\textcolor{blue}{Equivalently, $\bm q$ is an eigenvector of $\bm K=\bm A_M^{\mathrm H}$
with eigenvalue $\widehat\lambda^{*}$, since the eigenvalues of $\bm A_M^{\mathrm H}$
are the complex conjugates of those of $\bm A_M$. The candidate's Koopman eigenvalue is
thus $\widehat\lambda^{*}$, and it is this conjugate that enters the residual, while
the candidate is paired with the \ac{DMD} mode of eigenvalue $\widehat\lambda$.}

Substituting \eqref{eq:resdmd_gramians_instantiated} into \eqref{eq:resdmd_residual_standard} gives \eqref{eq:resdmd_residual_in_dcdmd}, restated here:
\begin{equation*}
\mathrm{res}^{2}(\widehat\lambda,\bm q)
=\frac{\bigl\|\bm X_1^{\mathrm H}\bm U_M \bm q-\textcolor{blue}{\widehat\lambda^{*}}\,\bm X_0^{\mathrm H}\bm U_M \bm q\bigr\|_2^{2}}
     {\bigl\|\bm X_0^{\mathrm H}\bm U_M \bm q\bigr\|_2^{2}}.
\end{equation*}

\subsection{Implementation details}
\label{app:resdmd_svd_reduction}

The score is computed as follows in our implementation
(Algorithm~\ref{alg:resdmd_baseline}; code in~\cite{harris2026_dcdmd}). Using
$\bm U_M^{\mathrm{H}}\bm X_0=\bm\Sigma_M\bm V_M^{\mathrm{H}}$, the Gramians
\eqref{eq:resdmd_gramians_standard} become
\begin{equation}
\label{eq:resdmd_specialized_gramians}
\begin{aligned}
\bm G_{XX} &= \bm\Sigma_M^{\,2}, \\
\bm G_{XY} &= \bm\Sigma_M\bm V_M^{\mathrm{H}}\bm X_1^{\mathrm{H}}\bm U_M, \\
\bm G_{YY} &= \bm U_M^{\mathrm{H}}\bm X_1\bm X_1^{\mathrm{H}}\bm U_M,
\end{aligned}
\end{equation}
and the squared residual in \eqref{eq:resdmd_residual_in_dcdmd} is the Gramian-quadratic
form
\begin{equation}
\label{eq:resdmd_gramian_form}
\mathrm{res}^{2}(\widehat\lambda,\bm q)
\;=\;
\frac{\bm q^{\mathrm{H}}\,\bm R(\textcolor{blue}{\widehat\lambda^{*}})\,\bm q}{\bm q^{\mathrm{H}}\bm G_{XX}\bm q}\,,
\end{equation}
where 
\begin{equation} 
\bm R(\lambda)\;\coloneqq\;\bm G_{YY}-\lambda\,\bm G_{XY}^{\mathrm{H}}-\lambda^{*}\,\bm G_{XY}+|\lambda|^2\bm G_{XX}\,.
\end{equation}

\begin{algorithm}
  \centering
  \setlength{\fboxsep}{6pt}
  \setlength{\fboxrule}{0.4pt}
  \fbox{%
    \begin{minipage}{0.95\linewidth}
      \begin{algorithmic}[1]
        \Require Delay snapshots $\bm X_0,\bm X_1$; truncation rank $M$; small $\varepsilon>0$
        \Ensure Per-mode scores $\{s_j\}_{j=1}^M$ (smaller is more likely true)
        \State $(\bm U_M,\bm\Sigma_M,\bm V_M) \gets$ rank-$M$ \ac{SVD} of $\bm X_0$
        \State $\bm A_M \gets \bm U_M^{\mathrm{H}}\bm X_1\bm V_M\bm\Sigma_M^{-1}$
        \State $\{(\widehat\lambda_j,\bm q_j)\}_{j=1}^M \gets$ \textcolor{blue}{left eigenpairs of $\bm A_M$ (equivalently, eigenvectors of $\bm A_M^{\mathrm{H}}$ with eigenvalues $\widehat\lambda_j^{*}$)}
        \State form $\bm G_{XX},\bm G_{XY},\bm G_{YY}$ as in \eqref{eq:resdmd_specialized_gramians}
        \For{$j=1,\ldots,M$}
          \State $\mathcal R_j \gets \dfrac{\bm q_j^{\mathrm{H}}\,\bm R(\textcolor{blue}{\widehat\lambda_j^{*}})\,\bm q_j}{\bm q_j^{\mathrm{H}}\bm G_{XX}\bm q_j}$
          \State $s_j \gets \log(\mathcal R_j+\varepsilon)$
        \EndFor
      \end{algorithmic}
    \end{minipage}}
  \caption{ResDMD baseline score on \ac{DCDMD} eigenpairs, with the Hermitian
  residual matrix $\bm R(\lambda)$ as in \eqref{eq:resdmd_gramian_form}.}
  \label{alg:resdmd_baseline}
\end{algorithm}

Each candidate shares its eigenvalue \textcolor{blue}{(up to complex conjugation)} with a \ac{DMD} mode, so its score $s_j$
is assigned to that mode; conjugate pairs are scored separately.
A candidate's residual can be measured either on the same snapshots used to
build the operator $\bm A_M$, or on a separate, held-out batch of snapshots
independent of that fit (which keeps the residual from being optimistically
small). The instantiation of ResDMD in \ac{DCDMD} setting follows the former, where the same pair $(\bm X_0,\bm X_1)$ is used throughout.

\subsection{Projection form}
\label{app:resdmd_projection_form}

To compare the ESR and ResDMD, Sec.~\ref{subsec:resdmd_esr_companion} expresses the squared norm of the ResDMD residual numerator as a projection. The derivation of this projection form is as follows. The SVD truncation yields
\[
\bm X_0^{\mathrm{H}}\bm U_M
=\bm V\bm\Sigma\bm U^{\mathrm{H}}\bm U_M
=\bm V_M\bm\Sigma_M ,
\]
and the eigenrelation
\[
\textcolor{blue}{\widehat\lambda^{*}}\,\bm q=\bm A_M^{\mathrm{H}}\bm q
=\bm\Sigma_M^{-1}\bm V_M^{\mathrm{H}}\bm X_1^{\mathrm{H}}\bm U_M\bm q
\]
can be rearranged as
$\bm\Sigma_M\textcolor{blue}{\widehat\lambda^{*}}\,\bm q=\bm V_M^{\mathrm{H}}\bm X_1^{\mathrm{H}}\bm U_M\bm q$.
The numerator is therefore
\begin{equation}
\label{eq:resdmd_projection_form}
\begin{aligned}
\bm X_1^{\mathrm H}\bm U_M\bm q
-
\textcolor{blue}{\widehat\lambda^{*}}\,\bm X_0^{\mathrm H}\bm U_M\bm q
&=
\bm X_1^{\mathrm H}\bm U_M\bm q
-
\textcolor{blue}{\widehat\lambda^{*}}\,\bm V_M\bm\Sigma_M\bm q
\\
&=
\bm X_1^{\mathrm H}\bm U_M\bm q
-
\bm V_M\bm V_M^{\mathrm H}\bm X_1^{\mathrm H}\bm U_M\bm q
\\
&=
\bigl(\bm I-\bm V_M\bm V_M^{\mathrm H}\bigr)
\bm X_1^{\mathrm H}\bm U_M\bm q .
\end{aligned}
\end{equation}

\endgroup